\definecolor{eclipseBlue}{RGB}{42,0.0,255}
\definecolor{eclipseGreen}{RGB}{63,127,95}
\DeclareRobustCommand{\stirling}{\genfrac\{\}{0pt}{}}
\newcommand{\econd}[3]{\ensuremath{\mathtt{if}\;#1\;\mathtt{then}\;#2\;\mathtt{else}\;#3}}
\newcommand{\etrue}{\ensuremath{\mathtt{true}}}
\newcommand{\efalse}{\ensuremath{\mathtt{false}}}
\newcommand{\elet}[3]{\ensuremath{\mathtt{let}\;#1\;=\;#2\;\mathtt{in}\;#3}}
\newcommand{\efun}[3]{\ensuremath{\mathtt{fun}\;#1\;#2\;=\;#3}}
\newcommand{\eapp}[2]{\ensuremath{#1\;#2}}
\newcommand{\enil}{\ensuremath{\mathtt{[]}}}
\newcommand{\econs}[2]{\ensuremath{#1\,\mathtt{::}\,#2}}
\newcommand{\ecasel}[5]{\ensuremath{\mathtt{case}\;#1\;\mathtt{of}\;
\enil \rightarrow #2
\mid \econs{#3}{#4} \rightarrow #5}}
\newcommand{\evar}[1]{\ensuremath{\mathtt{#1}}}
\newcommand{\vclosure}[4]{\ensuremath{\mathtt{C}(#1;\;#2,\,#3.\,#4)}}
\newcommand{\tbool}{\ensuremath{\mathbb{B}}}
\newcommand{\tlistshow}[2]{\ensuremath{L^{#1}(#2)}}
\newcommand{\tlist}[1]{\ensuremath{L(#1)}}
\newcommand{\tprod}[2]{\ensuremath{#1 \otimes #2}}
\newcommand{\tfunshow}[2]{\ensuremath{#1 \rightarrow #2}}
\newcommand{\tfun}[3]{\ensuremath{#1 \stackrel{#3}{\rightarrow} #2}}
\newcommand{\potp}[2]{\ensuremath{\langle #1;\,#2\rangle}}
\newcommand{\inds}[1]{\ensuremath{\mathit{Ind}(#1)}}
\newcommand{\ideg}[1]{\ensuremath{\mathtt{d_{#1}}}}
\newcommand{\ibase}[1]{\ensuremath{\mathtt{b_{#1}}}}
\newcommand{\iarg}{\ensuremath{\mathtt{a}}}
\newcommand{\iret}{\ensuremath{\mathtt{r}}}
\newcommand{\iconst}{\ensuremath{\mathtt{c}}}
\newcommand{\idot}{\ensuremath{.}}
\newcommand{\wf}[2]{\ensuremath{#1 : #2}}
\newcommand{\pot}[3]{\Phi (#1 : #2 \mid #3 )}
\newcommand{\dom}[1]{\ensuremath{\texttt{dom}(#1)}}
\newcommand{\ejudgenp}[3]{#1 \vdash #2 \Downarrow #3}
\newcommand{\mi}{I}
\newcommand{\mshift}[2]{\lhd^{#1}_{#2}}
\newcommand{\munshift}[2]{\rhd^{#1}_{#2}}
\newcommand{\mmove}[2]{\mathit{Move}^{#1}_{#2}}
\newcommand{\mnil}[1]{\mathit{Nil}_{#1}}
\newcommand{\mproj}[1]{\pi_{#1}}
\newcommand{\mzero}[1]{\mproj{\neg #1}}
\newcommand{\tjudgecf}[5]{#1 \vdash_{\mathit{cf}} #2 : #3 \rightsquigarrow #4 \mid #5}
\newcommand{\subsubsubsection}[1]{\smallskip\noindent{\bf #1.}}
\begin{document}

\title{Efficient Cost Bounds with Linear Maps}

\author{David M Kahn}
\email{davidkah@cs.cmu.edu}
\affiliation{%
  \institution{Carnegie Mellon University}
  \streetaddress{5000 Forbes Avenue}
  \city{Pittsburgh}
  \state{Pennsylvania}
  \country{USA}
  \postcode{15213}
}

\author{Jan Hoffmann}
\affiliation{%
  \institution{Carnegie Mellon University}
  \streetaddress{5000 Forbes Avenue}
  \city{Pittsburgh}
  \state{Pennsylvania}
  \country{USA}
  \postcode{15213}
}
\email{janh@cs.cmu.edu}

\author{Thomas Reps}
\affiliation{%
  \institution{University of Wisconsin}
  \city{Madison}
  \state{Wisconsin}
  \country{USA}
}
\email{reps@cs.wisc.edu}

\author{Jessie Grosen}
\affiliation{%
  \institution{Carnegie Mellon University}
  \streetaddress{5000 Forbes Avenue}
  \city{Pittsburgh}
  \state{Pennsylvania}
  \country{USA}
  \postcode{15213}
}
\email{jgrosen@cs.cmu.edu}

\renewcommand{\shortauthors}{Kahn et al.}

\begin{abstract}
  The Automatic Amortized Resource Analysis (AARA) 
  derives program-execution cost bounds using types.
  To do so, AARA often makes use of 
  \textit{cost-free} types, which are critical for the composition 
  of types and cost bounds.
  However, inferring cost-free types using the
  current
  state-of-the-art algorithm is expensive due to recursive dependence on additional cost-free types. 
  Furthermore, that 
  algorithm uses a heuristic only
  applicable to polynomial cost bounds, and not, e.g., 
  exponential bounds.
  
  This paper
  presents a new approach to these problems 
  by representing the 
  cost-free types of a function in 
  a new way: with a
  linear map, which can stand for infinitely many cost-free types.
  Such maps enable an algebraic flavor of reasoning about cost bounds
  (including non-polynomial bounds)
  via
  matrix inequalities. 
  These inequalities can be solved with off-the-shelf linear-programming tools for many programs, so that types can always
  be efficiently checked and often be efficiently inferred.
  An experimental evaluation with a prototype
  implementation shows
  that---when it is applicable---the
  inference of linear maps is exponentially more efficient than the state-of-the-art algorithm.
\end{abstract}

\begin{CCSXML}
<ccs2012>
<concept>
<concept_id>10003752.10010124.10010138.10010143</concept_id>
<concept_desc>Theory of computation~Program analysis</concept_desc>
<concept_significance>500</concept_significance>
</concept>
<concept>
<concept_id>10011007.10010940.10010992.10010998.10011000</concept_id>
<concept_desc>Software and its engineering~Automated static analysis</concept_desc>
<concept_significance>500</concept_significance>
</concept>
<concept>
<concept_id>10011007.10011006.10011008.10011009.10011012</concept_id>
<concept_desc>Software and its engineering~Functional languages</concept_desc>
<concept_significance>500</concept_significance>
</concept>
<concept>
<concept_id>10003752.10003809.10003716.10011138.10010041</concept_id>
<concept_desc>Theory of computation~Linear programming</concept_desc>
<concept_significance>100</concept_significance>
</concept>
<concept>
<concept_id>10011007.10010940.10011003.10011002</concept_id>
<concept_desc>Software and its engineering~Software performance</concept_desc>
<concept_significance>500</concept_significance>
</concept>
<concept>
<concept_id>10003752.10010124.10010131.10010132</concept_id>
<concept_desc>Theory of computation~Algebraic semantics</concept_desc>
<concept_significance>100</concept_significance>
</concept>
</ccs2012>
\end{CCSXML}

\ccsdesc[500]{Theory of computation~Program analysis}
\ccsdesc[500]{Software and its engineering~Automated static analysis}
\ccsdesc[500]{Software and its engineering~Functional languages}
\ccsdesc[100]{Theory of computation~Linear programming}
\ccsdesc[500]{Software and its engineering~Software performance}
\ccsdesc[100]{Theory of computation~Algebraic semantics}
\keywords{cost analysis, matrix, resource analysis, inference}

\received{20 February 2007}
\received[revised]{12 March 2009}
\received[accepted]{5 June 2009}

\maketitle

\section{Introduction}

There is an inherent tradeoff between the expressivity and 
level of automation present in the cost-analysis tools of today. 
Tools that drop automation in favor of human guidance
\cite{amadio2014certified,gueneau2018fistful,niu2022cost}
can support a wide variety of cost bounds
for just about any sort of code pattern. On the flip side,
those tools that can operate
(semi-)automatically \cite{albert2008automatic,albert2012cost,gulwani2009speed,klemen2023solving,rebon2002cost} must make concessions between the
kinds of bounds and code patterns they can support and the
efficiency with which they can do so.

The subject of this paper, Automatic Amortized Resource 
Analysis (AARA) \cite{hofmann2003static}, belongs to the automatic camp.
AARA is a type system for functional programs
that uses its types 
to mediate an application of 
the physicist's method of amortized cost 
analysis \cite{kn:Tarjan85}.
This approach posits
that data structures---such as lists and trees---carry 
potential, and this potential is used to pay for costs.
The types of functions $\tau \rightarrow \sigma$ 
then describe their execution 
costs through the change in potential between 
$\tau$ and $\sigma$. The amount of potential 
on a data structure is given by a type annotation, and
typing a program with AARA induces 
linear constraints between these annotations, 
which reduces the problem 
of finding a cost bound to linear programming.
While the initial incarnation of this approach only found
linear bounds on heap space
\cite{hofmann2003static}, 
later extensions use the same basic approach to automatically 
infer polynomial \cite{hoffmannpolynomial,HoffmannAH12},
exponential \cite{kahn2020exponential},
or logarithmic bounds \cite{hofmann2022type} 
for
user-defined cost models, such as time, space, or energy.
Further work has also adapted AARA to 
work on probabilistic \cite{wang2020raising},
higher-order \cite{Jost10},
and parallel programs \cite{hoffmann2015automatic} and to 
domains including smart contracts \cite{das2018work} and 
CUDA kernels \cite{muller2021modeling}.

All of AARA's automation comes with tradeoffs.
Our work aims to improve one of the tradeoffs AARA makes 
between expressivity and efficiency.
To support
the inference of nonlinear cost bounds 
for non-tail recursive functions,
previous work on AARA
\cite{hoffmann2010amortized} 
introduced the special \textit{cost-free} type.
Cost-free types are just like normal (costful) AARA types,
except that their potential is \textit{never} used to pay for costs;
instead the potential just moves around between data structures.
The role of cost-free types is to express how \emph{excess} potential is
allocated, which is critical for supporting nonlinear cost bounds 
for non-tail-recursive functions, as well as 
generally enabling useful compositional cost bounds.
However, in exchange for this extra expressivity, working with AARA types becomes more difficult.
In particular, the state-of-the-art algorithm for inferring
cost-free types \cite{hoffmann2010amortized,HoffmannAH12} has
two main problems: it scales exponentially poorly,
and it does not extend to non-polynomial cost bounds.
This paper aims to alleviate both problems via a new
method of handling cost-free types.

To clarify where our work impacts the AARA analysis,
we now explain (one way) 
how cost-free types fit into AARA. 
Simplified AARA type inference follows the solid arrows
in \Cref{fig:diagram}: take the code of a function;
infer the function's \textit{costful} typing;
and then interpret this type with physicist's method of reasoning to obtain cost bounds on running the function. 
While this simplified picture is basically correct,
it omits a major component contributing to the quality of the 
resulting cost bounds: \textit{cost-free types} \cite{hoffmann2010amortized}. 
Cost-free types describe how extra potential should be conserved,
which can be applied to keep cost bounds tight.
A good cost\textit{ful} type for a function can depend on a
cost-\textit{free} type for that same function. Further,
a good cost-free type can depend recursively on other
cost-free types in an infinite regress known as 
\textit{resource-polymorphic recursion} (\Cref{sec:rpr}). 
Our work aims at replacing
this expensive recursive dependence with 
a direct and efficient solution.

\begin{figure}
\scalebox{0.9}{
\begin{tikzpicture}[>=Stealth, node distance=2.5cm, font=\sffamily]
    \node (source) [draw, rounded rectangle] {function code};
    \node (costfree) [draw, rounded rectangle, above=1cm, right=0.5cm] {cost-free typing};
    \node (costful) [draw, rounded rectangle, right=2cm] {costful typing};
    \node (costbounds) [draw, rounded rectangle, right=5cm] {cost bounds};

    \draw[->,line width = 1pt] (source) -- (costful);
    \draw[->,line width = 1pt] (costful) -- (costbounds);
    \draw[dotted,->,line width = 1pt] (source) -- (costfree);
    \draw[dotted,->,line width = 1pt] (costfree) -- (costful);
    \draw[dotted,->,line width = 1pt] (costfree) to [out=45,in=135,looseness=4](costfree);
\end{tikzpicture}
}
\caption{Diagram of how cost-free typing fits into AARA}
\label{fig:diagram}
\end{figure}
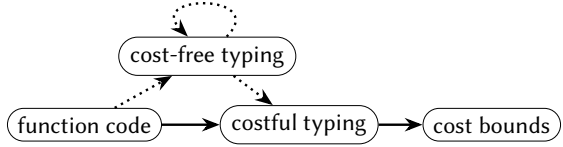

The current cost-free 
inference algorithm \cite{hoffmann2010amortized,HoffmannAH12} is essentially brute-force iteration: 
a function gets retyped at each application
site to specialize the function's type to its argument's 
potential.\footnote{
Or symmetrically, specialize to the function's return.
}
This approach only avoids infinite regress by
introducing an extra assumption
about the problem structure that guarantees type checking terminates in a finite number of steps.
However, this assumption is based on the
discrete differences of polynomials,
and thus fundamentally is only successful with polynomial potential.
Even then, the assumption does not always hold 
for polynomial potential (see \Cref{sec:old}).
Furthermore, this iterative approach can be quite slow.
Inferring polynomial cost bounds of degree $d$ for a function $f$ that recursively calls itself at $b$ places in its body results in retyping $f$ almost $b^d$ times.
This process is then repeated to obtain a new typing for $f$ every time $f$ is called outside its body.
Moreover, this process combinatorially explodes if $f$ calls 
some other function $g$, because $g$ might itself need to be 
retyped exponentially many times each time $f$ is retyped.
We analyze these blowups more formally in \Cref{sec:exp}. 
Even though linear constraint solving 
can run in polynomial time, running 
this iterative cost-free
typing algorithm can make AARA exponentially slow
in the size of the source code because
the number of generated constraints can grow
exponentially.

In contrast, our new approach handles cost-free types using a
\emph{functional transformation}---specifically a linear map---from 
the potential on the argument type $\tau$ to that of the return type $\sigma$. Such a function can be found by solving matrix 
inequations, which can be efficiently reduced to linear programming
for many common code patterns, such as functions with linear recursion
schemes.
This approach sidesteps resource-polymorphic 
recursion by uniformly covering every level of the infinite regress
with the same functional transformation. Unlike
the existing approach,
our new
approach also works on non-polynomial cost bounds, such as exponential bounds.

Each approach has its pros and cons:
our new approach is far more efficient, and is able to 
handle some nonpolynomial cost bounds;
however, the older approach can handle more common program patterns;
and each can infer cost-free types that the other cannot.
We discuss these issues in more detail
in \Cref{sec:exp,sec:further,sec:lim}.
We therefore suggest that both approaches be used in conjunction with each other to efficiently obtain cost-free types where possible, while
still maintaining high program coverage. Not only can our new functional
approach handle many programs directly, but it also can cut short
the expensive cascade of retypings performed
by the older iterative approach.

The contributions of our work are as follows: 
\begin{itemize}
  \item
    We present a declarative type system for cost-free typing using linear-map-based types, along with a proof of soundness.
    This system is the first that
    can give finite derivations of non-polynomial cost-free types in the presence of resource-polymorphic recursion.

  \item We show not only that type \textit{checking} in our 
    system can be automated using only simple 
  matrix operations, but also that
  type \textit{inference} can be automated by
    linear programming under common conditions.
    A sufficient condition is, for example, that the analyzed function 
    is linearly recursive 
    (makes at most one recursive call
    each time it executes its body). 
  
  \item We implemented a prototype of the
  inference algorithm
  and compared the prototype's performance 
  to the existing cost-free algorithm.
  We found that our new linear-map approach scales 
  exponentially better.
  For some programs our prototype takes a fraction
  of a second, while the old approach takes multiple days.

\end{itemize}

Of further note is that the impact of this 
work is not confined
merely to resource-polymorphic recursion.
Cost-free types also indirectly 
provide information about \emph{changes} in the sizes
of data structures: roughly,
if potential density doubles 
but the total potential remains unchanged,
then the size of the container of the 
potential must be halved.
This indirect way of gleaning information 
is a key part of the type systems of various
extensions of AARA, including that for 
parallel programs \cite{hoffmann2015automatic} 
and logarithmic cost bounds \cite{hofmann2022type}.

\subsubsubsection{Organization}
The remainder of the paper is organized as follows:
\Cref{sec:overview} presents an overview of
the AARA type system, the technical problem 
of \textit{resource-polymorphic recursion}, and
solution approaches using \textit{cost-free} types. 
\Cref{sec:ty} presents all the pieces of 
our cost-free system. \Cref{sec:sound}
states our soundness theorem. \Cref{sec:aut} describes
our approach to automation.
\Cref{sec:exp,sec:further} 
provide experimental and theoretical comparisons, respectively, between our new technique and the existing solution.
\Cref{sec:lim} discusses the limitations of our work
and obstacles to future extensions.
\Cref{sec:relatedwork} describes related work.
\Cref{sec:conc} concludes.

\section{Overview}
\label{sec:overview}

In this section, we give an overview of Automatic Amortized Resource Analysis, cost-free typing, and 
resource-polymorphic recursion. We explain both
the existing algorithm for cost-free typing, 
and give the intuition of our new algorithm.

\subsection{AARA}
\label{sec:aara}

The Automatic Amortized Resource Analysis (AARA) type system 
performs the physicist's method of amortized analysis by 
using types to assign potential. 
By letting the potential assigned to a
list be some polynomial function
of the list's length, AARA can infer polynomial cost bounds. 
The power of AARA comes 
from reducing the question of 
choosing a suitable potential function
to a linear programming problem.
A similar idea works for exponential \cite{kahn2020exponential}
and logarithmic bounds \cite{hofmann2022type}.

One simple kind of potential function is the linear function,
which represents some fixed amount of potential
per list element. Such a potential function can be 
encoded by augmenting each
list type with a number, like ``$5$'' in $\tlistshow 
{5} \tau$. This annotation indicates that the list 
holds potential equal to 5 times its length. We may 
also pair such types with some amount of constant potential 
like $\potp {\tlistshow {5} \tau} 3$, which
indicates
that there are
3 additional units of free potential that can pay for costs.

\begin{figure}
\begin{lstlisting}
    fun id lst = case lst of    (* free : 0; lst : 1 per elt *)
        | [] -> []              (* free : 0; [] : 0 per elt *)
        | x::xs ->              (* free : 1; xs : 1 per elt *)
            let () = tick 1 in  (* free : 0; xs : 1 per elt *)
            let xs' = id xs in  (* free : 0; xs' : 0 per elt *)
            x::xs'              (* free : 0; ret : 0 per elt *)
\end{lstlisting}
\caption{Code and accounting for id}
\label{fig:idcode}
\end{figure}

To see (a simple case of) AARA in action, consider 
the code in \Cref{fig:idcode} for 
\texttt{id}:$\tlist{\alpha} \rightarrow \tlist{\alpha}$, which is
the identity for lists. The number of recursive calls this code 
makes is equal to the length of the input list.
The ``tick'' expression
assigns a cost of 1 before the recursive call to
represent
this cost model.
Informally, we have used comments in \Cref{fig:idcode} to track how AARA assigns potential at each line of the code to cover this cost.
In particular,
just before line 4,
(i) we have 1 free unit of potential to pay for the recursive call, and (ii) \texttt{xs} is typed with the proper amount of potential to be an argument
of the recursive call.

More formally, AARA types $\texttt{id}$ as
$\potp {\tlistshow 1 \alpha} {0} \rightarrow \potp {\tlistshow 0 \alpha} {0}$,
which is annotated, not to carry potential, but rather to 
describe how potential is transformed.
Here the argument type $\potp {\tlistshow 1 \alpha} {0}$ represents
a list carrying 1 unit of potential per element and 0 free potential, and the return type $\potp {\tlistshow 0 \alpha} {0}$ represents returning a list with no potential.
If we now take the difference in potential between the input 
and output, the physicist's method 
gives us a bound on the net cost.
In this case, that difference is equal to the length of the 
input list, which is the exact cost.

However, this type is not the only type AARA could assign to 
\texttt{id}. AARA could also type \texttt{id} with any type 
that takes at least as much potential in, and pays at least as much in total.
Specifically, AARA could assign
any type of the form $\potp {\tlistshow a \alpha} {b} \rightarrow \potp {\tlistshow c \alpha} d$ where $a-1 \geq c \geq 0$ and $b \geq d \geq 0$. 
All of these types provide 
valid upper bounds on the net cost, but they differ in that they allow for 
different amounts of potential to be initially present. AARA infers
whichever type is best by using a linear program, allowing 
the type system to specialize its function types to 
the amount of potential present at call sites.
This approach yields tight composition of cost bounds---at the cost of 
retyping the function for each call site. This retyping is the inefficiency that our work tackles.

More expressive cost bounds, such as polynomials, can be represented by annotating list types with vectors\footnote{ 
    In other AARA literature, these annotation vectors are usually written in the reverse order from our choice of order here.
} rather than single numbers \cite{hoffmannpolynomial}. Such a vector 
provides (non-negative) coefficients for a linear combination of
basis functions making up the desired potential function. For example,
the type $\tlistshow {3,1,4} \alpha$
represents the following
polynomial function of the list length $n$: 
$3 \binom n 3 + \binom n 2 + 4 \binom n 1$,
where $\binom n k$ is the binomial coefficient,
which counts
the number of 
ways to select $k$ elements out of $n$.
In general, $\tlistshow {\vec a} \alpha$ would represent 
the potential function
$\sum_i \vec a_i \binom n {k-i}$ where $k=|\vec a|$.  
Note that we use binomial coefficients like $\binom n k$ (which
is a polynomial of degree $k$) rather 
than standard polynomial basis elements like $n^k$.
We will often refer to the binomial coefficient potential function 
$\binom n k$ as degree $k$ potential, so that, e.g., $\binom n 2$ 
is quadratic potential.
The binomial-coefficient representation
provides a number of conveniences, such as
allowing the use of Pascal's identity 
$\binom {n+1} {k+1} = \binom {n} {k+1} + \binom n k$ to 
relate the annotations on a list like \texttt{x::xs} of
length $n+1$ and its tail \texttt{xs} of length $n$. For example,
applying Pascal's identity can show that potential of  
$ \mathtt{x::xs} : \potp {\tlistshow {3,1,4} \alpha} 0$
is equal to that of 
$ \mathtt{xs} : \potp {\tlistshow {3,4,5} \alpha} 4$.
Furthermore, Pascal's identity is a linear recurrence, which allows 
polynomial AARA type
inference to reduce to linear programming.

A different interpretation of annotation vectors can be used to
represent exponential cost bounds in a very similar 
way \cite{kahn2020exponential}. 
Instead of using binomial coefficients as a basis, one can
use Stirling numbers of the second kind 
$\stirling {n+1} {k+1} \in \Theta ((k+1)^n)$, which
count the number of ways to partition a subset of $n$
elements into $k$ partitions. 
We will often refer to the Stirling-number potential function 
$\stirling {n+1} b$ as base $b$ potential.
Thus, in this setting,
$\tlistshow 3 \alpha$ represents a list with 3 units of base 
2 potential,\footnote{
Note that base 1 potential is skipped because
base 1 is the constant function
already represented by $\iconst$.}
i.e.,
$3\stirling {n+1} 2$ where $n$ is the length of the list.
The general case in this setting is that
$\tlistshow {\vec a} \alpha$ represents
the exponential potential function
$\sum_i \vec a_i \stirling {n+1} {k+1-i}$ where $k=|\vec a|$.
Instead of Pascal's identity,
Stirling numbers admit the similar linear recurrence identity $\stirling {n+1} {k+1} = (k+1)\stirling n {k+1} + \stirling n k$. 

Despite the similarity between the polynomial and exponential settings,
previous work on polynomial cost-free typing does not
extend to the exponential setting. 
We show why this difference occurs in \Cref {sec:old}.
In contrast, our new system handles both kinds of potential uniformly:
our system can support either polynomial or exponential cost bounds by swapping a few primitive matrices (\Cref{sec:ty});
our soundness theorem is agnostic to the choice (\Cref{sec:sound});
and our new inference system (\Cref{sec:aut}) is able to infer tight cost bounds for $\texttt{half}$ in either setting.

\subsection{Resource-Polymorphic Recursion}
\label{sec:rpr}

\emph{Resource-polymorphic recursion} refers to situations in which
the type annotations required of a recursively called function 
differ from those used at the function entry.\footnote{
  This could be characterized as a sort of frame problem
  like that solved by separation logic's frame rule \cite{o2001local}.
}
This situation typically occurs when the call is non-tail recursive 
and uses non-linear potential (but may still occur 
even with only linear potential \cite{hoffmann2010amortized}).

Resource-polymorphic recursion is tricky in that,
naively, typing it requires an infinite family of 
types. To see this phenomenon in action, consider
what happens to quadratic potential in
the function \texttt{half} for halving the length of a list
from $n$ to $\lfloor n/2 \rfloor$, given 
in \Cref{fig:half}.\footnote{
Note that \texttt{half} here is \textit{not} a linear-cost 
function because it is not instrumented with \textit{any}
``tick'' expressions, 
let alone those that correspond to the runtime cost model. The function
therefore has no cost and the question
is entirely about how potential should be moved around on data
structures, rather
than about how potential is spent to cover costs.
}

\begin{figure}
\centering
  \begin{lstlisting}[xleftmargin=4.0ex]
    fun half lst = case lst of 
      | [] -> []
      | x1::xs1 -> case xs1 of
        | [] -> []
        | x2::xs2 -> let tmp = half xs2 in x1::tmp
  \end{lstlisting}
  \caption{Code for \texttt{half}}
  \label{fig:half}
\end{figure}

We can start by reasoning semantically about 
how we expect the potential should be transformed by the 
function \texttt{half}. 
Initially, let us type the function \texttt{half} to take in 
a list $\mathtt{lst}:\tlistshow {1,0} \alpha$ with 1
unit of quadratic potential, 
i.e., $\binom {n} 2$
total potential. The code contains no ``tick'' expressions,
so no potential is ever consumed to pay for costs.
If no potential is lost, then a good AARA type should attempt to
reallocate all the potential from the input to
the output, and should thus
re-express the amount in terms of $\lfloor n/2 \rfloor$ 
because the output list is half the length of the input. 
This goal is accomplished best 
when the output list is given the 
type $\tlistshow {4,1} \alpha$, having $4$ units of quadratic 
potential and $1$ unit of linear potential, because $\binom {n} 2 \geq 4\binom {\lfloor n/2 \rfloor} 2 + \binom {\lfloor n/2 \rfloor} 1$, 
and this inequality is actually tight for even $n$.
So we would like to show that $\mathtt{half}:\tfunshow {\potp {\tlistshow {1,0} \alpha} 0}
{\potp {\tlistshow {4,1} \alpha} 0}$.

Now let us compare this result to how potential is actually
transformed by AARA's typing rules in lines 3 to 5 of \Cref{fig:half}:

\begin{itemize}
    \item From line 3 to the start of line 5, the input list \texttt{lst} gets 
    broken up into two head elements (\texttt{x1} and \texttt{x2})
    and a tail list \texttt{xs2}.
    To transfer all potential to the tail,
    AARA makes use of Pascal's identity twice:
    $\binom {m+2} {k+2} = \binom {m+1} {k+2} + \binom {m+1} {k+1}
    = \binom {m} {k+2} + 2\binom {m} {k+1} + \binom m k$.
    Reallocating potential according to this rule
    turns 1 unit of quadratic potential on \texttt{lst}
    into 1 unit of quadratic potential, 2 units of linear potential,
    and 1 unit of constant potential on \texttt{xs2}, because 
    \texttt{xs2} has
    two
    fewer elements than \texttt{lst}. 
    This
    allocation
    results in the type of \texttt{xs2} being 
    $\tlistshow {1,2} \alpha$ and 1 unit of free potential being leftover.
    This result makes sense because
    $\binom {m} 2 + 2\binom {m} 1 + 1 = \binom {m+2} 2$.

    \item The next part of line 5 performs a recursive call 
    on \texttt{xs2} and binds it to \texttt{tmp}. However, the function type assigned to 
    \texttt{half} by our above semantic reasoning has
    an argument type of $\tlistshow {1,0} \alpha$, 
    while $\mathtt{xs2}:\tlistshow {1,2} \alpha$. 
    AARA needs a different type for 
    \texttt{half} to justify the current one!
    It turns out that the required new typing for \texttt{half} 
    is $\tfunshow {\potp {\tlistshow {1,2} \alpha} 1}
    {\potp {\tlistshow {4,5} \alpha} 1}$.
     We can see that this is semantically valid by noting that
    $\binom {2m} 2 + 2\binom {2m} 1 +1 = 4\binom {m} 2 + 5\binom {m} 1 + 1$.

    \item Finally, the rest of line 5
    adds \texttt{x1} onto the front of 
    the list.
    In terms of potential, this action corresponds to 
    inverting Pascal's identity. This process
    leaves the output list typed $\tlistshow {4,1} \alpha$
    as desired, because 
    $4\binom {m} 2 + 5\binom {m} 1 + 1
    =4\binom {m+1} 2 + 1\binom {m+1} 1$.
    
\end{itemize}

This typing therefore works out consistently with our 
semantic reasoning, at least as long as we have that additional 
typing for \texttt{half} at the recursive call. However, it is 
justifying that additional typing $\mathtt{half}:\tfunshow {\potp {\tlistshow {1,2} \alpha} 1} {\potp {\tlistshow {4,5} \alpha} 1}$ that is the problem.
If we naively try to justify this 
new type syntactically with AARA's typing rules, 
we would find that we would need another 
new type {$\mathtt{half}:\tfunshow {\potp {\tlistshow {1,4} \alpha} 6}{\potp {\tlistshow {4,9} \alpha} 6} $ at the point of the recursive call. Justifying this type 
would then need yet another new type $\mathtt{half}:\tfunshow {\potp {\tlistshow {1,6} \alpha} {15}}  {\potp {\tlistshow {4,13} \alpha} {15}}$,
which would need still another type $\mathtt{half}:\tfunshow  {\potp {\tlistshow {1,8} \alpha} {28}} {\potp {\tlistshow {4,17} \alpha} {28}}$,}
and so on leading to infinite regress. 

Resource-polymorphic
recursion arises the same way when using exponential potential instead of
polynomial. In the exponential setting, one can
use the identity 
$\stirling {2n+1} 2 = 6\stirling {n+1} 4 + 6 \stirling {n+1} 3 + 3 \stirling {n+1} 2$ to guess that \texttt{half} can be
typed as $\tfunshow {\potp {\tlistshow {0,0,1} \alpha} {0}}  {\potp {\tlistshow {6,6,3} \alpha} {0}}$. However,
just like in the polynomial case, a different type for 
\texttt{half} is needed to
justify the typing of the recursive call. Specifically, that type is
$\tfunshow {\potp {\tlistshow {0,0,4} \alpha} {3}} {\potp {\tlistshow {24,24,12} \alpha} {3}}$, which in turn requires the type 
$\tfunshow {\potp {\tlistshow {0,0,16} \alpha} {15}} {\potp {\tlistshow {96,96,48} \alpha} {15}}$ to be justified, and so on.

\subsection{Cost-Free Types}
\label{sec:costfree}

The key idea used to solve resource-polymorphic recursion
(and other problems) is \textit{cost-free types} \cite{hoffmann2010amortized}.
Our work is about a new way to
express cost-free types with linear maps.
Here we first explain what cost-free types are, and 
then show how cost-free types provide a means of addressing 
resource-polymorphic recursion.

There is only one difference between a 
cost-free and costful (normal) AARA
type for a function: the cost model considered for the code
being typed. For costful types, the cost-assigning
``tick'' expressions dictate the cost model, while for cost-free types 
``tick'' expressions are ignored. 
Without ``tick'' expressions
consuming any potential to pay for costs, the 
resulting cost-free type only
describes how AARA can
\emph{reallocate}
potential from input to output data structures.
Thus, cost-free types serve
much the same role as, e.g., homogeneous solutions to differential equations.
Just as any of the solutions of the homogeneous equation can 
be added to the solution of the inhomogeneous equation to 
obtain another solution to the inhomogeneous equation, 
any cost-free type can be added\footnote{
  It is critical
  that cost-free types are \textit{added} and not subtracted.
  Among other reasons, repeatedly subtracting a cost-free type with positive net cost can unsoundly result in an arbitrarily low, even negative, net cost bound.
} 
to a costful 
type to obtain another valid costful type. 
Formally, if the function $f$ can
be given the costful annotation $\vec a$ and
the cost-free $\vec b$, then $\vec a + k\cdot \vec b$ is also a valid costful type for $k\geq 0$.

Cost-free types avoid
the loss of potential by reallocating it instead.
Preventing such loss is critical to maintaining tight cost bounds and good
compositionality.
For example, recall the function \texttt{id} 
from \Cref{fig:idcode}. In \Cref{sec:aara}, \texttt{id} 
had the costful type 
$\tfunshow {\potp {\tlistshow 1 \alpha} {0}}
{\potp {\tlistshow 0 \alpha} {0}}$ using 
linear potential, but \texttt{id} also could be given
other costful types. Having multiple such types is necessary if
one wants to find a good type for a composition like \texttt{id} $\circ$ 
\texttt{id}
$: \tfunshow {\potp {\tlistshow 2 \alpha} {0}} {\potp {\tlistshow 0 \alpha} {0}}$, which requires the additional costful typing \texttt{id}
$: \tfunshow {\potp {\tlistshow 2 \alpha} {0}} {\potp {\tlistshow 1 \alpha} 0}$ for the righthand instance of \texttt{id}. This new
costful type can be obtained from the first by
adding the cost-free
type $\tfunshow {\potp {\tlistshow 1 \alpha} {0}} {\potp {\tlistshow 1 \alpha} 0}$, indicating extra input potential can be reallocated to
the output at a 1:1 rate.

Resource-polymorphic recursion is just a pathological
example of having excess input potential
because each recursive call is applied to an input typed with 
larger (excess) annotations.
Thus, cost-free
types provide a way of approaching that
infinite regress. 
To exemplify how cost-free types
arise, consider \texttt{half} with 
polynomial potential from \Cref{sec:rpr}.
The main issue is at the recursive call, where we want the type $\tfunshow {\potp {\tlistshow {1,2} \alpha} 1} {\potp {\tlistshow {4,5} \alpha} 1}$, but our assumed type for \texttt{half} is $\tfunshow {\potp {\tlistshow {1,0} \alpha} 0} {\potp {\tlistshow {4,1} \alpha} 0}$.
If the ``difference'' between these types is a cost-free
type for \texttt{half}, then the overall typing would be justified.
That difference corresponds to the type
$\tfunshow 
{\potp {\tlistshow {0,2} \alpha} 1} {\potp {\tlistshow {0,4} \alpha} 1}$,
which is a semantically valid cost-free type\footnote{
  In fact, this type is also syntactically justifiable via the basic typing rules without resource-polymorphic recursion.
}
because $2\cdot n + 1 \geq 4 \cdot \lfloor \frac n 2 \rfloor + 1$.
Similarly, in the exponential case for \texttt{half}, 
the
difference would be
$\tfunshow 
{\potp {\tlistshow {0,0,3} \alpha} 3} {\potp {\tlistshow {18,18,9} \alpha} 3}$, and this type is a semantically valid cost-free type because 
$3 \stirling {2n+1} 2 + 3 = 18\stirling {n+1} 4 + 18\stirling {n+1} 3 + 9\stirling {n+1} 2 + 3$.

\subsection{The Existing Cost-Free Typing Method}
\label{sec:old}

The existing method of inference with cost-free types stems from 
Hoffmann and Hofmann \cite{hoffmann2010amortized}. 
However, the version given there is not general enough 
to handle many high-degree examples. Improvements were made
in later work \cite{HoffmannAH12}, 
and the full picture was never published. 
Here we
sketch the full picture to highlight the innovations of the new cost-free typing method (\Cref{sec:new}).

The basic idea behind the existing inference method 
is to
use brute force
to infer a new cost-free type for the excess potential
present at each and every function call site. 
Recall that annotations in AARA are initially symbolic during
type inference and are only concretized \textit{after} solving
a linear program. Thus, this process of retyping amounts
to assuming that the annotations at \textit{every} function call
are of the form $\vec a + \vec b$, where $\vec a$ matches the costful
typing of the function argument, and $\vec b$ is some amount of 
excess.
Then the function is retyped
using the cost-free typing rules
to  accept an input annotated $\vec b$. Even if it would happen to 
be the trivial case that the excess $\vec b$ is the zero vector,
AARA would not be able to take advantage of this fact until 
\textit{after} the system of linear inequalities is generated and
solved. Thus, the types of functions with trivial cost behavior 
can be equally expensive to infer as those with nontrivial
resource-polymorphic recursion.

However, brute-force retyping alone is not sufficient
for typing recursive functions. Retyping any function at its
recursive call will cause the process to loop.
To avoid looping indefinitely, the existing inference method 
introduces an additional assumption about the structure of 
potential annotations at recursive calls. Specifically,
it is assumed that the highest-degree annotation 
in the excess vector $\vec b$ is 0. This assumption 
reduces
the degree of potential represented by the annotation
vectors at each iteration until 
the base case of linear excess potential.
Under the given assumption, the linear case's
recursive calls to a function are assumed to precisely match the assumed
type of that function,\footnote{
    That is, the type matches \textit{modulo constant potential}. 
    AARA can already freely pass extra constant potential
    through functions, so constant potential can
    be excluded from the concerns of resource-polymorphic recursion.
}
and therefore exhibit no resource-polymorphic recursion. 
This process mirrors how repeatedly taking the (discrete)
derivative of a \textit{polynomial} eventually yields 
the constant 0 function.

We already saw a resource-polymorphic typing for which this algorithm
works perfectly
in \Cref{sec:costfree}. At \texttt{half}'s recursive call, the needed cost-free type was $\tfunshow {\potp {\tlistshow {0,2} \alpha} 1} {\potp {\tlistshow {0,4} \alpha} 1}$, 
which indeed has a 0 in the first position.
This type
only uses linear potential and does not require any resource-polymorphic recursion to
typecheck successfully.
In this case, the key heuristic used in the existing inference method allows the algorithm to converge to the desired solution.
Using higher-degree potential would require more iterations to converge.

We now give the inference algorithm
more formally. 
Let $D_{\mathit{max}}$ be the greatest degree of polynomial
potential being considered
and $\vec a$ be the annotation of the type 
of some recursive function \texttt{f}. 
The linear program generated by the existing 
inference algorithm
is constructed
as follows:

\begin{enumerate}
    \item If only linear potential is being used in $\vec a$,
    assume 
    that $\vec a$ is also used for \texttt{f}'s recursive
    call.

    \item If the degree of potential being used in $\vec a$
    is greater than one, assume a cost-free type 
    annotated $\vec b$ of degree $D_{\mathit{max}}-1$ is needed at the recursive call. Attempt to type \texttt{f}
    using the cost-free annotation $\vec b$, then use 
    the annotation $\vec a + \vec b$ at \texttt{f}'s recursive
    call.
\end{enumerate}

In step 2, the annotation at the call is divided
into $\vec a + \vec b$ so that the ``extra'' potential $\vec b$ 
is retyped in a cost-free manner. Note how
step 2 will be repeated for each internal recursive call to $f$
and for each degree of polynomial
potential being used to type $f$.
Thus, if $f$ has $b$ recursive calls and potential 
is of degree $d$,
each external call to $f$ results in retyping $f$ approximately 
$b^d$ times. 
Every retyping of $f$ entails 
retyping each helper function called in
$f$, resulting in a combinatorially explosive cascade of retyping.
We revisit this cost profile quantitatively 
in \Cref{sec:exp,sec:further}.

While this algorithm works for \texttt{half} and many
other functions, it is somewhat lucky
that it finds any solutions at all.
This circumstance is because the key assumption that the algorithm
relies on is not guaranteed to be true. Rather, the assumption is just
\textit{often} true when dealing with simple pattern matching
and polynomial potential because Pascal's identity does
not change the coefficient of the highest degree binomial coefficient.
Thus, that coefficient gets cancelled out when expressing
a cost-free type as the difference of types.

Using polynomial potential, a counterexample to the assumption
can be found in \texttt{round} in \Cref{fig:round}.
This function computes a list of length 
equal to the largest number of the form $2^x-1$ 
that is less than or equal to
the input list's length. The code makes use of \texttt{half}
in addition to pattern matching to shrink the size of the
list before the recursive call,
which increases the potential density. As
a result, using
only linear potential to type function \texttt{round}
as $\tfunshow {\potp {\tlistshow a \alpha } 0}{\potp {\tlistshow a \alpha} 0}$ 
requires a 
type of 
$\tfunshow {\potp {\tlistshow {2a} \alpha } 1}{\potp {\tlistshow {2a} \alpha } 1}$ 
at the recursive call.
The linear potential increases by a factor of 
two with each 
round, 
violating the assumption needed for convergence.

While \texttt{round} is somewhat contrived, one
does not need to look far
to find counterexamples when using 
non-polynomial cost-bound templates. Unlike the annotation for the
highest-degree polynomial potential, the annotation for 
the highest-base exponential potential rarely remains constant, so
it cannot usually be cancelled out by taking differences.
This variation is caused by the linear recurrence for Stirling numbers 
$\stirling {n+1} {k+1} = (k+1)\stirling {n} {k+1} + \stirling {n} {k}$,
which includes the scalar $k+1$, whereas Pascal's identity does not.
For \texttt{half} with exponential potential,
the different form of the recurrence
causes a scaling factor of 4 between the type of $\tfunshow {\potp {\tlistshow {0,0,1} \alpha} 0} {\potp {\tlistshow {6,6,3} \alpha} 0}$ and the type needed at the recursive call of $\tfunshow {\potp {\tlistshow {0,0,4} \alpha} 3} {\potp {\tlistshow {24,24,12} \alpha} 3}$.
Thus, the existing approach cannot find good exponential-cost bounds for \texttt{half}.

\begin{figure}
  \begin{subfigure}[t]{0.4\textwidth}
  \begin{lstlisting}[xleftmargin=4.0ex]
fun dbl lst = 
  case lst of 
  | [] -> [] 
  | x::xs -> 
    x::x::(dbl xs)
  \end{lstlisting}
  \caption{}
  \label{fig:dbl}
  \end{subfigure}
  \begin{subfigure}[t]{0.5\textwidth}
  \begin{lstlisting}[xleftmargin=4.0ex]
fun round lst = 
  case lst of 
  | [] -> [] 
  | x::xs -> 
    x::dbl(round(half xs))
  \end{lstlisting}
  \caption{}
  \label{fig:round}
  \end{subfigure}
  \vspace{-0.5cm}
  \caption{Code for \texttt{dbl} and \texttt{round}}
  \label{fig:code}
\end{figure}

\subsection{The New Cost-Free Typing Method}
\label{sec:new}

The method described in the present paper
follows from the observation that the cost-free types may often all be summarized with a single linear map/matrix $M$.
In the case for \texttt{half} with polynomial or 
exponential potential, such matrices would 
be the following:
\[
polynomial:
\left (
\begin{smallmatrix}
    4 & 0 & 0 \\
    1 & 2 & 0 \\
     0 & 0 & 1 \\
\end{smallmatrix} \right )
\;\;\;\;
exponential:
\left (
\begin{smallmatrix}
    \sfrac {505} {12} & \sfrac {206} 3 & 6 & 0 \\
    10 & 22 & 6 & 0 \\
     0 & 1 & 3 &0 \\
     0 & 0 & 0 & 1 \\
\end{smallmatrix} \right )
\]

By associating
$\potp {\tlistshow {a,b} \alpha} c$ to the vector 
$\left ( \begin{smallmatrix}
  a \\ b \\ c
\end{smallmatrix} \right )$, 
one finds that, e.g., the polynomial map $M$ is consistent with each of 
the polynomial cost-free types found for \texttt{half} in \Cref{sec:rpr}. 
For instance, the types
$\tfunshow {\potp {\tlistshow {1,2} \alpha} 1} 
{\potp {\tlistshow {4,5} \alpha} 1}$
and $\tfunshow {\potp {\tlistshow {0,2} \alpha} 1} {\potp {\tlistshow {0,4} \alpha} 1}$ are encoded by the mappings
\[
\left (
\begin{smallmatrix}
    4 & 0 & 0 \\
    1 & 2 & 0 \\
     0 & 0 & 1 \\
\end{smallmatrix}\right ) 
\left (
\begin{smallmatrix}
  1 \\ 2 \\ 1
\end{smallmatrix} \right )
=
\left (
\begin{smallmatrix}
  4 \\ 5 \\ 1
\end{smallmatrix} \right )
\qquad\qquad
\left (
\begin{smallmatrix}
    4 & 0 & 0 \\
    1 & 2 & 0 \\
     0 & 0 & 1 \\
\end{smallmatrix}\right ) 
\left (
\begin{smallmatrix}
  0 \\ 2 \\ 1
\end{smallmatrix} \right )
=
\left (
\begin{smallmatrix}
  0 \\ 4 \\ 1
\end{smallmatrix} \right )
\]

Crucially, these matrices
are invariant across recursive calls, which
avoids infinite regress. And unlike
the existing cost-free typing algorithm,
inferring these matrices does not require additional
typings of \texttt{half}, so 
type checking/inference with them can be made efficient
(\Cref{sec:aut,sec:exp,sec:further}).

In \Cref{sec:ty}, we define a declarative
type system to justify matrices like these,
and in \Cref{sec:sound} we prove 
that
their use is
sound. While there are many subtleties,
the thrust of our approach is to treat potential transformations
algebraically
and identify matrices that
satisfy certain inequalities. For example, the polynomial matrix $M$
for \texttt{half} satisfies the following inequalities\footnote{
  To clarify notation, $*$ stands in for an arbitrary choice of number throughout this work, so that, e.g., $* \cdot 0 = 0$
  and $*+*=*$.
}
which correspond to the action that \texttt{half} takes on inputs of length 0, 1, or greater, respectively.
\begin{mathpar}
M \leq 
\left (
\begin{smallmatrix}
    * & * & * \\
    * & * & * \\
     0 & 0 & 1 \\
\end{smallmatrix}
\right )

M \leq 
\left (
\begin{smallmatrix}
    * & * & * \\
    * & * & * \\
     0 & 0 & 1 \\
\end{smallmatrix}
\right )
\left (
\begin{smallmatrix}
    1 & 0 & 0 \\
    1 & 1 & 0 \\
     0 & 1 & 1 \\
\end{smallmatrix}
\right )

M \leq 
\left (
\begin{smallmatrix}
    1 & 0 & 0 \\
    1 & 1 & 0 \\
     0 & 1 & 1 \\
\end{smallmatrix}
\right ) ^{-1}
M
   \left (
\begin{smallmatrix}
    1 & 0 & 0 \\
    1 & 1 & 0 \\
     0 & 1 & 1 \\
\end{smallmatrix}
\right ) ^2

\end{mathpar}

Similarly, the exponential matrix satisfies the following inequalities: 
\begin{mathpar}
M \leq 
\left (
\begin{smallmatrix}
    * & * & * & * \\
    * & * & * & * \\
    * & * & * & * \\
     0 & 0 & 0 & 1 \\
\end{smallmatrix}
\right )

M \leq 
\left (
\begin{smallmatrix}
    * & * & * & * \\
    * & * & * & * \\
    * & * & * & * \\
     0 & 0 & 0 & 1 \\
\end{smallmatrix}
\right )
\left (
\begin{smallmatrix}
    4 & 0 & 0 &0  \\
    1 & 3 & 0 &0 \\
     0 & 1 & 2 &0 \\
     0 & 0 & 1 & 1
\end{smallmatrix}
\right )

M \leq 
\left (
\begin{smallmatrix}
    4 & 0 & 0 &0  \\
    1 & 3 & 0 &0 \\
     0 & 1 & 2 &0 \\
     0 & 0 & 1 & 1
\end{smallmatrix}
\right ) ^{-1}
M
   \left (
\begin{smallmatrix}
    4 & 0 & 0 &0  \\
    1 & 3 & 0 &0 \\
     0 & 1 & 2 &0 \\
     0 & 0 & 1 & 1
\end{smallmatrix}
\right ) ^2

\end{mathpar}

To ensure soundness, the full system of \Cref{sec:ty} adds a few extra inequalities beyond these.

We also preemptively note that this new method of
cost-free typing
is not a strict improvement over the existing method. 
While type inference
is usually faster with our new method,
our new method cannot type all functions as well as the previous 
approach. For example, because our method represents
cost-free types with 
linear maps, it cannot express reallocation of potential based on
a nonlinear function like \texttt{min}, nor can it express multiple choices
of potential reallocation. We revisit such limitations in \Cref{sec:lim}.

Because each cost-free typing approach has its pros and cons, we 
recommend that an implementation of cost-free typing should employ a hybrid approach.
Specifically, we recommend to first attempt
our more efficient approach and then default to the
previous approach should our approach fail. This
hybridization would allow many simpler functions to be handled
quickly while still providing the robust coverage of the previous
approach. In particular, if a helper function is amenable to
our new approach, then that function need not be retyped
repeatedly when retyping its callers, cutting short the 
expensive cascade of retyping described in \Cref{sec:old}.

\section{Type System}
\label{sec:ty}

In this section, we define our cost-free
type system, as well as additional formalisms
the type system interacts with. These formalisms include the 
language analyzed by the system, as well as the
types, potential, and primitive maps the system uses. 
To focus on the salient parts of our work, 
we only present the parts of these 
formalisms associated with Booleans, lists, and functions.

\subsection{Language} 
Our type system analyzes a call-by-value functional language with 
commonplace features like pattern matching.
We simplify our language
by first syntactically transforming expressions into 
\textit{let-normal form}, 
a restriction of A-normal 
form \cite{sabry1992reasoning}
wherein every subexpression is
bound to a variable. (A-normal form
usually allows subexpressions to be variables \textit{or} constants.) We then 
give our expressions by the following grammar
where $f$ and $x$ represent variable names:
\begin{align*}
    e ::=\; & \etrue \mid \efalse \mid 
    \econd x {e_1} {e_2}
    \mid \enil
    \mid \econs {x_h}  {x_t}
    \mid (\ecasel {x_\ell} {e_1} {x_h} {x_t} {e_2})
    \\
    &\mid x \mid \elet x {e_1} {e_2}
    \mid \efun f x e \mid \eapp {f} x
\end{align*}

Note that our language does not include the ``tick''
expressions mentioned in \cref{sec:aara}. This is 
because cost-free types ignore ticks, so they are not 
needed.

Our language's values are then 
given by the following
grammar:
\begin{align*}
    v ::=\; & \etrue \mid \efalse 
    \mid \enil
    \mid \econs {v_h}  {v_t}
    \mid \vclosure V f x e
\end{align*}
where $\vclosure V f x e$ is a function closure for the recursive
function $f$ that takes argument $x$, has body $e$, and captures
the value context $V$. 

We address what it means for a value to be well-formed in \Cref{sec:sound}.

\subsection{Types}
\label{subsec:types}

The types considered in this work are given by the following grammar: 
\[ \tau,\sigma ::= \; \tbool \mid \tlist {\tau} \mid 
    \tfun \tau \sigma M,
\]
where $L(\cdot)$ is the list type constructor.

These types differ from standard AARA types in a few 
key ways.
First, lists types are no longer annotated---our 
type system does not concern itself with any \emph{specific} annotation,
but instead captures how annotations are \emph{transformed}.
This focus is reflected in the second key way that 
this type system differs: function types are annotated 
with a linear map (matrix) $M$. This map represents how 
annotations are transformed by
an application of the function.
Roughly speaking, $M$'s domain is the annotation vector for
the function's
argument, and $M$'s range is the annotation vector for the
function's return.

\begin{figure}
\begin{small}
\begin{mathpar}
\inds{\tau} = \begin{cases}
    \{\ideg n \mid 1 \leq n \leq D_{\mathit{max}} \} & \tau = \tlist {\sigma},\, \mathit{polynomial\; potential}
    \\
    \{\ibase n \mid 2 \leq n \leq B_{\mathit{max}} \} & \tau = \tlist {\sigma},\, \mathit{exponential\; potential}
    \\
    \emptyset & \mathit{otherwise}
\end{cases}

\inds{\Gamma} = (\bigcup_{x:\tau \in \Gamma} x \idot \inds \tau) \cup \{\iconst\}

\end{mathpar}
\end{small}
\caption{Indices}
\label{fig:inds}
\end{figure}

We use a special set of indices given in \Cref{fig:inds} 
to formalize the annotation vectors
described in \Cref{sec:aara}.
These indices allow us to annotate whole 
type contexts and to define the domain 
and range of our linear maps. 
We use the 
the symbol ``$\idot$'' as string concatenation for indices, 
which is evocative of its use
in $x \idot \inds \tau$ as a
field access.
We also simplify notation
by letting such concatenation distribute over sets of indices.

Each index in $\inds \tau$ 
is a string
representing a slot where an annotation
would
go on the type $\tau$ in standard AARA. 
Thus, annotation vectors are just (finite) maps
from indices to numbers.
In our setting, we consider only
so-called ``shallow''
indices,\footnote{
  Standard AARA includes deep indices, like on the inner lists in types such as $\tlist {\tlist {\tau}}$.
  We ignore such inner lists in our setting because they complicate matters a lot and are virtually never relevant for resource-polymorphic recursion.
}
so the list type only has indices for the degrees ($\ideg n$) or bases ($\ibase n$) of potential being considered, up to some maximum.
These $\ideg n, \ibase n$ indices correspond to the $n^{\mathit{th}}$  element of the annotation vectors discussed in \Cref{sec:aara}.
No other type has any indices.
Across an entire typing 
context, we also prefix each index by the name of the 
variable it belongs to. Thus, the index 
$\mathtt x \idot \ideg 2$ refers to the
quadratic-potential annotation of the list $\mathtt x$.
Finally, type contexts have the additional constant
index $\iconst$, which refers to the amount of constant potential available to pay for costs.
We
subscript symbolic
vectors with these indices to
pick out annotations, so that,
e.g., $\vec p_{\mathtt{x}\idot \ideg 2}$ gives the
quadratic potential assigned to $\mathtt{x}$ by $\vec p$.

\begin{example} \label{ex:anno}
The following explicitly-indexed
column
vectors are the polynomial annotation of
the type context $\potp {x:\tlistshow {3,1,4} \alpha, b:\tbool} 5$ and the exponential annotation of $\potp {y:\tlistshow {2,0,6} \alpha, b:\tbool} 7$:
\begin{mathpar}
\textrm{polynomial:}
\begin{scriptsize}
\begin{blockarray}{cc}
  \begin{block}{c(c)}
        x \idot \ideg 3   & 3 \\
        x \idot \ideg 2   & 1 \\
        x \idot \ideg 1   & 4 \\
     \iconst   & 5 \\
  \end{block}
  \end{blockarray}
\end{scriptsize}

\textrm{exponential:}
\begin{scriptsize}
  \begin{blockarray}{cc}
  \begin{block}{c(c)}
        y \idot \ibase  4   & 2 \\
        y \idot \ibase 3   & 0 \\
        y \idot \ibase 2   & 6 \\
     \iconst   & 7 \\
  \end{block}
  \end{blockarray}
  \end{scriptsize}
\end{mathpar}

Thus, the polynomial annotation denotes the potential 
function
$\lambda n . 3\binom{n}{3} + \binom{n}{2} + 4\binom{n}{1} + 5$, 
where $n$ is a formal parameter for the length of list $x$. Likewise, the exponential annotation denotes
$ \lambda m . 2\stirling{m+1}{4} + 6\stirling{m+1}{2} + 7$, where $m$ is the formal parameter for the length of list $y$.
\end{example}

To describe the domain and range of
the linear map associated with a function type,
we must also introduce the special names $\iarg$ and $\iret$ for the otherwise-unnamed argument and return
values.
Thus, for $\tfun \tau \sigma M$, 
the map $M$ maps annotation vectors indexed by $\iarg \idot \inds \tau \cup \{\iconst \}$ to annotation vectors indexed by $\iret \idot \inds \sigma \cup \{\iconst \}$. 
We implicitly extend these maps and others to 
act on larger domains simply by treating $M$ as the direct 
sum $M \oplus I$, which acts as
the identity over indices neglected by $M$'s domain.
This interpretation means that $M$ maps $\iarg \idot \inds \tau$ to the zero vector, leaving function arguments with no
potential after application.\footnote{
  This decision to zero the argument potential is made for simplicity of 
  presentation. Some more interesting ways of restoring potential
  to the argument, 
  like remainder contexts \cite{kahn2021automatic,grosen2023automatic}, 
  can be implemented without issue.
}

\begin{figure}
\begin{small}
\begin{mathpar}
\pot v {\tau} {\vec p} = \begin{cases}
    \sum_{k=1}^{D_{\mathit{max}}} \vec p_{\ideg k} \cdot \binom {|v|} k & \tau = \tlist {\sigma},\, \mathit{polynomial}
    \\
    \sum_{k=2}^{B_{\mathit{max}}} \vec p_{\ibase k} \cdot \stirling {|v|+1} k & \tau = \tlist {\sigma},\, \mathit{exponential}
    \\
    0 & \mathit{otherwise}
    \end{cases}

    \pot {V} \Gamma {\vec p} = \vec p_\iconst + \sum_{x:\tau \in \Gamma} \pot {V(x)} \tau {\lambda i.\,\vec p_{x \idot i}}
\end{mathpar}
\end{small}
\vspace{-.5cm}
\caption{
potential
}
\label{fig:pot}
\end{figure}

\subsection{Potential}
We use $\pot v \tau {\vec p}$ to formally denote
the amount of potential stored on 
a value $v$ of type $\tau$ annotated by $\vec p$.
The only type for which this notation assigns non-zero
potential is for lists. We give the formal
definition of potential assignment for both
values and whole contexts in \cref{fig:pot},
where we also extend the notation over contexts.
This definition is consistent with the potential functions
found in \Cref{ex:anno}.

\subsection{Primitive Maps}

\begin{figure}
    \begin{scriptsize}
    \begin{mathpar}
    \mmove x y = 
    \begin{blockarray}{ccc}
  {} &  x \idot \ideg 2 & x \idot\ideg1  \\
  \begin{block}{c(cc)}
    x \idot \ideg 2   & 0 & 0  \\
    x \idot \ideg 1   & 0 & 0  \\
     y \idot \ideg 2 & 1 & 0  \\
     y \idot \ideg 1  & 0 & 1  \\
  \end{block}
  \end{blockarray}
  
  \mshift x y = 
    \begin{blockarray}{cccc}
  {} & x \idot \ideg 2 & x \idot\ideg1 & \iconst   \\
  \begin{block}{c(ccc)}
        x \idot \ideg 2  & 0 & 0 & 0 \\
        x \idot \ideg 1   & 0 & 0 & 0 \\
     y \idot \ideg 2 & 1 & 0 & 0 \\
     y \idot \ideg 1  & 1 & 1 & 0 \\
     \iconst & 0 & 1 & 1 \\
  \end{block}
  \end{blockarray}
  
  \munshift x y = 
    \begin{blockarray}{cccc}
  {} &  x \idot \ideg 2 & x \idot\ideg1 & \iconst   \\
  \begin{block}{c(ccc)}
        x \idot \ideg 2  & 0 & 0 & 0 \\
        x \idot \ideg 1   & 0 & 0 & 0 \\
     y \idot \ideg 2  & 1 & 0 & 0 \\
     y \idot \ideg 1  & -1 & 1 & 0 \\
     \iconst  & 1 & -1 & 1 \\
  \end{block}
  \end{blockarray}

 \vspace{-.3cm}
 
  \mzero x = \begin{blockarray}{cccccc}
  {} &  x \idot \ideg 2  & x \idot \ideg 1 &  y \idot \ideg 2  & y \idot  \ideg 1 
  &\iconst   \\
  \begin{block}{c(ccccc)}
        x \idot \ideg 2   & 0 & 0 & 0 & 0 & 0  \\
        x \idot \ideg 1   & 0 & 0 & 0 & 0 & 0\\
        y \idot \ideg 2   & 0 & 0 & 1 & 0 & 0\\
        y \idot \ideg 2   & 0 & 0 & 0 & 1 & 0\\
        \iconst   & 0 & 0 & 0 & 0 & 1\\
  \end{block}
  \end{blockarray}

    \mnil x = \begin{blockarray}{cccc}
  {} &  x \idot \ideg 2 & x \idot\ideg1 & \iconst   \\
  \begin{block}{c(ccc)}
    x \idot \ideg 2 & * & * & *\\
    x \idot \ideg 1 & * & * & *\\
     \iconst  & 0 & 0 & 1 \\
  \end{block}
  \end{blockarray} 
    \end{mathpar}
    \end{scriptsize}
     \vspace{-.7cm}
    \caption{Selected primitive maps with explicit indices for polynomial potential up to degree 2
    }
    \label{fig:prim}
\end{figure}

\begin{figure}
\begin{scriptsize}
\begin{mathpar}
\mshift x y = 
    \begin{blockarray}{cccc}
  {} &  x \idot\ibase 3 & x \idot\ibase 2& \iconst   \\
  \begin{block}{c(ccc)}
        x \idot \ibase 3   & 0 & 0 & 0\\
        x \idot \ibase 2    & 0 & 0 & 0 \\
        y \idot \ibase 3   & 3 & 0 & 0\\
        y \idot \ibase 2    & 1 & 2 & 0 \\
     \iconst  & 0& 1 & 1 \\
  \end{block}
  \end{blockarray}
  
  \munshift x y = 
    \begin{blockarray}{cccc}
  {}  & x \idot\ibase 3 & x \idot\ibase 2& \iconst   \\
  \begin{block}{c(ccc)}
        x \idot \ibase 3   & 0 & 0 & 0\\
        x \idot \ibase 2    & 0 & 0 & 0 \\
        y \idot \ibase 3   & \sfrac 1 3 & 0 & 0\\
        y \idot \ibase 2   & -\sfrac 1 6 & \sfrac 1 2 & 0 \\
     \iconst  & \sfrac 1 6 & -\sfrac 1  2& 1 \\
  \end{block}
  \end{blockarray}
  \end{mathpar}
  \end{scriptsize}
  \vspace{-.7cm}
    \caption{Primitive shift maps with
    explicit indices for exponential 
    potential up to base 3}
    \label{fig:expprim}
\end{figure}

To simplify the presentation of matrices in our type system,
we build more complex maps out
of a primitive set of maps.
We exemplify such maps
in \Cref{fig:prim,fig:expprim}.
These primitive maps correspond to the annotation
manipulations induced by basic evaluation steps and are
often representable as simple matrices. The matrix
dimensions (and the
matrices for ``shift'' and ``unshift''
in particular) are
implicitly parameterized by the kinds of annotations in 
use, i.e., polynomial or exponential, and maximum degree 
($D_{\mathit{max}}$) or base ($B_{\mathit{max}}$) annotation. 
Furthermore, we treat
every matrix 
$M$ implicitly as the direct sum $M\oplus I$, 
so that every matrix acts as the identity on
indices it would otherwise neglect. 
We explain each primitive map in detail as follows:

\begin{itemize}
    \item $\mi$ -- This map is just the identity map, and is
    used when the potential-carrying variables remain unchanged {by a program action}. (We do not show $\mi$ in \Cref{fig:prim,fig:expprim}.)

    \item $\mmove x y$ -- This map moves
    all potential on the variable 
    $x$ to $y$. 
    Each value at index 
    $x \idot i$ ends up at index $y \idot i$
    after application. We assume that relabeling only 
    occurs with $y$ an unused name or a variable with zero potential so that we avoid any name collisions.

    \item $\mshift x y$ -- This ``shift'' map
    applies a transformation to the list indices of 
    $x$ and puts the results under the new name $y$,
    adjusting the constant amount as appropriate.
    This matrix typically appears at a list pattern-matching operation.
    The particular transformation applied is specific to the
    type of potential used
    (i.e., polynomial vs.\ exponential),
    and must relate the annotations before and after
    a list pattern-matching operation.
    Depending on how this relation is defined, one can obtain polynomial or exponential potential (compare \Cref{fig:prim,fig:expprim}).
    For example, the version for polynomial potential uses Pascal's identity. 
    As with 
    $\mmove x y$, we assume that the name $y$ is unused.

    \item $\munshift x y$ -- This ``unshift'' map acts as
    the inverse of 
    $\mshift y x$. This map adjusts a list's annotations 
    when an element is added to its front.
    This matrix and $\mshift y x$ are the only primitive maps 
    that need to change to convert the type system from capturing polynomial potential to capturing exponential potential.

    \item $\mnil x$ -- This map is a special havoc
    operation used when instantiating or pattern-matching
    the empty list. We represent
    this havocking by putting the symbol $*$ across
    each row $x \idot i$. In this work, $*$ stands for
    an arbitrary choice of number, so that, e.g., $*\cdot 0 = 0$
    and $* + r = * \cdot * = *$.

    \item $\mproj {x}$ -- This map
    projects onto
    indices of the form $x\idot i$ if $x$ is a variable name, or the index $\iconst$ if $x$ is $\iconst$. This projection is accomplished by
    zeroing all other indices.
    We extend the notation to sets so that $\mproj {\{x,y \}}$
    projects onto both $x$ and $y$.

    \hspace{1.5ex}
    We write $\mproj {\neg x}$ for the complement projection that projects \emph{away} indices of the form $x\idot i$.
    The map $\mproj {\neg x}$ is used when variable $x$ leaves or 
    (freshly) enters a scope.

\end{itemize}

By construction, $\pot {(x\mapsto v_h::v_t)} {(x:\tlist \tau)}
{\vec p} = \pot {(y\mapsto v_t)} {(y:\tlist \tau)} {\mshift x y \cdot \vec p}$
for any annotation vector $\vec p$. 
That is, the shift map perfectly
conserves potential. 
Similarly, the map $\mnil x$ preserves potential:
$\pot {(x\mapsto \enil )} {(x:\tlist \tau)}
{\vec p} = \pot {(x\mapsto \enil )} {(x:\tlist \tau)} {\mnil x\cdot \vec p}
= 0$.
Each of the primitive maps (except for 
$\mproj x$)
is designed to be conservative on its
corresponding operations, 
even given an annotation vector $\vec p$ with negative entries.

\begin{example}
In \Cref{sec:rpr}, the last step of \texttt{half} creates the return value by attaching \texttt{x1} to the front of list \texttt{tmp}.
Our informal reasoning about this step transformed potential using the identity
$4\binom m 2 + 5 \binom m 1 + 1 = 4 \binom {m+1} 2 + 1 \binom {m+1} 1$.
Using the unshift map $\munshift {\mathtt{tmp}} \iret$, this step is formalized as follows:
\begin{scriptsize}
\[
\munshift {\mathtt{tmp}} \iret \cdot 
\begin{blockarray}{cc}
  \begin{block}{c(c)}
        \mathtt{tmp} \idot \ideg 2   & 4 \\
        \mathtt{tmp} \idot \ideg 1   & 5 \\
     \iconst   & 1 \\
  \end{block}
  \end{blockarray}
  \;\;\;\;=\;\;\;\;
  \begin{blockarray}{cccc}
  {} & \mathtt{tmp}\idot \ideg 2 & \mathtt{tmp}\idot \ideg 1  & \iconst \\
  \begin{block}{c(ccc)}
        \mathtt{tmp} \idot \ideg 2   & 0 & 0 & 0\\
        \mathtt{tmp} \idot \ideg 1   & 0 & 0 & 0 \\
        \iret \idot \ideg 2 & 1 & 0 & 0 \\
        \iret \idot \ideg 1 & -1 & 1 & 0 \\
        \iconst   & 1 & -1 & 1 \\
  \end{block}
  \end{blockarray}
  \cdot 
  \begin{blockarray}{cc}
  \begin{block}{c(c)}
        \mathtt{tmp} \idot \ideg 2   & 4 \\
        \mathtt{tmp} \idot \ideg 1   & 5 \\
     \iconst   & 1 \\
  \end{block}
  \end{blockarray}
   \;\;\;\;=\;\;\;\;
  \begin{blockarray}{cc}
  \begin{block}{c(c)}
        \mathtt{tmp} \idot \ideg 2   & 0 \\
        \mathtt{tmp} \idot \ideg 1   & 0 \\
        \iret\idot \ideg 2   & 4 \\
        \iret \idot \ideg 1   & 1 \\
     \iconst   & 0 \\
  \end{block}
  \end{blockarray}
\]
\end{scriptsize}
\end{example}

\subsection{Typing Rules}
\label{sec:trules}

\begin{figure}
\begin{scriptsize}
\begin{mathpar}
\inferrule[M-Fun]{
    \tjudgecf {\Gamma, f: \tfun \tau \sigma M, x: \tau} e \sigma {\mathcal S} {\mathfrak C}
    \\
     \mproj  {\{x, \iret,\iconst \} } \cdot \mathcal S \cdot \mzero {\dom{\Gamma}} 
     \geq \mproj {\{ x, \iret,\iconst \} } \cdot M \cdot \mmove x \iarg 
    \\
    \mproj  {\neg \{x, \iret,\iconst \} } \cdot \mathcal S \cdot \mzero {\dom{\Gamma}} \geq 0
    \\
    \mathfrak C \cdot \mzero {\dom{\Gamma}} \geq 0
}{
    \tjudgecf \Gamma {\efun f x e} {\tfun \tau \sigma M}
    { \{ \mi \}} {\emptyset }
}

\inferrule[M-App]{
}{
    \tjudgecf {\Gamma, f: \tfun \tau \sigma M, x: \tau} {\eapp f x} \sigma { \{ M\cdot \mmove x \iarg \} } {\{ \mproj { \{ x,\iconst \} } \}}
}

\inferrule[M-Var]{
}{
    \tjudgecf {\Gamma, x:\tau} {\evar x} \tau { \{ \mmove x \iret \} }
    {\emptyset}
}

\inferrule[M-Nil]{
}{
    \tjudgecf {\Gamma} {\enil} {\tlist \tau} { \{\mnil \iret \} } \emptyset
}

\inferrule[M-CaseL]{
    \tjudgecf {\Gamma, \ell : \tlist \tau } {e_1} \sigma {\mathcal S} {\mathfrak C }
    \\
    \tjudgecf {\Gamma,  \ell : \tlist \tau, h:\tau, t : \tlist \tau } {e_2} \sigma {\mathcal T}
    {\mathfrak D}
}{
    \tjudgecf {\Gamma, \ell : \tlist \tau } {\ecasel \ell {e_1} h t {e_2}}
    \sigma
    {
        \mathcal S \cdot \mnil \ell 
        \cup \mproj {\neg \{h,t\} } \cdot \mathcal T \cdot \mshift \ell t \cdot \mzero h
    }
    {
    \mathfrak C \cdot \mnil \ell
    \cup 
    (\mproj {\{ h, t\}} \cdot \mathcal T \cup \mathfrak D) \cdot \mshift \ell t\cdot \mzero h
    }
}

\inferrule[M-Cons]{
}{
    \tjudgecf {\Gamma,h:\tau, t : \tlist \tau } {\econs h t} {\tlist \tau } {\{ \munshift t \iret \} } \emptyset
} 

\inferrule[M-Cond]{
    \tjudgecf {\Gamma, b:\tbool } {e_1} \tau {\mathcal S} {\mathfrak C}
    \\
    \tjudgecf {\Gamma, b:\tbool } {e_2} \tau {\mathcal T} {\mathfrak D}
}{
    \tjudgecf {\Gamma, b:\tbool } {\econd b {e_1} {e_2}} \tau {\mathcal S \cup \mathcal T}
    {\mathfrak C \cup \mathfrak D}
}

\inferrule[M-Bool]{
    e \in \{ \etrue, \efalse\}
}{
    \tjudgecf {\Gamma} e {\tbool} {\{ \mi \}} \emptyset
}

\inferrule[M-Let]{
    \tjudgecf {\Gamma} {e_1} \sigma {\mathcal S} {\mathfrak C}
    \\
    \tjudgecf {\Gamma, x:\sigma} {e_2} \tau {\mathcal T} {\mathfrak D}
}{
    \tjudgecf {\Gamma} {\elet x {e_1} {e_2}} \tau 
    {\mproj {\neg x} \cdot \mathcal T \cdot \mmove \iret x \cdot \mathcal S}
    {\mathfrak C \cup (\mproj x \cdot \mathcal T \cup \mathfrak D) \cdot \mmove \iret x \cdot \mathcal S}
}
\end{mathpar}
\end{scriptsize}
\caption{Linear-map-based, cost-free typing rules}
\label{fig:trules}
\end{figure}

While we are only actually interested in the cost-free
types of functions, we must build up to those 
types compositionally by typing all subexpressions.
We formalize typing using the rules in
\Cref{fig:trules}. To ease notation, these rules implicitly
extend matrix operations to work over sets.
In particular, $\mathcal S \geq M$ means
$\forall S \in \mathcal S$, the matrix $S$ is a pointwise upper bound on the matrix $M$.
The rules also use the following typing judgement:
\[
\tjudgecf{\Gamma} e \tau {\mathcal S} {\mathfrak C}
\]
This typing judgment means:
\begin{itemize}
    \item in typing context $\Gamma$, the expression $e$ has type $\tau$

    \item the evaluation of $e$ might manipulate potential annotations 
    according to any of the linear maps in $\mathcal S$ and 
    $\mathfrak C$, detailed in this section as follows
\end{itemize}  

The basic approach of our cost analysis is to
use our type system to assign two sets of linear maps, $\mathcal S$
and $\mathfrak C$, to 
each program subexpression. (We will refer to these sets as 
$\mathcal S$ and $\mathfrak C$ throughout this work.)
The maps in $\mathcal S$ are what we use to generate the
matrix inequalities exemplified in \Cref{sec:new}, and 
there is one map in $\mathcal S$ for each possible
path through a function's source code to the return. 
Once a function's body is fully typed, these paths are approximated
by the single matrix that annotates the cost-free function type.
The maps in $\mathfrak C$ are used to generate
some additional matrix inequalities
to ensure that potential is properly conserved,
and there is one map
in
$\mathfrak C$ for each
possible path through the program's source to a 
function application or to the end of 
a variable's scope.
The role of $\mathfrak C$ is to address the following problem:
if a variable has negative potential annotations when it falls out of scope, then there would be an overall gain of potential, which could lead to an unsound conclusion.
To avoid this situation, the potential transformations in $\mathfrak{C}$ are used to check certain non-negativity side-conditions.
The role of $\mathfrak C$ is discussed in \Cref{sec:sound}.

One might wonder why a \textit{set} of matrices $\mathcal S$ is tracked,
rather than one single matrix. If a set of matrices can be
approximated by a single matrix for typing a function, why 
can't the same be done to simplify down to a single
matrix at every branch? The answer is simply that such
a design makes the system more difficult to automate.
Approximating the matrices $M$ and $N$ by some $P\leq M,N$
introduces the unknown matrix $P$, which must be solved for.
As can be seen in \Cref{sec:aut}, such extra unknowns can 
introduce nonlinear constraints, rendering automation impractical.
We minimize this problem by tracking \textit{sets}
of matrices in our type rules in \Cref{fig:trules}. 

The typing rules in \Cref{fig:trules} are mostly 
direct encodings of the typing rules of standard AARA. 
Many rules merely
transform $\mathcal S$ via
the primitive map appropriate for the kind of expression being typed.
For example, the typing rule \textit{M-Cons} merely introduces the unshift map.
Rules that introduce new variables also add matrices to $\mathfrak C$ that pick out the annotations of such
new variables before they fall out of scope. For example
\textit{M-Let} adds the matrices $\mproj x \cdot \mathcal T \cdot \mmove \iret x \cdot \mathcal S$ to $\mathfrak C$, which give the annotations of $x$ at the end
of expression $e_2$. Similarly, rule \textit{M-App} also includes
the matrix $\mproj {\{x,\iconst\}}$ in $\mathfrak C$, which gives
the annotation of $x$ (and the free potential) before a function
application.

Among all the typing rules, the function-typing rule \textit{M-Fun} is
distinct. This rule makes use of
three special matrix-inequality premises that no other rule has,
and these premises are used to allow the sound approximation
of sets of linear maps with a single matrix. We discuss soundness
further in \Cref{sec:sound}.
Intuitively, these
premises encode the following additional information,
assuming that annotations are initially non-negative:

\begin{itemize}
    \item $\mproj  {\{x, \iret,\iconst \} } \cdot \mathcal S \cdot \mzero {\dom{\Gamma}} 
     \geq \mproj {\{ x, \iret,\iconst \} } \cdot M \cdot \mmove x \iarg 
    $. \;\; The matrices in $\mathcal S$ yield a pointwise upper bound on
    the matrix $M$.
    \item 
    $\mproj  {\neg \{x, \iret,\iconst \} } \cdot \mathcal S \cdot \mzero {\dom{\Gamma}} \geq 0
    $. \;\; Variables captured in the function's closure do not have negative annotations when the function body finishes evaluating.
    \item $
    \mathfrak C \cdot \mzero {\dom{\Gamma}} \geq 0$.
    \;\; Arguments to functions and variables that fall out of scope in the course of evaluating the function body do not have 
    negative annotations.
\end{itemize}

There is at least one important remaining difference between 
standard cost-free AARA and our type system:
whereas standard AARA allows 
multiple choices for how potential is allocated, our rules 
codify one particular choice. For example, when a variable 
is used twice in standard AARA, potential may be split
any way between the two uses, which is called ``sharing.''
In our cost-free 
rules, all potential is allocated to a variable's first use, leaving
0 potential for
future additional uses.
By fixing this arbitrary sharing decision, our system avoids 
\textit{full} sharing, while still 
remaining able to type many functions of interest.
Supporting full sharing would yield nonlinear
annotation relations and 
thus prevent the analysis from being captured
with simple linear maps and from being automated via a linear program 
(see \Cref{sec:aut,sec:further}). 
 
\begin{example}\label{ex:HalfWithPolynomialPotential}
Consider typing \texttt{half} from \Cref{fig:half}
as $\tfun {\tlist \alpha} {\tlist \alpha} M$ with polynomial
potential. The typing 
rule \textit{M-Fun} can be used to justify
that the following matrix from \Cref{sec:overview} 
works for $M$: 

\begin{scriptsize}
\[
\begin{blockarray}{cccc}
  {} &  \iarg \idot \ideg 2 & \iarg \idot\ideg1 & \iconst   \\
  \begin{block}{c(ccc)}
     \iret\idot \ideg 2 & 4 & 0 & 0 \\
     \iret \idot \ideg 1 & 1 & 2 & 0 \\
     \iconst & 0 & 0 & 1 \\
  \end{block}
  \end{blockarray}
\]
\end{scriptsize}
There are three paths through the body of \texttt{half}, so our
typing rules will generate three matrices in the set $\mathcal S$.
The first two of these paths return the empty list for inputs of 
length zero or one and correspond to the following matrix products
\footnote{
Recall that each matrix $N$ is implicitly extended 
to $N\oplus I$, so matrices of mismatched dimensions can be multiplied.
}
(where $*$ represents havocked matrix entries):
\begin{scriptsize}
\begin{mathpar} 
    \mnil \iret \cdot \mnil {\mathtt{lst}} = 
    \begin{tiny}
\begin{blockarray}{cccc}
  {} &  \mathtt{lst} \idot \ideg 2 & \mathtt{lst}\idot\ideg1 & \iconst   \\
  \begin{block}{c(ccc)}
   \mathtt{lst}\idot \ideg 2 & * & * & * \\
     \mathtt{lst} \idot \ideg 1 & * & * & * \\
     \iret\idot \ideg 2 & * & * & * \\
     \iret \idot \ideg 1 & * & * & * \\
     \iconst & 0 & 0 & 1 \\
  \end{block}
  \end{blockarray}
\end{tiny}

    \mproj {\neg \{\mathtt{xs1,x1}\}} \cdot \mnil \iret \cdot \mnil {\mathtt{xs1}} \cdot \mshift {\mathtt{lst}} {\mathtt{xs1}} \cdot \mzero {\mathtt{x1}}= 
    \begin{tiny}
\begin{blockarray}{cccc}
  {} &  \mathtt{lst} \idot \ideg 2 & \mathtt{lst}\idot\ideg1 & \iconst   \\
  \begin{block}{c(ccc)}
   \mathtt{lst}\idot \ideg 2 & 0 & 0 & 0 \\
     \mathtt{lst} \idot \ideg 1 & 0 & 0 & 0 \\
     \iret\idot \ideg 2 & * & * & * \\
     \iret \idot \ideg 1 & * & * & * \\
     \iconst & 0 & 1 & 1 \\
  \end{block}
  \end{blockarray}
\end{tiny}
\end{mathpar}
\end{scriptsize}

The remaining path through \texttt{half} is that which makes
a recursive call on inputs of length greater than one.
Using our choice of $M$, the typing rules generate the following
matrix product:

\begin{scriptsize}
\begin{mathpar}
\mproj {\neg \{\mathtt{xs1,x1}\}} \cdot \mproj {\neg \{\mathtt{xs2,x2}\}} \cdot \mproj {\neg \mathtt{tmp}} \cdot \munshift {\mathtt{tmp}} \iret \cdot \mmove \iret {\mathtt{tmp}} \cdot M \cdot \mmove {\mathtt{xs2}} \iarg \cdot \mshift {\mathtt{xs1}} {\mathtt{xs2}} \cdot \mzero {\mathtt{x2}} \cdot \mshift {\mathtt{lst}} {\mathtt{xs1}} \cdot \mzero {\mathtt{x1}}
=
\begin{tiny}
\begin{blockarray}{cccc}
  {} &  \mathtt{lst} \idot \ideg 2 & \mathtt{lst}\idot\ideg1 & \iconst   \\
  \begin{block}{c(ccc)}
   \mathtt{lst}\idot \ideg 2 & 0 & 0 & 0 \\
     \mathtt{lst} \idot \ideg 1 & 0 & 0 & 0 \\
     \iret \idot \ideg 2 & 4 & 0 & 0 \\
     \iret \idot \ideg 1 & 1 & 2 & 0 \\
     \iret & 0 & 0 & 1 \\
  \end{block}
  \end{blockarray}
\end{tiny}
\end{mathpar}
\end{scriptsize}

The typing rule $\textit{M-Fun}$ then requires that, for
each of these $S \in \mathcal S$, two sets of inequalities hold.

The first set of inequalities specifies that,
after identifying the
labels \texttt{lst} and $\iarg$, each matrix $S$ 
is a pointwise upper bound on $M$ over the indices for
$\mathtt{lst}$, $\iret,$ and $\iconst$, i.e.,
$\mproj {\{\mathtt{\mathtt{lst},\iret,\iconst}\}}\cdot S \cdot \mzero {\dom {\Gamma}} 
\geq \mproj {\{\mathtt{\mathtt{lst},\iret,\iconst}\}}\cdot M \cdot  \mmove {\mathtt{\mathtt{lst}}} {\iarg}$.
These inequalities simplify to 
those given in \Cref{sec:new}, which $M$ satisfies.

The second set of inequalities specifies
that each matrix $S$ 
avoids negative entries over indices for captured variables, i.e.,
$\mproj {\neg \{\mathtt{\mathtt{lst},\iret,\iconst}\}}\cdot S \cdot \mzero {\dom {\emptyset}} \geq 0$. 
These second inequalities are trivial because there are no
captured variables in \texttt{half}, so the inequalities are over matrices
of dimension 0.

The typing rules also generate the following matrices in $\mathfrak C$ for variables that leave scope:
\begin{scriptsize}
\begin{mathpar} 
   \mproj {\{\mathtt{xs1,x1}\}} \cdot \mnil \iret \cdot \mnil {\mathtt{xs1}} \cdot \mshift {\mathtt{lst}} {\mathtt{xs1}} \cdot \mzero {\mathtt{x1}}
   =
\begin{tiny}
\begin{blockarray}{cccc}
  {} &  \mathtt{lst} \idot \ideg 2 & \mathtt{lst} \idot\ideg1 & \iconst   \\
  \begin{block}{c(ccc)}
     \mathtt{x1} \idot \inds{\alpha}  & 0 & 0 & 0 \\
     \mathtt{xs1} \idot \ideg 2 & * & * & * \\
     \mathtt{xs1}  \idot \ideg 1 & * & * & * \\
  \end{block}
  \end{blockarray}
\end{tiny}

   \mproj { \{\mathtt{xs1,x1}\}} \cdot \mproj {\neg \{\mathtt{xs2,x2}\}} \cdot \mproj {\neg \mathtt{tmp}} \cdot \munshift {\mathtt{tmp}} \iret \cdot \mmove \iret {\mathtt{tmp}} \cdot M \cdot \mmove {\mathtt{xs2}} \iarg \cdot \mshift {\mathtt{xs1}} {\mathtt{xs2}} \cdot \mzero {\mathtt{x2}} \cdot \mshift {\mathtt{lst}} {\mathtt{xs1}} \cdot \mzero {\mathtt{x1}}
   =
\begin{tiny}
\begin{blockarray}{cccc}
  {} &  \mathtt{lst} \idot \ideg 2 & \mathtt{lst} \idot\ideg1 & \iconst   \\
  \begin{block}{c(ccc)}
     \mathtt{x1} \idot \inds{\alpha}  & 0 & 0 & 0 \\
     \mathtt{xs1} \idot \ideg 2 & 0 & 0 & 0 \\
     \mathtt{xs1}  \idot \ideg 1 & 0 & 0 & 0 \\
  \end{block}
  \end{blockarray}
\end{tiny}

    \mproj {\{\mathtt{xs2,x2}\}} \cdot \mproj {\neg \mathtt{tmp}} \cdot \munshift {\mathtt{tmp}} \iret \cdot \mmove \iret {\mathtt{tmp}} \cdot M \cdot \mmove {\mathtt{xs2}} \iarg \cdot \mshift {\mathtt{xs1}} {\mathtt{xs2}} \cdot \mzero {\mathtt{x2}} \cdot \mshift {\mathtt{lst}} {\mathtt{xs1}} \cdot \mzero {\mathtt{x1}}
       =
\begin{tiny}
\begin{blockarray}{cccc}
  {} &  \mathtt{lst} \idot \ideg 2 & \mathtt{lst} \idot\ideg1 & \iconst   \\
  \begin{block}{c(ccc)}
     \mathtt{x2} \idot \inds{\alpha}  & 0 & 0 & 0 \\
     \mathtt{xs2} \idot \ideg 2 & 0 & 0 & 0 \\
     \mathtt{xs2}  \idot \ideg 1 & 0 & 0 & 0 \\
  \end{block}
  \end{blockarray}
\end{tiny}

    \mproj {\mathtt{tmp}} \cdot \munshift {\mathtt{tmp}} \iret \cdot \mmove \iret {\mathtt{tmp}} \cdot M \cdot \mmove {\mathtt{xs2}} \iarg \cdot \mshift {\mathtt{xs1}} {\mathtt{xs2}} \cdot \mzero {\mathtt{x2}} \cdot \mshift {\mathtt{lst}} {\mathtt{xs1}} \cdot \mzero {\mathtt{x1}}
     =
\begin{tiny}
\begin{blockarray}{cccc}
  {} &  \mathtt{lst} \idot \ideg 2 & \mathtt{lst} \idot\ideg1 & \iconst   \\
  \begin{block}{c(ccc)}
     \mathtt{tmp} \idot \ideg 2 & 0 & 0 & 0 \\
     \mathtt{tmp}  \idot \ideg 1 & 0 & 0 & 0 \\
  \end{block}
  \end{blockarray}
\end{tiny}
\end{mathpar}
\end{scriptsize}

And one last matrix is generated in $\mathfrak C$ for function 
arguments:
\begin{mathpar}
\mproj {\{ \iarg, \iconst \}} \cdot \mmove {\mathtt{xs2}} \iarg \cdot \mshift {\mathtt{xs1}} {\mathtt{xs2}} \cdot \mzero {\mathtt{x2}} \cdot \mshift {\mathtt{lst}} {\mathtt{xs1}} \cdot \mzero {\mathtt{x1}}
=
\begin{tiny}
\begin{blockarray}{cccc}
  {} &  \mathtt{lst} \idot \ideg 2 & \mathtt{lst} \idot\ideg1 & \iconst   \\
  \begin{block}{c(ccc)}
     \iarg \idot \ideg 2 & 1 & 0 & 0 \\
     \iarg  \idot \ideg 1 & 2 & 1 & 0 \\
     \iconst & 1 & 2 & 1 \\
  \end{block}
  \end{blockarray}
\end{tiny}
\end{mathpar}

The rule \textit{M-Fun} then requires that, for each $C \in \mathfrak C$,
the matrix $C$ is non-negative, i.e.,
\\ 
$C \cdot \mzero {\dom{\emptyset}} \geq 0$.
This condition holds for all of the matrices in $\mathfrak C$ (which were shown above),
finishing our justification for typing \texttt{half} as
$\tfun {\tlist \alpha} {\tlist \alpha} M$.
\end{example}

\section{Soundness}
\label{sec:sound}

In this section, we give some intuition behind the obstacles 
to soundness, lay 
out some remaining formalisms necessary to formally state the
soundness theorem, and then go over the theorem.

The key notion of soundness in our type system is that the 
maps we assign to transform annotations do not cause an 
increase in potential---a cost-free type should merely reallocate
potential. However, this property is 
non-trivial
due to the unavoidable presence of negative matrix entries,
like in the primitive map $\munshift x y$.
Such entries can create negative annotations,
posing two problems:

\begin{itemize}
    \item Negative annotations on a variable could mean negative 
    potential on that variable, so that if that variable were to fall
    out of scope potential would be gained.

    \item  Linear maps that merely safely lose potential when applied to positive annotations will symmetrically gain potential when applied to negative annotations.
\end{itemize}

The former of these issues is not
new to amortized-cost-analysis systems
that allow
negative potential, and has been handled previously
by checking that potential is non-negative before
it is dropped
\cite{gueneau2019formal,kahn2021automatic}.
Our system takes
the same approach, but must do so indirectly because our system
does not work directly with annotations.
Instead we use the principle that if both the matrix
$C \in \mathfrak C$ and annotation vector
$\vec p$ are non-negative, then so is $C\cdot \vec p$.
The latter issue is new to our setting of linear maps, but can be handled
similarly.

\subsection{Preliminaries}

\begin{figure} 
\begin{scriptsize}
\begin{mathpar}
\inferrule[V-Fun]{
    \tjudgecf \Gamma {\efun f x e} {\tfun \tau \sigma M} I
    {\emptyset}
    \\\\
    \wf V \Gamma
}{
    \wf {\vclosure V f x e} {\tfun \tau \sigma M}
}

\inferrule[V-Nil]{
}{
    \wf \enil {\tlist \tau}
}

\inferrule[V-Cons]{
    \wf {v_h} \tau 
    \\\\
    \wf {{v_t}} {\tlist \tau}
}{
    \wf {\econs {v_h} {v_t}} {\tlist \tau}
}

\inferrule[V-Bool]{
    v \in \{ \etrue, \efalse\}
}{
    \wf v \tbool
}

\inferrule[V-Context]{
    \forall x \in \dom V.\, \wf {V(x)} {\Gamma(x)}
}{
    \wf V {\Gamma}
}
\end{mathpar}
\end{scriptsize}
\vspace{-.5cm}
\caption{Value well-formedness rules}
\label{fig:val}
\end{figure}

To formulate the soundness statement for our 
type system,
it is first important to formalize what it means for
values to be well-formed and what the operational semantics
of our language is.

\subsubsubsection{\textbf{Well-Formed Values}} We use the judgment $\wf v \tau$ to mean that a value $v$ is well formed at type $\tau$. 
The well-formedness rules are 
given in \Cref{fig:val}. These rules only require
that the value be structurally appropriate for the type, 
except an additional requirement is added for function 
closures. To be well-formed at a type $\tau$, 
a function closure $\vclosure V f x e$
additionally requires that $\efun f x e$ be typable as $\tau$ in some 
type context $\Gamma$ that is well-formed with respect to the captured value environment $V$.

\begin{figure}
\begin{scriptsize}
\begin{mathpar}

\inferrule[E-App]{
    \ejudgenp {V', g\mapsto \vclosure {V'} g y e, y \mapsto v'} e v
}{
    \ejudgenp {V,f\mapsto \vclosure {V'} g y e, x \mapsto v'}
    {\eapp f x} v
}

\inferrule[E-Fun]{
}{
    \ejudgenp V {\efun f x e} {\vclosure V f x e}
}

\inferrule[E-CaseL-Nil]{
    \ejudgenp {V, \ell \mapsto \enil} {e_1} v
}{
    \ejudgenp {V, \ell \mapsto \enil} {\ecasel \ell {e_1} h t {e_2}} v
}

\inferrule[E-CaseL-Cons]{
    \ejudgenp {V, \ell \mapsto \econs {v_h} {v_t}, h \mapsto v_h, t\mapsto v_t} {e_2} v
}{
    \ejudgenp {V, \ell \mapsto \econs {v_h} {v_t}} {\ecasel \ell {e_1} h t {e_2}} v
}

\inferrule[E-Nil]{
}{
    \ejudgenp {V} {\enil} {\enil}
}

\inferrule[E-Cons]{
}{
    \ejudgenp {V,h\mapsto v_h,t\mapsto v_t} {\econs h t} {\econs {v_h} {v_t}}
}

\inferrule[E-Cond-T]{
    \ejudgenp {V, b\mapsto \etrue}  {e_1} v
}{
    \ejudgenp {V, b\mapsto \etrue} {\econd b {e_1} {e_2}} v 
}

\inferrule[E-Cond-F]{
    \ejudgenp {V, b\mapsto \efalse}  {e_2} v
}{
    \ejudgenp {V, b\mapsto \efalse} {\econd b {e_1} {e_2}} v 
}

\inferrule[E-Bool]{
    v \in \{\etrue, \efalse\}
}{
    \ejudgenp {V} v v
}

\inferrule[E-Var]{
}{
    \ejudgenp {V,x\mapsto v} {\evar x} v
}

\inferrule[E-Let]{
    \ejudgenp V {e_1} {v'}
    \\
    \ejudgenp {V,x\mapsto v'} {e_2} v
}{
    \ejudgenp V {\elet {e_1} x {e_2}} v
}
\end{mathpar}
\end{scriptsize}
\caption{Evaluation rules}
\label{fig:op}
\end{figure}

\subsubsubsection{\textbf{Operational Semantics}} We use the judgment $\ejudgenp {V} e v$ to mean that,
under value context $V$, the expression $e$ evaluates 
to the value $v$. The evaluation rules are given in \Cref{fig:op}. 
These are standard strict
evaluation rules, and do not involve any AARA features.

\subsection{Soundness Theorem}

Soundness for our linear-map-based cost-free 
type system is stated as follows:

\begin{theorem}[soundness]
    If 
    \begin{itemize}
        \item $\ejudgenp V e v$ \;\; (the expression $e$ evaluates to $v$ in program context $V$)
        \item $\wf V \Gamma$  \;\; ($V$ is well-formed with respect the type context $\Gamma$)
        \item $\tjudgecf {\Gamma} e \tau {\mathcal S} {\mathfrak C}$  \;\; ($\Gamma$ types 
        $e$ cost-freely as $\tau$ with maps $\mathcal S$ and $\mathfrak C$)
        \item $\mathfrak C \cdot \vec p \geq 0$  \;\;($\mathfrak C$ only
        maps $\vec p$ to non-negative vectors)
        \item $\vec p$ annotates $\Gamma$
    \end{itemize}
    then there is some map $M\in \mathcal S$ such that $\pot {V} {\Gamma} {\vec p} \geq \pot {(V,\iret \mapsto v)} {(\Gamma,\iret : \tau)} {M \cdot \vec p}$, i.e., converting the 
    initial annotation $\vec p$ to an annotation of the
    result by applying $M$ does not gain potential.
\end{theorem}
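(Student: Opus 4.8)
The plan is to prove soundness by structural induction on the evaluation derivation $\ejudgenp V e v$, simultaneously with the typing derivation $\tjudgecf \Gamma e \tau {\mathcal S} {\mathfrak C}$, since the two track each other closely. For each evaluation rule, the corresponding typing rule tells us which primitive maps were composed to build the maps in $\mathcal S$ and $\mathfrak C$, and we must exhibit a witness $M \in \mathcal S$ and verify the potential inequality. The load-bearing facts are the conservation properties stated just after the primitive-map definitions --- e.g.\ $\pot{(x \mapsto v_h::v_t)}{(x:\tlist\tau)}{\vec p} = \pot{(y \mapsto v_t)}{(y:\tlist\tau)}{\mshift x y \cdot \vec p}$, the analogous identity for $\mnil x$, and the fact that each primitive map other than $\mproj x$ is conservative even on vectors with negative entries --- together with the monotonicity principle that a non-negative matrix applied to a non-negative vector stays non-negative, which is how the $\mathfrak C$ side-conditions feed into the argument.

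First I would dispatch the easy base cases (\textit{E-Bool}, \textit{E-Nil}, \textit{E-Var}): here $\mathcal S$ is a singleton built from $\mi$, $\mnil\iret$, or $\mmove x \iret$, and the potential identity is immediate from the definition of $\Phi$ and the conservation lemmas, with $\mathfrak C = \emptyset$ so the hypothesis $\mathfrak C \cdot \vec p \ge 0$ is vacuous. Next the structural cases \textit{E-Cons}, \textit{E-Cond-T/F}, \textit{E-CaseL-Nil/Cons}, \textit{E-Let}: in each I apply the IH to the subderivation(s), picking the witness from the sub-$\mathcal S$, then push it through the appropriate primitive map prepended by the typing rule. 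For \textit{E-CaseL-Cons} and \textit{E-Let} I must be careful with the $\mproj{\neg\{h,t\}}$ / $\mproj{\neg x}$ projections: these \emph{discard} the potential of variables leaving scope, so strictly speaking potential could only decrease, but I need the discarded potential to be non-negative to conclude the inequality in the right direction --- and that is exactly what the $\mathfrak C$ matrices $\mproj{\{h,t\}}\cdot\mathcal T$ (resp.\ $\mproj x \cdot \mathcal T$) certify via the hypothesis $\mathfrak C \cdot \vec p \ge 0$. I also need to verify that the restricted $\mathfrak C$-hypothesis for the subderivations follows from the ambient one, which it does because each sub-$\mathfrak C$ appears (precomposed with primitive maps) inside the rule's output $\mathfrak C$, and the relevant maps are conservative so non-negativity propagates.

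The hard part is the \textit{E-App} case against rule \textit{M-Fun}. Here the closure $\vclosure{V'}g y e$ is well-formed, so by \textit{V-Fun} the body $e$ is typable in some $\Gamma'$ with $\Gamma' \vdash_{\mathit{cf}} e : \sigma \rightsquigarrow \mathcal S' \mid \mathfrak C'$ and $\mathfrak C' = \emptyset$ is \emph{not} assumed --- rather \textit{M-Fun} built $\mathcal S'$, $\mathfrak C'$ with the three inequality premises relating them to $M$. I apply the IH to the body's evaluation, obtaining a witness $M' \in \mathcal S'$ with $\Phi(V' \mid \Gamma'){\vec q} \ge \Phi((V',\iret\mapsto v)\mid(\Gamma',\iret:\sigma)){M'\vec q}$ for the annotation $\vec q$ of $\Gamma'$; here I must first check the IH's $\mathfrak C'$-hypothesis, which is where the premise $\mathfrak C' \cdot \mzero{\dom{\Gamma'}} \ge 0$ combines with non-negativity of $\vec q$ restricted to the argument/constant indices. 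Then the premise $\mproj{\{y,\iret,\iconst\}}\cdot\mathcal S'\cdot\mzero{\dom{\Gamma'}} \ge \mproj{\{y,\iret,\iconst\}}\cdot M \cdot \mmove y \iarg$ lets me replace $M'$ by $M$ (since on a \emph{non-negative} annotation an entrywise-larger matrix yields more potential --- the monotonicity of $\Phi$ in its annotation argument on the relevant positive cone), and the premise $\mproj{\neg\{y,\iret,\iconst\}}\cdot\mathcal S'\cdot\mzero{\dom{\Gamma'}} \ge 0$ ensures the potential left on captured variables of the closure (which is then discarded, since it is not visible at the call site) is non-negative, so discarding it only decreases potential. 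Assembling: the output map at the call is $M \cdot \mmove x \iarg$, and chaining the inequality through $\mmove x \iarg$'s conservation gives the desired $\Phi(V\mid\Gamma){\vec p} \ge \Phi((V,\iret\mapsto v)\mid(\Gamma,\iret:\tau)){(M\cdot\mmove x\iarg)\vec p}$. The subtle points to get right are the bookkeeping of which indices are present after the direct-sum-with-identity extension of each matrix, and carefully distinguishing ``potential decreases because a non-negative amount is discarded'' (sound) from ``potential changes because a matrix has negative entries'' (needs the $\mathfrak C$ certificate); I expect the \textit{E-App}/\textit{M-Fun} interaction and the scope-exit cases to absorb essentially all the real work.
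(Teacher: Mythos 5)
Your proposal matches the paper's proof in both structure and substance: the same lexicographic induction on the evaluation derivation followed by the typing derivation, the same case analysis (with \textit{E-App}/\textit{M-Fun} inverting through \textit{V-Fun} to recover the body's typing, using the monotonicity fact that $\vec p \geq 0$ and $M \leq N$ imply $M\vec p \leq N\vec p$ to swap the inductive witness for $M$, and using the $\mathfrak C$ certificates to justify discarding out-of-scope potential), and the same reliance on the conservation properties of the primitive maps. No gaps; this is essentially the paper's argument.
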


\begin{proof}
The property is established via
lexicographic induction over the evaluation judgment followed by the typing judgment.
Critically, we make use of the fact that
if $\vec p \geq 0$ and $M \leq N$, then $M \cdot \vec p \leq N \cdot \vec p$. See \Cref{sec:details} for details.
\end{proof}

It is easier to understand why this statement is a good 
formalization of soundness
when specializing the theorem to function applications, which are the real
focus of the cost-free type system:

\begin{corollary}[soundness for function application]
\label{cor:f}
If 
    \begin{itemize}
        \item $\ejudgenp V {\eapp {f} {x}} v$ \;\; (applying function $f$ to argument $x$ evaluates to $v$ in program context $V$)
        \item $\wf V \Gamma$  \;\; ($V$ is well-formed with respect the type context $\Gamma$)
        \item $\Gamma(f) = \tfun \tau \sigma M$ \; and \; $\Gamma(x) = \tau$ \;\; ($\Gamma$ gives $f$ and $x$ these cost-free types)
        \item $\vec p \geq 0$ annotates $x$
    \end{itemize}
    then $\pot {V(x)} \tau {\vec p} \geq \pot v \sigma {M\cdot \vec p}$, i.e.,
    the potential of the argument annotated 
with $\vec p$ is an upper bound on the potential 
of the return annotated with $M \cdot \vec p$.
\end{corollary}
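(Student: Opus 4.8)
The strategy is to read the corollary off the soundness theorem by instantiating it at the expression $\eapp f x$. Only two of the theorem's hypotheses are not already given verbatim: a typing judgment for $\eapp f x$, and an annotation of the whole context. For the first: since $\Gamma(f) = \tfun \tau \sigma M$ and $\Gamma(x) = \tau$, we may write $\Gamma = \Gamma', f:\tfun \tau \sigma M, x:\tau$, and then the rule \textit{M-App}---which has no premises---immediately gives $\tjudgecf{\Gamma}{\eapp f x}{\sigma}{\{M\cdot\mmove x \iarg\}}{\{\mproj{\{x,\iconst\}}\}}$. So here $\mathcal S$ is the singleton $\{M\cdot\mmove x \iarg\}$ and $\mathfrak C$ the singleton $\{\mproj{\{x,\iconst\}}\}$.

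For the second, I would extend the given $\vec p$ (an annotation of $x$) to an annotation $\vec q$ of $\Gamma$ by setting $\vec q_{x\idot i} = \vec p_i$ for $i\in\inds\tau$ and $\vec q_j = 0$ for all remaining indices $j$ (the indices of the other variables in $\Gamma$, and $\iconst$). Three things then need to be checked, all by direct computation from the definitions in \Cref{fig:pot} and \Cref{fig:prim}: (i) $\vec q$ is an annotation of $\Gamma$; (ii) the side-condition $\mathfrak C\cdot\vec q\geq 0$ holds, because $\mproj{\{x,\iconst\}}\cdot\vec q$ equals $\vec p\geq 0$ on the $x$-block, $\vec q_\iconst = 0$ on the constant slot, and is zero elsewhere; and (iii) $\pot V\Gamma{\vec q} = \vec q_\iconst + \sum_{y:\rho\in\Gamma}\pot{V(y)}{\rho}{\lambda i.\,\vec q_{y\idot i}} = \pot{V(x)}{\tau}{\vec p}$, since every summand other than the one for $x$ vanishes and $\vec q_\iconst = 0$. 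Together with the given $\wf V\Gamma$ and the evaluation $\ejudgenp V{\eapp f x}v$, all hypotheses of the soundness theorem are then met.

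Applying the theorem produces some $M'\in\mathcal S$ with $\pot V\Gamma{\vec q}\geq\pot{(V,\iret\mapsto v)}{(\Gamma,\iret:\sigma)}{M'\cdot\vec q}$; since $\mathcal S$ is a singleton, $M' = M\cdot\mmove x \iarg$. It remains to identify the right-hand side with $\pot v\sigma{M\cdot\vec p}$. Because $\vec q$ vanishes off the $x$-block, $\mmove x \iarg\cdot\vec q$ is the vector carrying $\vec p$ on the $\iarg$-block and $0$ everywhere else; applying $M$ (extended as $M\oplus I$, which sends the argument block to $0$) then yields a vector supported on the $\iret$-block and $\iconst$, where it agrees with $M\cdot\vec p$ under the index identification used for function-type annotations. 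Unfolding $\pot{(V,\iret\mapsto v)}{(\Gamma,\iret:\sigma)}{\cdot}$ at this vector, every variable of $\Gamma$ contributes $0$ and only the $\iret:\sigma$ summand (together with the $\iconst$ component) remains, giving exactly $\pot v\sigma{M\cdot\vec p}$. Chaining the two inequalities yields $\pot{V(x)}{\tau}{\vec p}\geq\pot v\sigma{M\cdot\vec p}$.

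I do not expect a genuine obstacle here: the mathematical content lives entirely in the soundness theorem, and this corollary is bookkeeping. The step most likely to trip one up is the choice of the extended annotation $\vec q$---it must be non-negative on the $x$- and $\iconst$-indices so the theorem's side-condition $\mathfrak C\cdot\vec q\geq 0$ is discharged, and zero elsewhere so that $\pot V\Gamma{\vec q}$ collapses to $\pot{V(x)}{\tau}{\vec p}$ and no stray potential is left on dead variables after the application. The only other fuss is reindexing between the bare index sets $\inds\tau,\inds\sigma$ used by $\Phi$ on individual values and the $\iarg$-/$\iret$-prefixed sets used by the matrices (and bundling the $\iconst$ component of $M\cdot\vec p$ into $\pot v\sigma{\cdot}$, per the convention of the corollary's statement). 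An alternative route, avoiding the reduction altogether, is to extract the corollary directly from the \textit{E-App}/\textit{M-App} case of the theorem's own induction.
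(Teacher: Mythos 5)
Your proposal is correct and matches the paper's (implicit) treatment: the corollary is stated as an immediate specialization of the soundness theorem, obtained exactly as you do---apply \textit{M-Fun}-free rule \textit{M-App} to get the singleton sets $\mathcal S=\{M\cdot\mmove x \iarg\}$ and $\mathfrak C=\{\mproj{\{x,\iconst\}}\}$, extend $\vec p$ by zero to an annotation of $\Gamma$ so that the non-negativity side condition and the collapse of $\Phi$ to the $x$-summand both hold, and then read off the inequality. The only delicate point, which you already flag, is the index bookkeeping between $\inds\tau$, $\iarg\idot\inds\tau$, and the $\iconst$ slot when identifying the theorem's context-level potential with the corollary's value-level $\pot v \sigma {M\cdot\vec p}$; this is a convention of the corollary's statement rather than a gap in your argument.
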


Intuitively, \Cref{cor:f} tells us that a potential annotation 
$\vec p$ on a function argument 
can be safely transformed into an annotation $M\cdot \vec p$
for the function return. In the setting of cost-free typing,
this transformation 
provides a way to reallocate excess amounts of potential
on the argument to the return. And by design, this 
transformation of multiplying by the matrix $M$ involves only linear 
arithmetic, which allows it to integrate into 
the linear programming that AARA already uses.

\section{Automation}
\label{sec:aut}

In this section, we explain how to automate
type checking and inference for our cost-free type system.
In particular, we explain
how type checking is always
efficiently reducible to simple linear arithmetic
and how type inference can usually be reduced to 
linear programming.
We discuss obstacles to further automation in \Cref{sec:lim}.

\subsubsubsection{\textbf{Checking}}
Types can be checked directly in our type system.
The un-annotated base types can be checked via standard 
techniques, and then all that is left is to confirm
the annotations of functions. However,
once each function is annotated with a concrete matrix,
confirming the correctness of those matrices to checking the 3 matrix
inequalities over the sets $\mathcal S$ and $\mathfrak C$ in the rule \textit{M-Fun}, which is mostly
a matter of linear arithmetic. In detail: 

\begin{enumerate}
    \item All the maps of $\mathcal S$ and $\mathfrak C$ must be
    generated through the typing rules' matrix multiplications.

    \item Pointwise inequalities between matrices must be set up 
    and checked according to \textit{M-Fun} (with inequalities over
    havocked elements $*$ from $\mnil x$ being filtered out). 
\end{enumerate}

Thus,
the efficiency of type checking depends only on
how quickly the matrices of the sets $\mathcal S$ and $\mathfrak C$
can be generated,
and how quickly those equalities can be checked.

Let the longest execution path through a function's body be comprised of
$n$ subexpressions, and
let there be up to $k$ lists present in scope 
at a time.
Note the following asymptotic bounds:

\begin{itemize}
    \item $\Theta(n)$ matrix multiplications are used to obtain 
    each $M\in \mathcal S \cup \mathfrak C$
    \item the dimension of all matrices involved is $O(k\cdot d) $
    \item $|\mathcal S| + |\mathfrak C| = O(2^n)$ 
\end{itemize}

These bounds can be combined to provide a bound on
how quickly the matrix inequalities of \textit{M-Fun} can be generated.
Letting $\omega \in [2,3]$ be 
the exponent associate with matrix multiplication, that bound is
$O(2^n \cdot  n \cdot  (k\cdot d)^\omega)$. Then there
are only $O((k \cdot d)^2)$ matrix elements to 
check inequalities over, 
so $O(2^n \cdot  n \cdot  (k\cdot d)^\omega)$ gives an upper bound on
the time complexity of the entire process.

Note that the only
super-polynomial part of this complexity comes from the number of paths through
the function body being $O(2^n)$ in the worst case. In practice,
people do not usually write code in a way that hits this
upper bound. In particular, if no let-bound expression involves branching, then the number of 
paths is $O(n)$, yielding a
time complexity of
$O(n^2 \cdot  (k\cdot d)^\omega)$.

\subsubsubsection{\textbf{Inference}}
The only inherent difference between type inference 
and checking is that the matrices of $\mathcal S$ and 
$\mathfrak C$ now
include unknown variables from the function annotation being
inferred. As a result, the matrix 
inequalities that must hold for the rule \textit{M-Fun} are 
no longer simple arithmetic checks; instead they
require solving systems 
of inequalities over polynomials. 
However, to minimize the number of unknowns involved, one should
also impose the following ordering on type inference:

\begin{enumerate}
    \item Topologically sort functions such that $f \leq g$
    if $f$ is called in the body of $g$.

    \item Infer types in an ascending order.
\end{enumerate}

By inferring types in this order, the type of a function $f$ is 
only inferred after all the functions $f$ depends on have had 
their types inferred. Thus, the polynomial inequalities
only involve unknowns from the matrix annotating the type
of $f$ itself (for non-mutually-recursive $f$).

Such inequalities
can be solved optimally in general
using quadratically-constrained quadratic programming, which
is known to be NP-hard. However, it is common that
such polynomials are actually linear, allowing
an optimal solution to be found in polynomial
time via linear programming. For example, consider the inequalities from the 
example typing \texttt{half} 
in \Cref{sec:trules}---a linear program that appropriately 
maximizes the 
higher-degree entries of $M$ yields the exact matrix
we desired for \texttt{half}. In such a case, 
the effect on time complexity is
that the $O((k\cdot d)^2)$ inequality checks can be replaced with
$O((k\cdot d)^{r})$, where $r \geq \omega$ is the exponent associated
with linear program solving.

One can tell when linear programming is insufficient 
simply by
checking the generated constraint set. 
In such a case, one could either default
to nonlinear constraint solving or go back to the existing cost-free 
approach, depending on the tradeoffs one wishes to make. However,
we expect many cases to amenable to linear programming in 
practice, and we validate this expectation
in \Cref{subsec:realistic}.

To see why linear programming commonly suffices, consider what needs to
happen for nonlinear constraints to arise. Nonlinear terms can only
arise in the course of multiplication between 
two (possibly identical) matrices $M$ and $N$
both containing unknown variables. Such multiplication can only
occur if, along a single path through a function's body, 
two function calls are
made to functions whose types have not yet been inferred.
This circumstance can occur when calling higher-order
functions or when inferring the type
of recursive functions with non-linear recursion schemes.
Much code, including every code example 
given so far in this work, uses a linear recursion scheme,
and thus admits efficient type inference. (See \Cref{sec:lim} for
higher-order function discussion.)

Further, while nonlinear recursion
is \textit{necessary} for nonlinear constraints in a first-order setting, 
it is not \textit{sufficient}.
If $M= M' \oplus I$ and 
$N=I \oplus N'$, then $M\cdot N = M' \oplus N'$, which is linear if 
$M$ and $N$ are. This pattern arises if two recursive
calls are made to independent arguments. Thus, nonlinear constraints
require something like the chaining of function applications to define
the very function being applied. In other words, the typical
example of the problem
case is recursively defining the function \texttt{f} by
using \texttt{f (f x)} somewhere in its body. This pattern
almost never arises in real code.

\section{Experiments}
\label{sec:exp}

In this section, we experiment with our implementation of 
the new
inference algorithm.
First, we compare the algorithm's efficiency to the existing approach on parameterized synthetic code,
which---as expected---demonstrates
the relative exponential improvement in
how the new algorithm scales.
Then we run the
new algorithm on a collection of realistic list functions
to obtain an absolute sense of its performance.
The latter experiment also validates that our new algorithm
usually deals with only linear constraints.
All experiments were implemented 
in OCaml 4.12.0, 
run on a Mac with a 2.3 GHz Dual-Core Intel Core 
i5 processor, and use the Gurobi version 9.5.1 solver.

\subsection{Efficiency Comparison}

To compare our cost-free type-inference technique with the existing approach \cite{hoffmann2010amortized,HoffmannAH12} empirically,
we measured
the efficiency of
implementations of
both the existing cost-free type-inference algorithm
and our new algorithm from \Cref{sec:aut}.
(A theoretical comparison
of the two algorithms
can be found in
in \Cref{sec:further}.)

\begin{figure}
\begin{lstlisting}[xleftmargin=4.0ex]
    let g2 = 
        let g1 = fun f0 x0 = x0 in
        fun f1 x1 = case x1 of 
            | [] -> [] 
            | h1::t1 -> h1::(g1 (g1 (g1 (f1 t1))))
    in fun f2 x2 = case x2 of 
        | [] -> [] 
        | h2::t2 -> h2::(g2 (g2 (g2 (f2 x2))))
\end{lstlisting}
    \caption{Synthetic code pattern example, $c=3, \ell = 2$}
    \label{fig:synth}
\end{figure}

To measure the efficiency of these algorithms, 
we considered synthetic code exhibiting relevant patterns
that both algorithms are capable of analyzing.
Because these algorithms' runtimes 
are highly dependent on the code
patterns used, we parameterized the code
based on two salient features:
how many calls $c$ are made to a function after the function is 
defined, and the length $\ell$ of the chain of a function's 
dependencies on other functions. Both $c$ and $\ell$ not only measure features salient
for efficiency of the algorithms, but also 
naturally scale with the use of helper functions in real code. 
We also experimented with varying degrees $d$ of potential,
which is another performance-relevant parameter of interest.

The specific code pattern
we generated for our experiments is exemplified in \Cref{fig:synth}, 
where we define
$\ell$ linearly recursive functions, each of which 
calls the previously defined function $c$ times. 
This code is functionally equivalent to the identity 
function on lists.

This experiment was
designed to answer the following questions:

\fbox{\parbox{0.9\linewidth}{
\begin{enumerate}
    \item[Q1.] How well does the existing algorithm scale
    as a function of $c$ the number of calls
    to a helper function, $\ell$ the length of the dependency
    chain, and $d$ the degree?
    \item[Q2.] How well does our new algorithm scale as a function of $c$, $\ell$, and $d$?
    \item[Q3.] Does the overhead of our new algorithm 
    make it less efficient than the existing algorithm
    on simple programs without challenging patterns; if so, to what extent?
\end{enumerate}
}
}

\noindent
Our findings can be summarized as follows:

\fbox{\parbox{0.9\linewidth}{
\begin{enumerate}
    \item[A1.] The existing algorithm scales
    poorly both in terms of the number constraints 
    and time. Its efficiency
    seems to scale exponentially in each of 
    $c$, $\ell$, and $d$. On the largest cases
    in our testing range, this algorithm times
    out after days.
    \item[A2.] In comparison, our new algorithm scales
    well. {Its efficiency does not
    seem to scale exponentially in any of 
    $c$, $\ell$, and $d$.} Our new algorithm always generates
    fewer constraints than the existing 
    algorithm and
    in many cases is multiple
    orders of magnitude more efficient in terms of both time taken and constraints 
    generated. 
    \item[A3.] Despite generating fewer constraints
    than the existing algorithm, our
    new algorithm takes more time to solve those
    constraints for the smallest cases in our
    testing range. However, the difference is never more 
    than a fraction of a second.
\end{enumerate}
}
}

\smallskip
To keep {the experiments} fair, we built our 
own implementation of the existing 
approach.\footnote{We also modify 
this implementation with 
\textit{remainder contexts} \cite{kahn2021automatic,grosen2023automatic},
which aid in sharing. Remainder contexts 
make the implementation simpler 
while increasing analytical power and without significantly
altering the number of constraints.
} 
There is an available implementation of AARA called RaML \cite{hoffmann2012resource},
which uses the existing approach.
However, RaML implements many special features,
including \textit{multivariate} potential,\footnote{
    Multivariate potential is where potential is not 
    expressed merely as sums of basic functions, but also as products.
} which 
would unfairly slow down RaML's analysis. Simultaneously, RaML includes many optimizations to minimize memory
usage and constraint generation that our prototype does not
use, unfairly \textit{advantaging} RaML's analysis. 
Thus, a fresh implementation gives a more direct comparison.

We ran our implementations over the same code 
representations on the same computer with 
the same interface to the same linear-program solver, 
creating a level testing ground. In these experiments, we measured
the number of linear constraints generated, the time taken to 
generate the constraints, and the time taken to solve the
constraints for $1 \leq d \leq 6$ and $0 \leq c,\ell \leq 5$.

\begin{figure}
    \centering
    \includegraphics[scale=0.39]{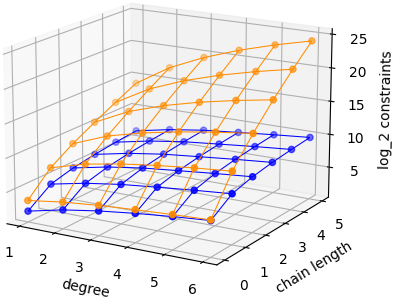}
    \includegraphics[scale=0.39]{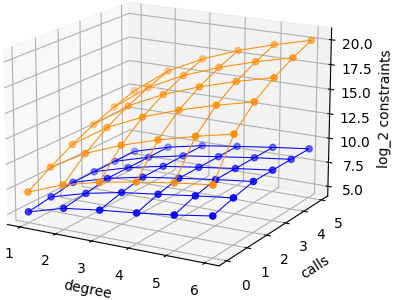}
    \includegraphics[scale=0.39]{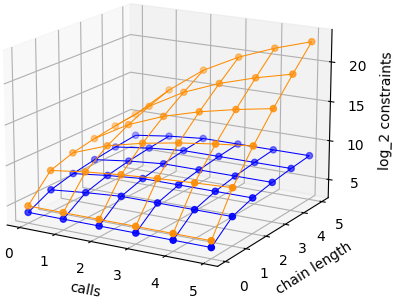}
    \includegraphics[scale=0.39]{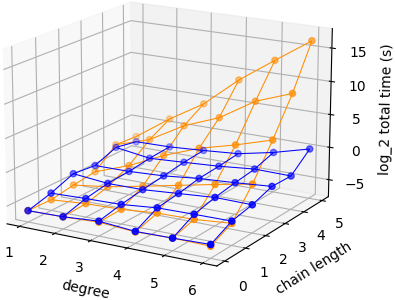}
    \includegraphics[scale=0.39]{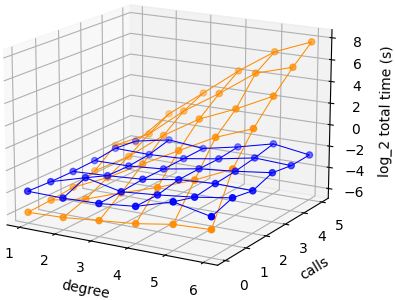}
    \includegraphics[scale=0.39]{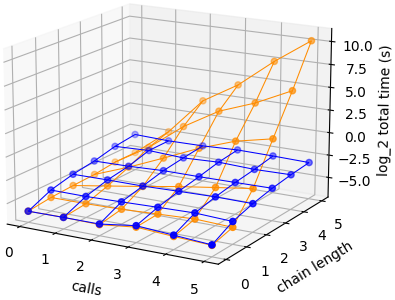}
    \\
    Key: \textcolor{orange}{\textbf{orange}} = existing approach, \textcolor{blue}{\textbf{blue}} = new approach
    \caption{Surface plots of data with one of $d,c,\ell$ fixed at 3.
      (Lower is better.)
    }
    \label{fig:rprdata}
\end{figure}

The complete experimental data can be found in \Cref{table:data1,table:data2,table:data3} in the
supplementary material.
Graphs of some selected features can be found in \Cref{fig:rprdata}.
{
Each surface plot graph shows the scaling 
effects of two of $d$, $c$, and $\ell$ while the third parameter is fixed at 3.
Note that the data points occur at mesh vertices, and that the vertical axis of each graph is on a log scale.
}

Even in this small domain, our experimental results
show that our new algorithm scales significantly better.
While both algorithms had similar runtimes
and generated similar
numbers of constraints
in the smallest cases, 
the exponential growth of the existing algorithm
took over quickly. 
Our implementation of the existing algorithm 
eventually timed out after taking
days on the largest cases, where $d+c+\ell \geq 15$,
while the new algorithm could complete
the analysis of
each of those cases in well under half a second.
One of the largest tests that both of the implementations completed,
with $d=\ell=5$ and $c=4$, saw our algorithm
take 0.13s to create 631 constraints, which were solved
in 0.19s---whereas the existing algorithm took 36s
to generate 27 million constraints, which
were solved in about 150000 seconds (43 hours). 

\subsection {Realistic Examples}
\label{subsec:realistic}

To give an absolute sense of the kind of inference
performance that could be expected on real code, as well
as to validate that nonlinear constraints do not usually 
arise, we ran our inference algorithm on various 
common list functions.
This set of functions
includes some higher-order functions, \texttt{map} and \texttt{filter}, where 
the inference modification outlined in \Cref{sec:lim} was taken.

This experiment was designed to address the following questions
about our inference algorithm:

\fbox{\parbox{0.9\linewidth}{
\begin{enumerate}
    \item[Q1.] How often does the algorithm avoid nonlinear constraints
    on realistic code?
    \item[Q2.] How quickly can the algorithm handle realistic code?
    \item[Q3.] How often does the map inferred by the algorithm
    optimally reallocate potential from input to output?
\end{enumerate}
}
}

\noindent Our findings can be summarized as follows:

\fbox{\parbox{0.9\linewidth}{
\begin{enumerate}
    \item[A1.] None of our code examples generated nonlinear constraints.
    \item[A2.] Even while working with a high degree of potential 
    (10), most code could be analyzed in under a second.
    \item[A3.] Non-optimal reallocation usually occurs in code 
    that manipulates
    multiple lists.
\end{enumerate}
}
}

The data from our
experiments can be found in \Cref{table:listfuns}. 
In particular, the table data includes
the time spent to generate constraints from the 
source code, the time spent to solve
those constraints, the total time, the total number of 
constraints generated, whether 
inference successfully yields a type, and whether the inferred 
linear map optimally reallocates potential.

\begin{table}[]
\begin{footnotesize}
\begin{tabular}{r||ccc|cc|cc}
    function & constr secs & solve secs & total secs & constr count
    & linear & infer success & optimal reallocation
    \\ \hline
    cons & 0.011 & 0.032 & 0.043 & 122 & \checkmark & \checkmark & \checkmark
    \\ 
    uncons & 0.019 & 0.029 & 0.048 & 122 & \checkmark & \checkmark & \checkmark 
    \\ 
    map & 0.036 & 0.068 & 0.104 & 158 & \checkmark & \checkmark & \checkmark 
    \\ 
    filter & 0.078 & 0.049 & 0.127 & 278& \checkmark & \checkmark & \checkmark 
    \\ 
    zip & 0.338 & 0.082 & 0.420 & 359& \checkmark & \checkmark & \checkmark 
    \\
    unzip &  0.266 & 0.081 & 0.347 & 255& \checkmark & \checkmark & \checkmark
    \\
    insert & 0.096 & 0.034 & 0.130 & 278 & \checkmark & \checkmark & \checkmark
    \\
    remove & 0.050 & 0.040 & 0.090 & 267 & \checkmark & \checkmark & \checkmark
    \\
    insertion sort & 0.117 & 0.070 & 0.186 & 544& \checkmark & \checkmark & \checkmark 
    \\
    split & 0.047 & 0.097 & 0.144 & 266 & \checkmark & \checkmark & \checkmark
    \\
    merge & 0.750 & 0.091 & 0.841 & 1009 & \checkmark & \checkmark & X
    \\
    merge sort & 2.307 & 0.329 & 2.636 & 1703 & \checkmark & \checkmark & X
\end{tabular}
\caption{The inference algorithm's performance on list functions 
at max polynomial degree 10.\vspace{-0.5cm}}
    \label{table:listfuns}
\end{footnotesize}
\end{table}

The results of our experiment show some benefits and drawbacks of
our approach to inference.
First, our approach is quite efficient,
even when working with a high degree of potential.
Second, while quadratic constraints could theoretically arise out of our approach, this experiment supports
the claim
that the mitigating factors discussed in
\Cref{sec:aut} really do keep constraints linear in practice. Even in
the case of \textit{merge sort}, which does not use
a linear recursion scheme, only linear constraints arose. 
Finally, however, the analysis of 
\textit{merge} (and thus also \textit{merge sort}) 
shows that working with
multiple lists can be a source of non-optimal reallocation of potential,
which results in looser cost bounds.
Specifically, the inferred cost-free matrix in 
these cases is only able to
reallocate constant potential, losing all higher-degree potential.
We dive deeper into why this reallocation fails in \Cref{sec:further}. Note, however,
that the maps for \textit{zip} and \textit{unzip} both optimally 
reallocate potential despite
the fact that they each manipulate
multiple lists.

\section{Further Comparison}
\label{sec:further}
To further compare our cost-free inference 
algorithm with the 
existing algorithm, this section examines the
theoretical behaviour of both
in terms of asymptotic scaling and 
cost-bound tightness. 

\subsubsubsection{\textbf{Asymptotics}}
Here we give bounds on the number of linear constraints generated
in terms of $d$, $c$, and $\ell$ for both
the existing inference algorithm and our new algorithm. 
These bounds can be compared to the empirical data obtained 
in \Cref{sec:exp}. Theoretical bounds on runtimes
can be obtained from these bounds on 
constraint counts by composing the bounds with
the complexity of linear program
solving. As the choice of linear program solver is not
essential to our type system, we do not further consider
runtime in this section.

To analyze the existing algorithm, we note the
following recurrence for $T(d,c,\ell)$, the worst-case number of constraints that the algorithm generates: 
\begin{align} 
    T(d,c,0) & = O(d)\label{base l}
    \\
    T(1,c,\ell) &= c \cdot T(1,c,\ell-1) + O(c)\label{base d}
    \\
    T(d,c,\ell) &= c \cdot T(d,c,\ell-1) + T(d-1,c,\ell) + O(cd) \label{recur}
\end{align}

This recurrence arises because: 
\begin{enumerate} 
    \item If $\ell=0$ 
    the code generated is simply the non-recursive identity function. The number of constraints 
    generated is therefore proportional 
    the size of the annotation vector $O(d)$.

    \item If $d=1$ then the algorithm needs to (i) find specialized types for each of the $c$ calls to 
    helper functions
    (with one
    less $\ell$), as well as (ii) type its own body.
    In typing the body, in the worst case, the algorithm
    generates constraints proportional to the body length $O(c)=O(cd)$.

    \item In larger cases, the algorithm needs to do all that
    was discussed
    above,
    in addition to retyping the function
    body with one less degree of potential.
\end{enumerate}
While we know no closed form for this recurrence,
we can count the number of occurrences of $T(d-i,c,\ell-j)$
as $\binom {i+j} i \cdot c^j \leq \binom {i+j} i \cdot c^\ell$, and overapproximate the constraints
generated per occurrence as $O(cd)$ (assuming $c \geq 1$). Then 
$T(d,c,\ell) \leq dc^{\ell+1}\sum_{i=0}^{d-1}\sum_{j=0}^{\ell}\binom {i+j} i = dc^{\ell+1} (\binom {d+\ell+1} d -1) = O(dc^{\ell+1}\binom {d+\ell+1} d )$ gives
an upper bound on the number of constraints generated.
While this is only an upper bound, its
poor scaling in all three considered parameters $c,d,\ell$
is qualitatively consistent with
our experimental findings from \Cref{sec:exp}.

Analyzing our new algorithm's constraint generation on the
same patterns is 
more straightforward.
Because each function is analyzed only once,
the worst-case constraints generated are proportional 
to the number of functions to analyze $O(\ell)$. In each
function, the number of non-negativity constraints
is at most proportional to the number of calls $O(c)$
and the size of the symbolic matrix $O(d^2)$.
{
Thus, in the worst-case, the number of constraints generated is $O(d^2c\ell)$.
}

We also consider the behavior as $c$ and $\ell$ get 
\textit{small} to compare performance on less extreme
cases. We do so by analyzing the case where $c=\ell=0$,
which corresponds to
a linearly recursive function calling no
helper functions. In such a case, the two algorithms
both generate $\Theta(d^2)$ constraints. 
For the existing algorithm, the constraint count is $\Theta(d^2)$ because the constraints are generated in proportion to the size of the annotation vector being constrained, which is $\Theta(d)$ in a single typing pass, and $\Theta(d)$ passes are made.
For the new algorithm, the constraint count is $\Theta(d^2)$ because the constraints generated are proportional to the size of the matrix being constrained, which is $\Theta(d^2)$.

\subsubsubsection{{\textbf{Bound Tightness}}}
Ideally, a cost-free function type would tightly express
the cost of 0 between its argument and return types,
because no potential should be spent. 
However, AARA cannot always perfectly reallocate potential 
across different data structures, so potential might
be lost instead. The less potential 
that is lost, the tighter the cost bound that AARA can 
infer.

The existing approach for inferring cost-free types 
(when it succeeds) yields the tightest bounds that AARA can get. This optimality 
occurs due to brute force: the existing
approach infers a new cost-free type 
wherever such a type could better
specialize to the present potential. Thus, as long as the approach's key
assumption is met and
inference succeeds, the type found allocates potential as 
losslessly as AARA types can 
represent. 

In contrast, our new approach may yield looser cost bounds. While 
our approach does well for code
that manipulates only one list, code that
manipulates multiple lists may introduce more loss (as in \Cref{table:listfuns}).
This loss occurs because it is easy to use multiple 
lists to make certain type
annotations dependent upon the minimum
of the annotations across the lists, but 
minimum is not a linear function. The following projection 
function \texttt{proj} provides an
example of such loss because that function would be best typed as
$\tfunshow {\potp {\tprod \tbool {\tprod {\tlistshow a \tau} {\tlistshow b \tau}}} 0} {\potp {\tlistshow {\mathit{min}(a,b)} \tau} 0}$.
Our new approach can only infer the trivial type
$\tfunshow {\potp {\tprod \tbool {\tprod {\tlistshow a \tau} {\tlistshow b \tau}}} 0} {\potp {\tlistshow {0} \tau} 0}$, where all potential is lost,
because the constant zero function is the best linear function
under-approximating the minimum. We hope such nonlinear functions can be addressed in future work.

\begin{lstlisting}[numbers=none]
    fun proj (b, lst1, lst2) = if b then lst1 else lst2
\end{lstlisting}

\section{Limitations and Discussion}
\label{sec:lim}

To keep our types inferable via linear programming, we have made certain 
limiting design decisions. Some of these have already been
discussed in other sections, including the minor limitations
imposed by not having potential on nested lists and by
tracking sets of matrices $\mathcal S$ and $\mathfrak C$ rather
than single matrices. 
However, others design decisions have not
yet been fully discussed.
Such decisions include 
not supporting arbitrary ways of splitting potential
between multiple uses of data structures ("full sharing")
and not supporting higher-order functions.
Here we will discuss the extent of these remaining limitations,
as well as some possible routes for generalizing
beyond them.

\subsubsubsection{\textbf{Higher-Order Functions}} 
During type inference, higher-order function arguments have both 
an unknown type annotation and an unknown body. As a result, it
is unclear what annotation is appropriate to infer. However,
this problem is not new in the setting of cost analysis. 
The cost of
running a higher-order function is usually dependent on
the cost of the function it takes as input, and therefore
would require a symbolic cost bound parameterized by
the input's cost behaviour. 
AARA cannot currently express such a bound.

To deal with this problem, previous AARA literature 
treats
higher-order functions with unknown
arguments as if their argument functions have zero cost.
This treatment has the effect of only reporting
the cost behaviour of the higher-order function itself.
Then when given a concretely-typed argument function,
AARA will macro-expand and retype the higher-order function to take
the known cost behaviour into account. Our technique
can handle higher-order functions in the same way
without raising any additional conceptual issues.

\subsubsubsection{\textbf{Full Sharing}} 
Potential on a variable must be 
split between all its uses. Thus, 
if standard AARA would annotate $x$ with $\vec a$, then 
$\vec b\geq 0$ could go to one use and 
$\vec c\geq 0$ to another, where $\vec a = \vec b + \vec c$. 

While such sharing constraints are linear, the corresponding
\textit{map} acting on $\vec a$ 
introduces nonlinearity. 
Such a sharing map looks like $p \cdot \mmove x y + (1-p) \cdot \mmove x z$, where
$y$ and $z$ stand in for two
uses of $x$ and $p \in [0,1]$. This map 
involves the unknown $p$, which means that composing
with other matrices
involving unknowns (like those for recursive calls) can result in nonlinear terms.

To avoid this problem, our type system codifies the 
essentially-arbitrary choice that all potential is allocated to a
variable's first use. This choice is sufficient
for typing many functions of interest. In addition, our 
implementation augments the type system with \textit{remainder contexts}
\cite{kahn2020exponential,grosen2023automatic},
which can recover
unused potential in certain circumstances, regaining
some of the benefits of full sharing.

\subsubsubsection{\textbf{Nonlinear Constraints}} 
Given how often nonlinear constraints
have come up,
it is natural to ask whether nonlinear
constraint solvers are efficient enough to reasonably
automate our type inference. Indeed, 
we explored using
quadratically-constrained quadratic programs (QCQPs) to
automate type inference. QCQPs are
like a linear program extended to allow quadratic terms,
and this power is sufficient to handle general
polynomial constraints as well.
However, unlike linear programs,
QCQPs are NP-hard to solve outside of special cases,
most notably semidefinite programs \cite{vandenberghe1996semidefinite}.
Unfortunately,
our constraints are not semidefinite, so the efficiency of
QCQP solving is a real issue.
When we experimented with a general purpose QCQP solver,
we had little success.
On simple cases,
a prototype implementation ran {for several days} without finding a solution until it timed out.
Nonetheless, \Cref{sec:exp} suggests this
level of efficiency might be competitive with the existing
approach, so this avenue
is something to consider in the future,
especially as QCQP solvers 
improve. Until then, maximizing the use of
linear constraints is the key to efficient
inference.

A second way to handle nonlinear constraints
is an iterative approach like Kleene iteration \cite{cousot1977abstract} or Newton's method \cite{esparza2010newtonian}.
Conceptually, such an algorithm obtains a solution for a set of nonlinear constraints by creating a set of recursive equations to be solved for a fixed point, fixing an initial guess, and iterating.
Each iteration should yield
a ``better'' approximation to the solution, as defined by some partial order. 
Because each approximate solution is concrete, this
process can be set up in our setting so that each iteration
only requires solving linear programs. 
However, our attempts to devise such an iterative strategy were unsuccessful because 
our system must support matrices with negative entries, like
the unshift map $\munshift x y$.
When composed, such maps transform positive entries of matrices into negative ones, and vice versa.
As a result, any partial ordering on approximate solutions
that respects iteration does not properly respect pointwise ordering.
Iteration uses the former order, but our linear programs use the latter, so they cannot be combined.

\section{Related Work}
\label{sec:relatedwork}

\subsubsubsection{\textbf{Amortized Cost Analyses}} 
There are many other program analyses that 
deal with amortized cost bounds.
These include program logics augmented with cost credits 
\cite{atkey2010amortised,gueneau2018fistful},
other type systems \cite{rajani2021unifying},
recurrence-relation solving \cite{cutler2020denotational},
term rewriting \cite{hofmann2014amortised,moser2020automated},
and more 
\cite{batz2023calculus,lu2021selectively,nipkow2019amortized,sinn2014simple,flores2016upper,fiedor2018shapes}.
However, we do not know of any non-AARA-based systems
that currently exploit the ideas behind 
cost-free types.\footnote{
For a converse example, the work of Moser and Schneckenreither
\cite{moser2020automated} is based on AARA and
uses the concept of a cost-free typing.
Their paper
also makes the observation
that cost-freedom can make constraint sets blow up 
in size.
}
The main purpose of cost-freedom is to describe soundly how 
excess potential should be allocated when composing
amortized-cost analyses. However, it is always just as sound
to simply drop excess potential or reanalyze
the program, so less automated or 
less mature systems may not have found
cost-freedom necessary.
Furthermore, the stickier problem solved by
cost-freedom, resource-polymorphic recursion, only occurs
when a variety of features come together: some form
of resource credits (like potential), nonlinear cost bounds,
and non-tail recursion. Notably, loop programs are
always tail recursive, thus they never encounter resource-polymorphic recursion.

\subsubsubsection{\textbf{Automated Cost Analyses}}
Our work is about AARA,
but there are many other approaches to automated
cost analysis. Some such systems include 
COSTA \cite{albert2008costa}, SPEED \cite{gulwani2009speed},
LOOPUS \cite{sinn2014simple}, CoFloCo \cite{flores2016upper}, 
and CHORA \cite{breck2020templates}.
These systems all operate by extracting some sort of
constraints and/or recurrences and passing them
to a solver, similarly to AARA. 
More complex problems are harder to solve,
so these systems all face an inherent trade-off
among
their levels of automation, speed, expressivity,
and completeness. Note also that our cost-free type system 
is related to recurrence-solving in that both 
mechanisms
infer cost-related functions from functional equations.

\subsubsubsection{\textbf{Linear Fixed Points}}
Our work uses linear programming to solve recursive
constraints over matrices, but there are many 
other approaches in program analysis 
that solve similar linear, fixpoint-like problems. 
Such problems have been studied since at least the 70s
with the introduction of the polyhedral
domain for abstract interpretation, which finds linear
invariants of programs \cite{cousot1978automatic}.
Abstract-interpretation work since then has focused more specifically on  inferring linear or affine transformations 
\cite{muller2007analysis,sharma2019new}, although
nothing that we know of is quite sufficient for
our purposes.
Some other relevant work includes that of
de Oliveira et al., which represents polynomials with a linear 
basis much like AARA, and uses eigenvectors of linear maps
to find loop invariants \cite{de2017synthesizing}. 
Unfortunately,
extending this idea to our domain where inequalities
matter quickly runs into the positivity problem 
(which is not known to be decidable \cite{ouaknine2012decision}).
As part of our experiments with iterative methods, discussed in \Cref{sec:lim}, we also considered the work of
Reps et al. \cite{DBLP:journals/toplas/RepsTP17}, which uses a tensor-product operation to transform linear context-free equations into regular-language equations, allowing each Newton iteration of Newtonian program analysis \cite{esparza2010newtonian} to be solved efficiently.
In our setting, however, tensors do not change the constraints generated.

\section{Conclusion}
\label{sec:conc}
Our new type system leverages linear maps to 
derive cost-free types, which are critical
for analyzing composition and non-tail recursion
in AARA. 
By carefully designing our type system, we can often reduce
type inference to linear programming.
This new cost-free inference avoids
the infinite regress of resource-polymorphic recursion 
by capturing the information of an infinite 
space of
cost-free types with a single matrix. As a result, our
technique scales exponentially better than existing methods
wherever our technique can be applied. This technique is also 
able to work with exponential cost bounds, whereas 
existing AARA methods were confined to polynomials.
Future work could explore using nonlinear constraint
solving to generalize
our approach, as well as determine how best to handle
nonlinear potential transformations like minima.


\bibliographystyle{ACM-Reference-Format}
\bibliography{sources}

\appendix

\section{Soundness Theorem} \label{sec:details}

\begin{theorem}
    If 
    \begin{itemize}
        \item $\ejudgenp V e v$ \;\; (the expression $e$ evaluates to $v$ in program context $V$)
        \item $\wf V \Gamma$  \;\; ($V$ is well-formed with respect the type context $\Gamma$)
        \item $\tjudgecf {\Gamma} e \tau {\mathcal S} {\mathfrak C}$  \;\; ($\Gamma$ types 
        $e$ cost-freely as $\tau$ with maps $\mathcal S$ and $\mathfrak C$)
        \item $\mathfrak C \cdot \vec p \geq 0$  \;\;($\mathfrak C$ only
        maps $\vec p$ to non-negative vectors)
        \item $\vec p$ annotates $\Gamma$
    \end{itemize}
    then there is some map $M\in \mathcal S$ such that $\pot {V} {\Gamma} {\vec p} \geq \pot {(V,\iret \mapsto v)} {(\Gamma,\iret : \tau)} {M \cdot \vec p}$, i.e., converting the 
    initial annotation $\vec p$ to an annotation of the
    result by applying $M$ does not gain potential
\end{theorem}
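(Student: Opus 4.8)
The plan is to prove this by the lexicographic induction stated in the sketch, with the evaluation derivation $\ejudgenp V e v$ primary and the typing derivation secondary, inverting $\tjudgecf\Gamma e \tau{\mathcal S}{\mathfrak C}$ to find which rule of \Cref{fig:trules} applies. First I would isolate four elementary facts. (i) Potential is monotone and non-negative in its annotation: $\Phi(v:\tau\mid\vec p)$ is a non-negative linear combination of the entries of $\vec p$ (each coefficient $\binom{|v|}k$ or $\stirling{|v|+1}k$ is $\geq 0$, and $\iconst$ has coefficient $1$), so $\vec p\geq\vec q$ gives $\Phi(v:\tau\mid\vec p)\geq\Phi(v:\tau\mid\vec q)$ and $\vec p\geq 0$ gives $\Phi(v:\tau\mid\vec p)\geq 0$, both extending over contexts. (ii) The fact highlighted in the sketch: $\vec p\geq 0$ and $M\leq N$ imply $M\vec p\leq N\vec p$; in particular non-negative matrices preserve non-negativity of vectors. (iii) The conservation properties of the primitive maps established in \Cref{sec:ty}: $\mshift x y$ exactly conserves potential across a cons, $\munshift x y$ and $\mnil x$ are conservative, and relabelings $\mmove x y$ preserve potential. (iv) Inversion of \Cref{fig:val}: $\wf{\vclosure V g y e_b}{\tfun\tau\sigma M}$ yields a $\Gamma'$ with $\wf V{\Gamma'}$ and an \textit{M-Fun} derivation for $\efun g y e_b$, which supplies a body typing $\tjudgecf{\Gamma', g: \tfun\tau\sigma M, y: \tau}{e_b}\sigma{\mathcal S_b}{\mathfrak C_b}$ and the three matrix inequalities; likewise $\wf{\econs{v_h}{v_t}}{\tlist\tau}$ splits into $\wf{v_h}\tau$ and $\wf{v_t}{\tlist\tau}$.

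The structural cases are short: in \textit{M-Bool}, \textit{M-Var}, \textit{M-Nil}, \textit{M-Cons}, and \textit{M-Fun} the map is an identity, a relabeling, or the produced value carries zero potential, so the inequality is immediate from (iii); \textit{M-Cond} picks the branch taken and inherits its map from $\mathcal S\cup\mathcal T$. For \textit{M-Let}, apply the IH to $e_1$ to get $N\in\mathcal S$, set $\vec q=\mmove\iret x\cdot N\cdot\vec p$ (which annotates $\Gamma, x: \sigma$ and satisfies $\mathfrak D\cdot\vec q\geq 0$ and $\mproj x\cdot\mathcal T\cdot\vec q\geq 0$, both coming from the let rule's $\mathfrak C$-set instantiated at $N$), apply the IH to $e_2$ to get $T\in\mathcal T$, and chain: use relabeling-invariance to move $\iret$ to $x$, then use $\mproj x\cdot T\cdot\vec q\geq 0$ with (i) to see that dropping $x$ at the end of $e_2$ only loses potential; the witness $\mproj{\neg x}\cdot T\cdot\mmove\iret x\cdot N$ is in the let rule's $\mathcal S$-set. \textit{M-CaseL} is analogous: in the cons branch, precompose with $\mshift\ell t\cdot\mzero h$ — by (iii) this keeps the left-hand potential equal to $\Phi(V:\Gamma, \ell: \tlist\tau\mid\vec p)$ — apply the IH to $e_2$, and use non-negativity of $\mproj{\{h,t\}}\cdot\mathcal T$ from $\mathfrak C$ to discard $h$ and $t$; the nil branch precomposes with $\mnil\ell$, which preserves the potential of $\enil$.

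The main obstacle is the function-application case, because the \textit{M-App} typing derivation is a leaf: the IH must be taken on the strictly smaller evaluation of the closure body $e_b$, while the body typing and the three \textit{M-Fun} inequalities must be recovered by inverting the closure's well-formedness via (iv). I would set $\vec q=\mmove x y\cdot\mproj{\{x,\iconst\}}\cdot\vec p$ (the closure environment extended by the argument's potential), observe $\vec q\geq 0$ from the \textit{M-App} side condition and $\vec q=\mzero{\dom{\Gamma'}}\cdot\vec q$, and conclude $\mathfrak C_b\cdot\vec q\geq 0$ from $\mathfrak C_b\cdot\mzero{\dom{\Gamma'}}\geq 0$ by the corollary in (ii); this licenses the IH, yielding $M_b\in\mathcal S_b$. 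The remainder is index bookkeeping: the first two \textit{M-Fun} inequalities, instantiated at $M_b$ and evaluated at $\vec q$ using (ii), give $(M_b\vec q)|_{\{y,\iret,\iconst\}}\geq(M\cdot\mmove y\iarg\cdot\vec q)|_{\{y,\iret,\iconst\}}$ and $(M_b\vec q)|_{\Gamma'}\geq 0$; since $M$ reads only the $\iarg,\iconst$ slots, $(M\cdot\mmove y\iarg\cdot\vec q)|_{\{\iret,\iconst\}}=(M\cdot\mmove x\iarg\cdot\vec p)|_{\{\iret,\iconst\}}$; and the caller's own context cancels between the two sides of the goal, so the inequality reduces to argument-plus-free potential on the left dominating return-plus-free potential on the right, which follows by combining the above with (i). A recurring nuisance is the havoc symbol $*$ in $\mnil x$: statements touching its $*$ rows hold only for some concretization, and in the soundness direction one always chooses the concretization matching the current value — which has no potential in those slots — so it never bites; I would handle this once at the outset and then suppress it.
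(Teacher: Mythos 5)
Your proposal is correct and follows essentially the same route as the paper's own proof: the same lexicographic induction (evaluation derivation primary, typing derivation secondary), the same recovery of the body typing and the three \textit{M-Fun} inequalities by inverting closure well-formedness in the application case, the same choice of $\vec q$ there and in the let case, and the same use of the $\mathfrak C$ non-negativity conditions to justify discarding out-of-scope variables. The only additions are presentational (isolating the four elementary facts up front and handling the havoc symbol $*$ once at the outset), which the paper leaves implicit.
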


\begin{proof}

We proceed by lexicographic induction over the derivation of 
$\ejudgenp V e v$, followed by the derivation of 
$\tjudgecf {\Gamma} e \tau {\mathcal S} {\mathfrak C}$. 

\subsubsubsection{\textbf{application}}

Suppose that \textit{E-App} was the last rule 
applied to derive
$\ejudgenp {V, f \mapsto \vclosure {V'} g y e, x\mapsto v'} {\eapp f x} v$.

\[
\inferrule[E-App]{
    \ejudgenp {V', g\mapsto \vclosure {V'} g y e, y \mapsto v'} e v
}{
    \ejudgenp {V,f\mapsto \vclosure {V'} g y e, x \mapsto v'}
    {\eapp f x} v
}
\]

Then we know the rule's premise holds by inversion. 

At the same time, only one typing rule, \textit{M-App}, types the 
application expression, so \textit{M-App} must have been the last rule 
applied.

\[\inferrule[M-App]{
}{
    \tjudgecf {\Gamma, f: \tfun \tau \sigma M, x: \tau} {\eapp f x} \sigma { \{ M\cdot \mmove x \iarg \} } {\{ \mproj { \{ x,\iconst \} } \}}
}\]

This tells us that $\mathfrak C = \{ \mproj { \{ x, \iconst \} } \}$. As a result, the condition that $\mathfrak C \cdot \vec p \geq 0$ 
just amounts to $\mproj { \{ x, \iconst \} } \cdot \vec p \geq 0$.

Because we know the whole context is well-formed by assumption, 
we know in particular
that
$\wf {\vclosure {V'} g y e} {\tfun \tau \sigma  M} $, and so
the well-formedness rule for function values must
apply.

\[
\inferrule[V-Fun]{
    \tjudgecf {\Gamma'} {\efun g y e} {\tfun \tau \sigma M} I
    {\emptyset}
    \\
    \wf {V'} {\Gamma'}
}{
    \wf {\vclosure {V'} g y e} {\tfun \tau \sigma M}
}
\]

Thus by inversion, there must also exist a typing 
derivation for the function $f$. Only one rule, \textit{M-Fun}, 
derives the function type, so \textit{M-Fun} must have 
been the last rule applied for the derivation of 
that typing judgment, and we know its premises hold by inversion.

\[
\inferrule[M-Fun]{
    \tjudgecf {\Gamma', g: \tfun \tau \sigma M, y: \tau} e \sigma {\mathcal S'} {\mathfrak C'}
    \\
     \mproj  {\{y, \iret,\iconst \} } \cdot \mathcal S' \cdot \mzero {\dom{\Gamma'}} 
     \geq \mproj {\{ y, \iret,\iconst \} } \cdot M \cdot \mmove y \iarg 
    \\
    \mproj  {\neg \{y, \iret,\iconst \} } \cdot \mathcal S' \cdot \mzero {\dom{\Gamma'}} \geq 0
    \\
    \mathfrak C' \cdot \mzero {\dom{\Gamma'}} \geq 0
}{
    \tjudgecf {\Gamma'} {\efun g y e} {\tfun \tau \sigma M}
    {\{ \mi \}} {\emptyset }
}
\]

Now note the following chain of implications which 
we will make use of later:
\begin{align*}
    \mproj { \{ x, \iconst \} } \cdot \vec p \geq 0 & \implies \mmove x y \cdot \mproj { \{ x, \iconst \} } \cdot \vec p \geq 0 & \textrm{relabeling}
    \\
    & \implies \mathfrak C'\cdot \mzero {\dom {\Gamma'}} \cdot \mmove x y \cdot \mproj { \{ x, \iconst \} } \cdot \vec p \geq 0 &  \mathfrak C'\cdot \mzero {\dom {\Gamma'}} \geq 0
\end{align*}

Letting $\mmove x y \cdot \mproj { \{ x, \iconst \} } \cdot \vec p = \vec q$, we can now collect the following statements:
\begin{itemize}
    \item $\ejudgenp {V',  g \mapsto \vclosure {V'} g y e, y\mapsto v'} e v$ from a premise
    
    \item $\wf {\{V',  g \mapsto \vclosure {V'} g y e, y\mapsto v'\}} {\{\Gamma', g: \tfun \tau \sigma M, y: \tau\}}$
    by applying \textit{V-Context}
    
    \item $\tjudgecf {\Gamma', g: \tfun \tau \sigma M, y: \tau} e \sigma 
    {\mathcal S'} {\mathfrak C'}$ from a premise
    
    \item $\mathfrak C'\cdot \mzero {\dom {\Gamma'}} \cdot \vec q \geq 0$ because $\mproj x \cdot p \geq 0$
    by assumption
    
    \item $\mzero {\dom {\Gamma'}} \cdot \vec q$ annotates $\Gamma', g: \tfun \tau \sigma M, y: \tau$
\end{itemize}

We may therefore apply our inductive hypothesis
to complete this case as follows, 
letting $N$ be the map guaranteed in $\mathcal S'$, and
recalling $\vec q \geq 0$:

\scalebox{0.6}{
\parbox{\linewidth}{
\begin{align*} 
& \pot {\{V',  g \mapsto \vclosure {V'} g y e, y\mapsto v'\}} {\{\Gamma', g: \tfun \tau \sigma M, y: \tau\}} { \mzero {\dom {\Gamma'}} \cdot \vec q} 
\\
&\geq \pot {\{V',  g \mapsto \vclosure {V'} g y e, y\mapsto v', \iret\mapsto v\}} {\{\Gamma', g: \tfun \tau \sigma M, y: \tau,\iret : \sigma\}} {N \cdot\mzero {\dom {\Gamma'}} \cdot \vec q}
& IH
\\
\implies 
& \pot { \{ y\mapsto v'\}} { \{y: \tau\}} {\vec q}
\\
&\geq \pot {\{V', y\mapsto v', \iret\mapsto v\}} {\{\Gamma', y: \tau,\iret : \sigma\}} { N \cdot\mzero {\dom {\Gamma'}} \cdot \vec q}
& \textrm{0 potential assigned}
\\
&\geq \pot {\{ y\mapsto v', \iret\mapsto v\}} {\{ y: \tau,\iret : \sigma\}} {\mproj  { \{y, \iret,\iconst \} } \cdot N \cdot\mzero {\dom {\Gamma'}} \cdot \vec q}
&  \mproj  {\neg \{y, \iret,\iconst \} } \cdot N \cdot \mzero {\dom{\Gamma'}} \geq 0
\\
&\geq \pot {\{ y\mapsto v', \iret\mapsto v\}} {\{ y: \tau,\iret : \sigma\}} {\mproj {\{y, \iret, \iconst\}} \cdot M \cdot \mmove y \iarg \cdot \vec q}
& \mproj  {\{ y, \iret,\iconst \} } \cdot N \cdot \mzero {\dom{\Gamma'}}
\geq \mproj {\{y, \iret, \iconst\}} \cdot M \cdot \mmove y \iarg
\\
&=\pot {\{ y\mapsto v', \iret\mapsto v\}} {\{ y: \tau,\iret : \sigma\}} { M \cdot \mmove y \iarg \cdot \vec q}
& \textrm{indices not present}
\\
\implies &
 \pot { \{ y\mapsto v'\}} { \{y: \tau\}} { \mmove x y \cdot \mproj x \cdot \vec p }
 \\
 & \geq \pot {\{ y\mapsto v', \iret\mapsto v\}} {\{ y: \tau,\iret : \sigma\}} {M \cdot \mmove y \iarg \cdot \mmove x y \cdot \mproj x \cdot \vec p }
 & \mmove x y \cdot \mproj { \{ x, \iconst \} } \cdot \vec p = \vec q
 \\
\implies &
 \pot { \{ x\mapsto v'\}} { \{x: \tau\}} { \mmove x x \cdot \mproj { \{ x, \iconst \} } \cdot \vec p }
 \\
 & \geq \pot {\{ x\mapsto v', \iret\mapsto v\}} {\{ x: \tau,\iret : \sigma\}} {M \cdot \mmove x \iarg \cdot \mmove x x \cdot \mproj { \{ x, \iconst \} } \cdot \vec p }
 & \textrm{relabeling}
 \\
 \implies &
 \pot { \{ x\mapsto v'\}} { \{x: \tau\}} { \mproj { \{ x, \iconst \} } \cdot \vec p }
 \\
 & \geq \pot {\{ x\mapsto v', \iret\mapsto v\}} {\{ x: \tau,\iret : \sigma\}} {M \cdot \mmove x \iarg\cdot \mproj { \{ x, \iconst \} } \cdot \vec p }
 & \textrm{simplification}
 \\
 \implies &
 \pot { \{ x\mapsto v'\}} { \{x: \tau\}} { \vec p }
 \\
 & \geq \pot {\{ x\mapsto v', \iret\mapsto v\}} {\{ x: \tau,\iret : \sigma\}} {M \cdot \mmove x \iarg\cdot \vec p }
 & \textrm{indices not present}
 \\
\implies & \pot {V, f\mapsto \vclosure {V'} g y e, x\mapsto v'} {\Gamma, f: \tfun \tau \sigma M, x: \tau} {\vec p} 
\\
&\geq \pot {V, f\mapsto \vclosure {V'} g y e, x\mapsto v', \iret \mapsto v} {\Gamma, f: \tfun \tau \sigma M, x: \tau, \iret : \tau} {M \cdot \mmove x \iarg \cdot \vec p}
& \textrm{adding like terms}
\end{align*}
}
}

Because $\mathcal S$ contains only $M \cdot \mmove x \iarg$, this case is
complete.

\subsubsubsection{\textbf{case list empty}}

Suppose that \textit{E-CaseL-Nil} was the last evaluation rule 
applied. Then we know the rule's premise must hold by inversion. 

\[
\inferrule[E-CaseL-Nil]{
    \ejudgenp {V, \ell \mapsto \enil} {e_1} v
}{
    \ejudgenp {V, \ell \mapsto \enil} {\ecasel \ell {e_1} h t {e_2}} v
}
\]

Only one typing rule, \textit{M-CaseL}, types the list 
casing expression, so \textit{M-CaseL} must have been the 
last rule applied for the derivation of
the typing judgment. Thus by inversion,
we know its premises must hold as well. In particular,
the left side premise holds.

\begin{tiny}
\[
\inferrule[M-CaseL]{
    \tjudgecf {\Gamma, \ell : \tlist \tau } {e_1} \sigma {\mathcal S} {\mathfrak C }
    \\
    \tjudgecf {\Gamma,  \ell : \tlist \tau, h:\tau, t : \tlist \tau } {e_2} \sigma {\mathcal T}
    {\mathfrak D}
}{
    \tjudgecf {\Gamma, \ell : \tlist \tau } {\ecasel \ell {e_1} h t {e_2}}
    \sigma
    {
        \mathcal S \cdot \mnil \ell 
        \cup \mproj {\neg \{h,t\} } \cdot \mathcal T \cdot \mshift \ell t \cdot \mzero h
    }
    {
    \mathfrak C \cdot \mnil \ell
    \cup 
    (\mproj {\{ h, t\}} \cdot \mathcal T \cup \mathfrak D) \cdot \mshift \ell t\cdot \mzero h
    }
}
\]
\end{tiny}

Now because $(\mathfrak C \cdot \mnil \ell
    \cup 
    (\mproj {\{ h, t\}} \cdot \mathcal T \cup \mathfrak D) \cdot \mshift \ell t\cdot \mzero h) \cdot \vec p \geq 0$ by assumption,
    we know in particular that 
    $\mathfrak C \cdot \mnil \ell \cdot \vec p \geq 0$. This 
    tells us that $\mnil \ell \cdot \vec p$ is a vector we may use
    with our inductive hypothesis.
    
    Applying the inductive hypothesis allows us to find 
an $M \in {\mathcal S}$ where the following holds:

\begin{align*}
\pot {(\iret \mapsto v)} {(\iret :\sigma)} 
{M \cdot \mnil \ell \cdot \vec p} \leq 
& \pot {V, \ell \mapsto \enil} 
{\Gamma, \ell : \tlist \tau } 
{\mnil \ell \vec p} & IH
\\
=&\pot {V, \ell \mapsto \enil} {\Gamma, \ell : \tlist \tau } 
{\vec p} & \textrm{empty list 0 potential}
\end{align*}

Thus $M  \cdot \mnil \ell \in \mathcal S \cdot \mnil \ell$ is a sufficient matrix to witness this case.

\subsubsubsection{\textbf{case list cons}}

Suppose that \textit{E-CaseL-Cons} was the last evaluation rule 
applied. Then we know the rule's premise must hold by inversion.

\[
\inferrule[E-CaseL-Cons]{
    \ejudgenp {V, \ell \mapsto \econs {v_h} {v_t}, h \mapsto v_h, t\mapsto v_t} {e_2} v
}{
    \ejudgenp {V, \ell \mapsto \econs {v_h} {v_t}} {\ecasel \ell {e_1} h t {e_2}} v
}
\]

Only one typing rule, \textit{M-CaseL}, types the list 
casing expression, so \textit{M-CaseL} must have been the 
last typing rule applied. Thus by inversion, we also know 
this rule's premises. In particular, we know the right side premise holds.

\begin{tiny}
\[
\inferrule[M-CaseL]{
    \tjudgecf {\Gamma, \ell : \tlist \tau } {e_1} \sigma {\mathcal S} {\mathfrak C }
    \\
    \tjudgecf {\Gamma,  \ell : \tlist \tau, h:\tau, t : \tlist \tau } {e_2} \sigma {\mathcal T}
    {\mathfrak D}
}{
    \tjudgecf {\Gamma, \ell : \tlist \tau } {\ecasel \ell {e_1} h t {e_2}}
    \sigma
    {
        \mathcal S \cdot \mnil \ell 
        \cup \mproj {\neg \{h,t\} } \cdot \mathcal T \cdot \mshift \ell t \cdot \mzero h
    }
    {
    \mathfrak C \cdot \mnil \ell
    \cup 
    (\mproj {\{ h, t\}} \cdot \mathcal T \cup \mathfrak D) \cdot \mshift \ell t\cdot \mzero h
    }
}
\]
\end{tiny}

Now because $(\mathfrak C \cdot \mnil \ell
    \cup 
    (\mproj {\{ h, t\}} \cdot \mathcal T \cup \mathfrak D) \cdot \mshift \ell t\cdot \mzero h) \cdot \vec p \geq 0$ by assumption,
    we know in particular that 
    $\mproj {\{ h, t\}} \cdot \mathcal T\cdot\mshift \ell t\cdot \mzero h \cdot \vec p \geq 0$ and $\mathfrak D \cdot \mshift \ell t\cdot \mzero h \cdot \vec p \geq 0$.
    The latter tells us that $\mshift \ell t\cdot \mzero h \cdot \vec p$
    is a sufficient vector we may use with our inductive hypothesis.

Applying the inductive hypothesis allows us to find 
an $M \in {\mathcal T}$ where the following holds:

\begin{tiny}
\begin{align*}
& \pot {(V, \ell \mapsto  \econs {v_h} {v_t}, h \mapsto v_h, t \mapsto v_t)} {(\Gamma,  \ell : \tlist \tau, h:\tau, t : \tlist \tau )} {\mshift \ell t \cdot \mzero h \cdot \vec p}
\\
&\geq \pot {(V, \ell \mapsto  \econs {v_h} {v_t}, h \mapsto v_h, t \mapsto v_t,\iret \mapsto v)} {(\Gamma,  \ell : \tlist \tau, h:\tau, t : \tlist \tau,\iret : \sigma)} {M \cdot\mshift \ell t \cdot \mzero h \cdot \vec p }   
& IH
\\
\implies & 
\pot {(V, \ell \mapsto  \econs {v_h} {v_t}, h \mapsto v_h, t \mapsto v_t)} {(\Gamma,  \ell : \tlist \tau, h:\tau, t : \tlist \tau )} {\mshift \ell t \cdot \mzero h \cdot \vec p}
\\
&\geq \pot {(V, \ell \mapsto  \econs {v_h} {v_t},\iret \mapsto v)} {(\Gamma,  \ell : \tlist \tau,\iret : \sigma)} {\mproj {\neg \{h,t\}} \cdot M \cdot\mshift \ell t \cdot  \mzero h \cdot \vec p }  
&  \mproj { \{ h, t\}} \cdot M \cdot \mshift \ell t   \mzero h \cdot \vec p \geq 0
\\
\implies & 
\pot {(V, h \mapsto v_h, t \mapsto v_t)} {(\Gamma, h:\tau, t : \tlist \tau )} {\mshift \ell t \cdot  \mzero h \cdot \vec p}
\\
&\geq \pot {(V,\iret \mapsto v)} {(\Gamma, \iret : \sigma)} {\mproj {\neg \{h,t\}} \cdot M \cdot\mshift \ell t \cdot  \mzero h \cdot\vec p }  
&  \textrm{0 potential}
\\
\implies & 
\pot {(V, \ell \mapsto \econs {v_h} {v_t}, h\mapsto v_h)} {(\Gamma,\ell : \tlist \tau, h:\tau )} { \mzero h \cdot \vec p}
\\
&\geq \pot {(V,\iret \mapsto v)} {(\Gamma, \iret : \sigma)} {\mproj {\neg \{h,t\}} \cdot M \cdot\mshift \ell t \cdot  \mzero h \cdot \vec p }  
& \mshift \ell t \; \textrm{conservative}
\\
\implies & 
\pot {(V, \ell \mapsto \econs {v_h} {v_t})} {(\Gamma,\ell : \tlist \tau )} {  \vec p}
\geq \pot {(V,\iret \mapsto v)} {(\Gamma, \iret : \sigma)} {\mproj {\neg \{h,t\}} \cdot M \cdot\mshift \ell t \cdot  \mzero h \cdot \vec p }  
& \textrm{0 potential}
\end{align*}
\end{tiny}

Thus $\mproj {\neg \{h,t\}} \cdot M \cdot\mshift \ell t \cdot  \mzero h
\in  \mproj {\neg \{h,t\}} \cdot \mathcal T \cdot\mshift \ell t \cdot  \mzero h$ is a sufficient matrix to witness this case.

\subsubsubsection{\textbf{let}}

Suppose \textit{E-Let} was the last evaluation rule applied. 
Then we know the rule's premises must hold
by inversion.

\[
\inferrule[E-Let]{
    \ejudgenp V {e_1} {v'}
    \\
    \ejudgenp {V,x\mapsto v'} {e_2} v
}{
    \ejudgenp V {\elet x {e_1} {e_2}} v
}
\]

Only one typing rule, \textit{M-Let}, types the let 
expression, so \textit{M-Let} must have been the last typing rule 
applied. Thus by inversion, both of its premises must hold as well.

\[
\inferrule[M-Let]{
    \tjudgecf {\Gamma} {e_1} \sigma {\mathcal S} {\mathfrak C}
    \\
    \tjudgecf {\Gamma, x:\sigma} {e_2} \tau {\mathcal T} {\mathfrak D}
}{
    \tjudgecf {\Gamma} {\elet x {e_1} {e_2}} \tau 
    {\mproj {\neg x} \cdot \mathcal T \cdot \mmove \iret x \cdot \mathcal S}
    {\mathfrak C \cup (\mproj x \cdot \mathcal T \cup \mathfrak D) \cdot \mmove \iret x \cdot \mathcal S}
}
\]

Because $(\mathfrak C \cup (\mproj x \cdot \mathcal T \cup \mathfrak D) \cdot \mmove \iret x \cdot \mathcal S) \cdot \vec p
\geq 0$ by assumption, we know in particular that $\mathfrak C \cdot \vec p \geq 0$. This allows us to use the inductive hypothesis with
the left side premises to find
a map $M_1 \in {\mathcal S}$ such that the following holds:

\begin{align*}
\pot {V} {\Gamma} {\vec p} \geq& \pot {(V,\iret \mapsto v')} {\Gamma,\iret:\sigma } {M_1\cdot \vec p} & IH
\\
= & \pot {(V,x \mapsto v')} {\Gamma,\iret:\sigma } {\mmove \iret x \cdot M_1 \cdot \vec p} 
& \textrm{relabelling}
\end{align*}

Now because $(\mathfrak C \cup (\mproj x \cdot \mathcal T \cup \mathfrak D) \cdot \mmove \iret x \cdot \mathcal S) \cdot \vec p
\geq 0$ by assumption, we also know that $\mproj x \cdot \mathcal T  \cdot \mmove \iret x \cdot \mathcal S \cdot \vec p \geq 0$ and 
$\mathfrak D \cdot \mmove \iret x \cdot \mathcal S \cdot \vec p
\geq 0$. Since $M_1 \in \mathcal S$, it follows that
$\mproj x \cdot \mathcal T  \cdot \mmove \iret x \cdot M_1 \cdot \vec p \geq 0$ and 
$\mathfrak D \cdot \mmove \iret x \cdot M_1 \cdot \vec p
\geq 0$. The latter tells us that $\mmove \iret x \cdot M_1 \cdot \vec p$ 
is a sufficient vector for applying the inductive hypothesis 
with our right side premises.

Applying the inductive hypothesis with the right side premises
 yields some $M_2 \in {\mathcal T }$ that we can
use to continue our inequalities as follows:

\begin{tiny}
\begin{align*}
\pot {(V,x \mapsto v')} {\Gamma,\iret:\sigma } {\mmove \iret x \cdot M_1 \cdot \vec p} 
\geq & \pot {(V,x \mapsto v',\iret \mapsto v)} {\Gamma,x:\sigma,\iret:\tau } {M_2 \cdot \mmove \iret x \cdot M_1 \cdot \vec p} & IH
\\
\geq & \pot {(V,\iret \mapsto v)} {\Gamma,\iret:\tau } {\mproj {\neg x} \cdot M_2 \cdot \mmove \iret x \cdot M_1 \cdot \vec p} & \mproj {x} \cdot M_2 \cdot  \mmove \iret x \cdot M_1 \cdot \vec p  \geq 0
\end{align*}
\end{tiny}

Thus $\mproj {\neg x} \cdot M_2 \cdot \mmove \iret x \cdot M_1 \in 
\mproj {\neg x} \cdot \mathcal T \cdot \mmove \iret x \cdot \mathcal S $ is a sufficient matrix to witness this case.

\subsubsubsection{\textbf{remaining cases}}

Each of the remaining cases is a base case not needing the
inductive hypothesis, where the desired inequality is met
with equality by construction. We show an example for 
one such case as follows:

Suppose that \textit{E-Fun} was the last evaluation rule applied.

\[
\inferrule[E-Fun]{
}{
    \ejudgenp V {\efun f x e} {\vclosure V f x e}
}\]

Only one typing rule, \textit{M-Fun}, types the function
expression, so \textit{M-Fun} must have been the last typing
rule applied.

\[
\inferrule[M-Fun]{
    \tjudgecf {\Gamma, f: \tfun \tau \sigma M, x: \tau} e \sigma {\mathcal S} {\mathfrak C}
    \\
     \mproj  {\{x, \iret,\iconst \} } \cdot \mathcal S \cdot \mzero {\dom{\Gamma}} 
     \geq \mproj {\{ x, \iret,\iconst \} } \cdot M \cdot \mmove x \iarg 
    \\
    \mproj  {\neg \{x, \iret,\iconst \} } \cdot \mathcal S \cdot \mzero {\dom{\Gamma}} \geq 0
    \\
    \mathfrak C \cdot \mzero {\dom{\Gamma}} \geq 0
}{
    \tjudgecf \Gamma {\efun f x e} {\tfun \tau \sigma M}
    {\{ \mi \}} {\emptyset }
}
\]

Because functions carry no potential and the annotations are unchanged 
by the identity function, the
desired conclusion is trivially met with an equality by picking 
the identity matrix $\mi$ as the witness.

\end{proof}

\section{Experimental Data}

\Cref{table:data1,table:data2,table:data3} contain the data recorded from our experiments,
as outlined in \Cref{sec:exp}.

\begin{table}[!ht]
    \centering
    \begin{tiny}
    \begin{tabular}{|l|l|l||l|l|l|l||l|l|l|l|}
    \hline
        \multicolumn{3}{c}{params} |& \multicolumn{4}{c}{new} |& \multicolumn{4}{c}{old} 
        \\ \hline
        d & c & $\ell$ & constr secs & solve secs & total secs & constrs & constr secs & solve sec & total secs & constrs \\ \hline
        1 & 0 & 0 & 0.000042 & 0.016539 & 0.016581 & 4 & 0.000026 & 0.015233 & 0.015259 & 12  \\ \hline
        1 & 0 & 1 & 0.000099 & 0.029961 & 0.030060 & 14 & 0.000041 & 0.015706 & 0.015747 & 53 \\ \hline
        1 & 0 & 2 & 0.000170 & 0.045621 & 0.045791 & 24 & 0.000074 & 0.015939 & 0.016013 & 94  \\ \hline
        1 & 0 & 3 & 0.000284 & 0.062229 & 0.062513 & 34 & 0.000094 & 0.016423 & 0.016517 & 135 \\ \hline
        1 & 0 & 4 & 0.000335 & 0.086318 & 0.086653 & 44 & 0.000464 & 0.021997 & 0.022461 & 176\\ \hline
        1 & 0 & 5 & 0.000380 & 0.089821 & 0.090201 & 54 & 0.000146 & 0.016866 & 0.017012 & 217 \\ \hline
        1 & 1 & 0 & 0.000025 & 0.014744 & 0.014769 & 4 & 0.000014 & 0.015327 & 0.015341 & 12  \\ \hline
        1 & 1 & 1 & 0.000107 & 0.032644 & 0.032751 & 17 & 0.000081 & 0.018657 & 0.018738 & 72  \\ \hline
        1 & 1 & 2 & 0.000209 & 0.045724 & 0.045933 & 30 & 0.000131 & 0.016680 & 0.016811 & 186  \\ \hline
        1 & 1 & 3 & 0.000268 & 0.057673 & 0.057941 & 43 & 0.000238 & 0.017584 & 0.017822 & 352  \\ \hline
        1 & 1 & 4 & 0.000389 & 0.087940 & 0.088329 & 56 & 0.000644 & 0.024758 & 0.025402 & 570  \\ \hline
        1 & 1 & 5 & 0.001474 & 0.114994 & 0.116468 & 69 & 0.000856 & 0.024653 & 0.025509 & 840 \\ \hline
        1 & 2 & 0 & 0.000019 & 0.015084 & 0.015103 & 4 & 0.000017 & 0.015241 & 0.015258 & 12 \\ \hline
        1 & 2 & 1 & 0.000119 & 0.030124 & 0.030243 & 21 & 0.000088 & 0.016134 & 0.016222 & 95  \\ \hline
        1 & 2 & 2 & 0.000212 & 0.042785 & 0.042997 & 38 & 0.000219 & 0.016977 & 0.017196 & 321 \\ \hline
        1 & 2 & 3 & 0.000308 & 0.056932 & 0.057240 & 55 & 0.000611 & 0.019885 & 0.020496 & 831  \\ \hline
        1 & 2 & 4 & 0.000543 & 0.083748 & 0.084291 & 72 & 0.001632 & 0.030908 & 0.032540 & 1912 \\ \hline
        1 & 2 & 5 & 0.000629 & 0.104308 & 0.104937 & 89 & 0.005318 & 0.057943 & 0.063261 & 4132 \\ \hline
        1 & 3 & 0 & 0.000018 & 0.016538 & 0.016556 & 4 & 0.000017 & 0.016756 & 0.016773 & 12 \\ \hline
        1 & 3 & 1 & 0.000149 & 0.031353 & 0.031502 & 21 & 0.000084 & 0.015582 & 0.015666 & 115  \\ \hline
        1 & 3 & 2 & 0.000415 & 0.077578 & 0.077993 & 38 & 0.000374 & 0.022728 & 0.023102 & 491  \\ \hline
        1 & 3 & 3 & 0.000365 & 0.060048 & 0.060413 & 55 & 0.001375 & 0.029278 & 0.030653 & 1691 \\ \hline
        1 & 3 & 4 & 0.000849 & 0.144139 & 0.144988 & 72 & 0.010170 & 0.184897 & 0.195067 & 5351  \\ \hline
        1 & 3 & 5 & 0.000605 & 0.091483 & 0.092088 & 89 & 0.016646 & 0.147977 & 0.164623 & 16400 \\ \hline
        1 & 4 & 0 & 0.000017 & 0.015032 & 0.015049 & 4 & 0.000015 & 0.016025 & 0.016040 & 12 \\ \hline
        1 & 4 & 1 & 0.000216 & 0.029888 & 0.030104 & 21 & 0.000132 & 0.017826 & 0.017958 & 135 \\ \hline
        1 & 4 & 2 & 0.000361 & 0.047575 & 0.047936 & 38 & 0.000521 & 0.022471 & 0.022992 & 702  \\ \hline
        1 & 4 & 3 & 0.000438 & 0.071367 & 0.071805 & 55 & 0.004783 & 0.084018 & 0.088801 & 3048 \\ \hline
        1 & 4 & 4 & 0.000559 & 0.076119 & 0.076678 & 72 & 0.012303 & 0.111350 & 0.123653 & 12510 \\ \hline
        1 & 4 & 5 & 0.000818 & 0.087661 & 0.088479 & 89 & 0.056572 & 0.602399 & 0.658971 & 50434  \\ \hline
        1 & 5 & 0 & 0.000019 & 0.015040 & 0.015059 & 4 & 0.000015 & 0.015895 & 0.015910 & 12  \\ \hline
        1 & 5 & 1 & 0.000225 & 0.036546 & 0.036771 & 21 & 0.000101 & 0.018815 & 0.018916 & 155 \\ \hline
        1 & 5 & 2 & 0.000290 & 0.043415 & 0.043705 & 38 & 0.000833 & 0.022671 & 0.023504 & 953  \\ \hline
        1 & 5 & 3 & 0.000583 & 0.061408 & 0.061991 & 55 & 0.006523 & 0.051278 & 0.057801 & 5031  \\ \hline
        1 & 5 & 4 & 0.003645 & 0.085588 & 0.089233 & 72 & 0.025974 & 0.278086 & 0.304060 & 25502\\ \hline
        1 & 5 & 5 & 0.001311 & 0.086741 & 0.088052 & 89 & 0.142253 & 1.285613 & 1.427866 & 127944  \\ \hline
        2 & 0 & 0 & 0.000028 & 0.015328 & 0.015356 & 9 & 0.000029 & 0.015573 & 0.015602 & 20  \\ \hline
        2 & 0 & 1 & 0.000289 & 0.033261 & 0.033550 & 29 & 0.000131 & 0.019285 & 0.019416 & 129 \\ \hline
        2 & 0 & 2 & 0.000502 & 0.047067 & 0.047569 & 49 & 0.000160 & 0.017428 & 0.017588 & 238  \\ \hline
        2 & 0 & 3 & 0.000729 & 0.059772 & 0.060501 & 69 & 0.000223 & 0.017761 & 0.017984 & 346  \\ \hline
        2 & 0 & 4 & 0.001074 & 0.091373 & 0.092447 & 89 & 0.000325 & 0.022354 & 0.022679 & 454\\ \hline
        2 & 0 & 5 & 0.001716 & 0.114943 & 0.116659 & 109 & 0.000558 & 0.024100 & 0.024658 & 562\\ \hline
        2 & 1 & 0 & 0.000028 & 0.015250 & 0.015278 & 9 & 0.000017 & 0.015819 & 0.015836 & 20 \\ \hline
        2 & 1 & 1 & 0.000360 & 0.031529 & 0.031889 & 35 & 0.000136 & 0.017132 & 0.017268 & 193  \\ \hline
        2 & 1 & 2 & 0.000686 & 0.046000 & 0.046686 & 61 & 0.000500 & 0.021283 & 0.021783 & 623  \\ \hline
        2 & 1 & 3 & 0.001164 & 0.109110 & 0.110274 & 87 & 0.001144 & 0.027268 & 0.028412 & 1410  \\ \hline
        2 & 1 & 4 & 0.001434 & 0.076398 & 0.077832 & 113 & 0.003020 & 0.044966 & 0.047986 & 2660 \\ \hline
        2 & 1 & 5 & 0.001770 & 0.112141 & 0.113911 & 139 & 0.005409 & 0.059942 & 0.065351 & 4476  \\ \hline
        2 & 2 & 0 & 0.000025 & 0.015468 & 0.015493 & 9 & 0.000018 & 0.015634 & 0.015652 & 20  \\ \hline
        2 & 2 & 1 & 0.000429 & 0.031337 & 0.031766 & 44 & 0.000165 & 0.017853 & 0.018018 & 254  \\ \hline
        2 & 2 & 2 & 0.000807 & 0.046065 & 0.046872 & 79 & 0.000981 & 0.024991 & 0.025972 & 1204  \\ \hline
        2 & 2 & 3 & 0.001801 & 0.062295 & 0.064096 & 114 & 0.003758 & 0.052979 & 0.056737 & 4150  \\ \hline
        2 & 2 & 4 & 0.002516 & 0.080507 & 0.083023 & 149 & 0.013180 & 0.177627 & 0.190807 & 12238 \\ \hline
        2 & 2 & 5 & 0.002774 & 0.095361 & 0.098134 & 184 & 0.037644 & 0.463047 & 0.500691 & 32885  \\ \hline
        2 & 3 & 0 & 0.000028 & 0.017961 & 0.017989 & 9 & 0.000018 & 0.017703 & 0.017721 & 20 \\ \hline
        2 & 3 & 1 & 0.000589 & 0.034955 & 0.035544 & 44 & 0.000203 & 0.020655 & 0.020858 & 316  \\ \hline
        2 & 3 & 2 & 0.001131 & 0.052838 & 0.053969 & 79 & 0.001976 & 0.035812 & 0.037788 & 1989  \\ \hline
        2 & 3 & 3 & 0.001467 & 0.060390 & 0.061857 & 114 & 0.009366 & 0.106147 & 0.115513 & 9425  \\ \hline
        2 & 3 & 4 & 0.004764 & 0.078818 & 0.083582 & 149 & 0.047127 & 0.589632 & 0.636759 & 39081 \\ \hline
        2 & 3 & 5 & 0.002945 & 0.094591 & 0.097536 & 184 & 0.175089 & 1.930193 & 2.105282 & 150189  \\ \hline
        2 & 4 & 0 & 0.000023 & 0.015724 & 0.015747 & 9 & 0.000015 & 0.015552 & 0.015567 & 20  \\ \hline
        2 & 4 & 1 & 0.000628 & 0.030262 & 0.030890 & 44 & 0.000276 & 0.018271 & 0.018547 & 377 \\ \hline
        2 & 4 & 2 & 0.001415 & 0.053228 & 0.054643 & 79 & 0.002895 & 0.048787 & 0.051682 & 2970 \\ \hline
        2 & 4 & 3 & 0.001849 & 0.066815 & 0.068664 & 114 & 0.020787 & 0.248645 & 0.269432 & 18060 \\ \hline
        2 & 4 & 4 & 0.003051 & 0.079288 & 0.082339 & 149 & 0.121902 & 2.613878 & 2.735780 & 97380  \\ \hline
        2 & 4 & 5 & 0.003527 & 0.095953 & 0.099480 & 184 & 0.603298 & 10.648155 & 11.251453 & 490539 \\ \hline
        2 & 5 & 0 & 0.000026 & 0.016016 & 0.016042 & 9 & 0.000020 & 0.016125 & 0.016145 & 20 \\ \hline
        2 & 5 & 1 & 0.000678 & 0.033611 & 0.034289 & 44 & 0.000302 & 0.019871 & 0.020173 & 438 \\ \hline
        2 & 5 & 2 & 0.001414 & 0.045110 & 0.046524 & 79 & 0.003294 & 0.052815 & 0.056109 & 4154 \\ \hline
        2 & 5 & 3 & 0.003176 & 0.093567 & 0.096743 & 114 & 0.055972 & 0.501996 & 0.557968 & 30940 \\ \hline
        2 & 5 & 4 & 0.003303 & 0.081001 & 0.084304 & 149 & 0.240742 & 3.001745 & 3.242487 & 205845 \\ \hline
        2 & 5 & 5 & 0.006122 & 0.093704 & 0.099826 & 184 & 1.503743 & 17.240749 & 18.744492 & 1285286 \\ \hline
    \end{tabular}
    \end{tiny}
    \caption{experimental data for $d=1,2$}
    \label{table:data1}
\end{table}

\begin{table}[!ht]
    \centering
    \begin{tiny}
    \begin{tabular}{|l|l|l||l|l|l|l||l|l|l|l|}
    \hline
        \multicolumn{3}{c}{params} |& \multicolumn{4}{c}{new} |& \multicolumn{4}{c}{old} 
        \\ \hline
        d & c & $\ell$ & constr secs & solve secs & total secs & constrs & constr secs & solve sec & total secs & constrs \\ \hline
        3 & 0 & 0 & 0.000041 & 0.015634 & 0.015675 & 16 & 0.000026 & 0.015668 & 0.015694 & 28 \\ \hline
        3 & 0 & 1 & 0.001046 & 0.032759 & 0.033805 & 43 & 0.000186 & 0.018984 & 0.019170 & 236  \\ \hline
        3 & 0 & 2 & 0.001450 & 0.049265 & 0.050715 & 70 & 0.000379 & 0.020762 & 0.021141 & 443  \\ \hline
        3 & 0 & 3 & 0.001887 & 0.062912 & 0.064799 & 97 & 0.000490 & 0.021987 & 0.022477 & 649  \\ \hline
        3 & 0 & 4 & 0.002996 & 0.092591 & 0.095587 & 124 & 0.000699 & 0.026594 & 0.027293 & 856  \\ \hline
        3 & 0 & 5 & 0.003764 & 0.113372 & 0.117136 & 151 & 0.001065 & 0.025972 & 0.027037 & 1063  \\ \hline
        3 & 1 & 0 & 0.000036 & 0.015988 & 0.016024 & 16 & 0.000020 & 0.016161 & 0.016181 & 28  \\ \hline
        3 & 1 & 1 & 0.001061 & 0.032295 & 0.033356 & 59 & 0.000331 & 0.018910 & 0.019241 & 357  \\ \hline
        3 & 1 & 2 & 0.002349 & 0.049491 & 0.051840 & 102 & 0.001315 & 0.031187 & 0.032502 & 1395 \\ \hline
        3 & 1 & 3 & 0.003493 & 0.063860 & 0.067353 & 145 & 0.003248 & 0.059259 & 0.062507 & 3758 \\ \hline
        3 & 1 & 4 & 0.008033 & 0.098151 & 0.106184 & 188 & 0.009770 & 0.126782 & 0.136552 & 8266 \\ \hline
        3 & 1 & 5 & 0.006617 & 0.100908 & 0.107525 & 231 & 0.020647 & 0.202641 & 0.223288 & 15948  \\ \hline
        3 & 2 & 0 & 0.000034 & 0.016045 & 0.016079 & 16 & 0.000017 & 0.015359 & 0.015376 & 28  \\ \hline
        3 & 2 & 1 & 0.001706 & 0.033962 & 0.035668 & 75 & 0.000358 & 0.020168 & 0.020526 & 479  \\ \hline
        3 & 2 & 2 & 0.003464 & 0.048022 & 0.051486 & 134 & 0.003168 & 0.046793 & 0.049961 & 2913  \\ \hline
        3 & 2 & 3 & 0.005143 & 0.069853 & 0.074996 & 193 & 0.014636 & 0.182602 & 0.197238 & 12759  \\ \hline
        3 & 2 & 4 & 0.008598 & 0.083686 & 0.092284 & 252 & 0.053517 & 1.009344 & 1.062861 & 46559 \\ \hline
        3 & 2 & 5 & 0.010271 & 0.100567 & 0.110838 & 311 & 0.199687 & 3.691999 & 3.891686 & 151105 \\ \hline
        3 & 3 & 0 & 0.000045 & 0.023406 & 0.023451 & 16 & 0.000021 & 0.021076 & 0.021097 & 28 \\ \hline
        3 & 3 & 1 & 0.001888 & 0.036658 & 0.038546 & 75 & 0.000878 & 0.022383 & 0.023261 & 601  \\ \hline
        3 & 3 & 2 & 0.004736 & 0.050563 & 0.055299 & 134 & 0.004270 & 0.074808 & 0.079078 & 5013 \\ \hline
        3 & 3 & 3 & 0.006328 & 0.068159 & 0.074487 & 193 & 0.037331 & 0.611558 & 0.648889 & 30823  \\ \hline
        3 & 3 & 4 & 0.018882 & 0.084396 & 0.103278 & 252 & 0.187965 & 5.052531 & 5.240496 & 160364 \\ \hline
        3 & 3 & 5 & 0.022301 & 0.101693 & 0.123994 & 311 & 0.931370 & 18.897349 & 19.828719 & 749203  \\ \hline
        3 & 4 & 0 & 0.000039 & 0.019391 & 0.019430 & 16 & 0.000026 & 0.017685 & 0.017711 & 28 \\ \hline
        3 & 4 & 1 & 0.002620 & 0.038494 & 0.041114 & 75 & 0.000599 & 0.025804 & 0.026403 & 721 \\ \hline
        3 & 4 & 2 & 0.008145 & 0.050352 & 0.058497 & 134 & 0.007200 & 0.110314 & 0.117514 & 7654 \\ \hline
        3 & 4 & 3 & 0.007776 & 0.068316 & 0.076092 & 193 & 0.055860 & 1.653075 & 1.708935 & 60979 \\ \hline
        3 & 4 & 4 & 0.008763 & 0.083909 & 0.092672 & 252 & 0.440956 & 14.162719 & 14.603675 & 414116 \\ \hline
        3 & 4 & 5 & 0.014517 & 0.105401 & 0.119918 & 311 & 3.195450 & 171.840712 & 175.036162 & 2537242 \\ \hline
        3 & 5 & 0 & 0.000036 & 0.015974 & 0.016010 & 16 & 0.000017 & 0.015511 & 0.015528 & 28 \\ \hline
        3 & 5 & 1 & 0.002837 & 0.033307 & 0.036144 & 75 & 0.000654 & 0.022925 & 0.023579 & 842 \\ \hline
        3 & 5 & 2 & 0.006123 & 0.050831 & 0.056954 & 134 & 0.019261 & 0.154263 & 0.173524 & 10872 \\ \hline
        3 & 5 & 3 & 0.011121 & 0.067044 & 0.078165 & 193 & 0.121545 & 2.997829 & 3.119374 & 106554  \\ \hline
        3 & 5 & 4 & 0.011861 & 0.084976 & 0.096837 & 252 & 1.122581 & 53.309085 & 54.431666 & 893955 \\ \hline
        3 & 5 & 5 & 0.014272 & 0.097983 & 0.112255 & 311 & 9.312985 & 1179.742946 & 1189.055931 & 6785087  \\ \hline
        4 & 0 & 0 & 0.000069 & 0.019897 & 0.019966 & 25 & 0.000044 & 0.016803 & 0.016847 & 36  \\ \hline
        4 & 0 & 1 & 0.001729 & 0.033189 & 0.034918 & 59 & 0.000337 & 0.019493 & 0.019830 & 365  \\ \hline
        4 & 0 & 2 & 0.003536 & 0.056949 & 0.060485 & 93 & 0.000665 & 0.022510 & 0.023175 & 694 \\ \hline
        4 & 0 & 3 & 0.005667 & 0.076041 & 0.081708 & 127 & 0.000871 & 0.027398 & 0.028269 & 1024  \\ \hline
        4 & 0 & 4 & 0.015344 & 0.128815 & 0.144159 & 161 & 0.001213 & 0.032686 & 0.033899 & 1353 \\ \hline
        4 & 0 & 5 & 0.010312 & 0.107239 & 0.117551 & 195 & 0.001545 & 0.032817 & 0.034362 & 1685  \\ \hline
        4 & 1 & 0 & 0.000059 & 0.018750 & 0.018809 & 25 & 0.000025 & 0.018959 & 0.018984 & 36  \\ \hline
        4 & 1 & 1 & 0.005002 & 0.047072 & 0.052074 & 84 & 0.000478 & 0.059413 & 0.059891 & 568 \\ \hline
        4 & 1 & 2 & 0.009083 & 0.055694 & 0.064777 & 143 & 0.002287 & 0.049415 & 0.051702 & 2606  \\ \hline
        4 & 1 & 3 & 0.013858 & 0.068480 & 0.082338 & 202 & 0.009116 & 0.121333 & 0.130449 & 8179 \\ \hline
        4 & 1 & 4 & 0.019216 & 0.092336 & 0.111552 & 261 & 0.025725 & 0.343922 & 0.369647 & 20762  \\ \hline
        4 & 1 & 5 & 0.029570 & 0.150571 & 0.180141 & 320 & 0.090916 & 1.158684 & 1.249600 & 45671  \\ \hline
        4 & 2 & 0 & 0.000059 & 0.026470 & 0.026529 & 25 & 0.000033 & 0.024135 & 0.024168 & 36 \\ \hline
        4 & 2 & 1 & 0.006506 & 0.036330 & 0.042836 & 109 & 0.000667 & 0.024205 & 0.024872 & 771  \\ \hline
        4 & 2 & 2 & 0.013719 & 0.053968 & 0.067687 & 193 & 0.006065 & 0.090156 & 0.096221 & 5731  \\ \hline
        4 & 2 & 3 & 0.027344 & 0.080500 & 0.107844 & 277 & 0.029282 & 0.640377 & 0.669659 & 30523 \\ \hline
        4 & 2 & 4 & 0.028854 & 0.091598 & 0.120452 & 361 & 0.134407 & 4.517918 & 4.652325 & 133754  \\ \hline
        4 & 2 & 5 & 0.037787 & 0.112178 & 0.149965 & 445 & 0.543178 & 24.347409 & 24.890587 & 512519  \\ \hline
        4 & 3 & 0 & 0.000053 & 0.017022 & 0.017075 & 25 & 0.000025 & 0.016221 & 0.016246 & 36  \\ \hline
        4 & 3 & 1 & 0.009729 & 0.045220 & 0.054949 & 109 & 0.001295 & 0.030739 & 0.032034 & 972  \\ \hline
        4 & 3 & 2 & 0.018507 & 0.054518 & 0.073025 & 193 & 0.009415 & 0.160940 & 0.170355 & 10074  \\ \hline
        4 & 3 & 3 & 0.034511 & 0.075975 & 0.110486 & 277 & 0.093669 & 2.533223 & 2.626892 & 76585  \\ \hline
        4 & 3 & 4 & 0.042081 & 0.087974 & 0.130055 & 361 & 0.612588 & 21.951089 & 22.563677 & 483261 \\ \hline
        4 & 3 & 5 & 0.051126 & 0.114073 & 0.165199 & 445 & 4.248675 & 448.071415 & 452.320090 & 2680916 \\ \hline
        4 & 4 & 0 & 0.000049 & 0.017285 & 0.017334 & 25 & 0.000021 & 0.016384 & 0.016405 & 36 \\ \hline
        4 & 4 & 1 & 0.011328 & 0.035795 & 0.047123 & 109 & 0.001075 & 0.028545 & 0.029620 & 1173  \\ \hline
        4 & 4 & 2 & 0.034361 & 0.053880 & 0.088241 & 193 & 0.026270 & 0.292175 & 0.318445 & 15596 \\ \hline
        4 & 4 & 3 & 0.038971 & 0.070234 & 0.109205 & 277 & 0.146051 & 7.262634 & 7.408685 & 154511  \\ \hline
        4 & 4 & 4 & 0.044262 & 0.091807 & 0.136069 & 361 & 1.697200 & 167.918893 & 169.616093 & 1276812 \\ \hline
        4 & 4 & 5 & 0.056506 & 0.141851 & 0.198357 & 445 & 12.724109 & 4700.986052 & 4713.710161 & 9301888 \\ \hline
        4 & 5 & 0 & 0.000057 & 0.017507 & 0.017564 & 25 & 0.000025 & 0.016347 & 0.016372 & 36  \\ \hline
        4 & 5 & 1 & 0.012953 & 0.034416 & 0.047369 & 109 & 0.001228 & 0.030184 & 0.031412 & 1373  \\ \hline
        4 & 5 & 2 & 0.028344 & 0.053067 & 0.081411 & 193 & 0.019104 & 0.401896 & 0.421000 & 22350 \\ \hline
        4 & 5 & 3 & 0.044963 & 0.080593 & 0.125556 & 277 & 0.302339 & 17.910969 & 18.213308 & 273266 \\ \hline
        4 & 5 & 4 & 0.065743 & 0.087002 & 0.152745 & 361 & 3.626365 & 588.597753 & 592.224118 & 2793847  \\ \hline
        4 & 5 & 5 & 0.069873 & 0.114544 & 0.184417 & 445 & 36.485721 & 70570.439041 & 70606.924762 & 25225105 \\ \hline
    \end{tabular}
    \end{tiny}
    \caption{experimental data for $d=3,4$}
    \label{table:data2}
\end{table}

\begin{table}[!ht]
    \centering
    \begin{tiny}
    \begin{tabular}{|l|l|l||l|l|l|l||l|l|l|l|}
    \hline
        \multicolumn{3}{c}{params} |& \multicolumn{4}{c}{new} |& \multicolumn{4}{c}{old} 
        \\ \hline
        d & c & $\ell$ & constr secs & solve secs & total secs & constrs & constr secs & solve sec & total secs & constrs \\ \hline
        5 & 0 & 0 & 0.000077 & 0.017486 & 0.017563 & 36 & 0.000037 & 0.016726 & 0.016763 & 44  \\ \hline
        5 & 0 & 1 & 0.003436 & 0.039762 & 0.043198 & 83 & 0.000409 & 0.027827 & 0.028236 & 532  \\ \hline
        5 & 0 & 2 & 0.006980 & 0.061981 & 0.068961 & 130 & 0.000961 & 0.032137 & 0.033098 & 1020  \\ \hline
        5 & 0 & 3 & 0.024448 & 0.137288 & 0.161736 & 177 & 0.001332 & 0.039969 & 0.041301 & 1507 \\ \hline
        5 & 0 & 4 & 0.013967 & 0.097683 & 0.111650 & 224 & 0.001894 & 0.037573 & 0.039467 & 1995 \\ \hline
        5 & 0 & 5 & 0.015675 & 0.114982 & 0.130657 & 271 & 0.002366 & 0.041321 & 0.043687 & 2482 \\ \hline
        5 & 1 & 0 & 0.000066 & 0.017811 & 0.017877 & 36 & 0.000029 & 0.016461 & 0.016490 & 44  \\ \hline
        5 & 1 & 1 & 0.009268 & 0.036315 & 0.045583 & 119 & 0.000722 & 0.025874 & 0.026596 & 826  \\ \hline
        5 & 1 & 2 & 0.018078 & 0.056897 & 0.074975 & 202 & 0.004226 & 0.076642 & 0.080868 & 4345 \\ \hline
        5 & 1 & 3 & 0.029984 & 0.076846 & 0.106830 & 285 & 0.020346 & 0.285027 & 0.305373 & 15644  \\ \hline
        5 & 1 & 4 & 0.036824 & 0.102572 & 0.139396 & 368 & 0.058999 & 1.694275 & 1.753274 & 45265  \\ \hline
        5 & 1 & 5 & 0.054419 & 0.113217 & 0.167636 & 451 & 0.135150 & 5.011172 & 5.146322 & 112556  \\ \hline
        5 & 2 & 0 & 0.000076 & 0.017852 & 0.017928 & 36 & 0.000027 & 0.016072 & 0.016099 & 44  \\ \hline
        5 & 2 & 1 & 0.016984 & 0.044732 & 0.061716 & 155 & 0.000843 & 0.032951 & 0.033794 & 1128 \\ \hline
        5 & 2 & 2 & 0.029933 & 0.058477 & 0.088410 & 274 & 0.010980 & 0.171963 & 0.182943 & 9913 \\ \hline
        5 & 2 & 3 & 0.040894 & 0.083943 & 0.124837 & 393 & 0.078169 & 2.127854 & 2.206023 & 62382 \\ \hline
        5 & 2 & 4 & 0.053890 & 0.095180 & 0.149070 & 512 & 0.379000 & 16.143426 & 16.522426 & 320332 \\ \hline
        5 & 2 & 5 & 0.080353 & 0.129780 & 0.210133 & 631 & 2.058113 & 213.112385 & 215.170498 & 1421648 \\ \hline
        5 & 3 & 0 & 0.000063 & 0.018092 & 0.018155 & 36 & 0.000025 & 0.018186 & 0.018211 & 44  \\ \hline
        5 & 3 & 1 & 0.019366 & 0.038255 & 0.057621 & 155 & 0.001215 & 0.033221 & 0.034436 & 1432  \\ \hline
        5 & 3 & 2 & 0.038226 & 0.058702 & 0.096928 & 274 & 0.017728 & 0.349406 & 0.367134 & 17703  \\ \hline
        5 & 3 & 3 & 0.055996 & 0.082414 & 0.138410 & 393 & 0.189165 & 6.806947 & 6.996112 & 160536  \\ \hline
        5 & 3 & 4 & 0.081476 & 0.104385 & 0.185861 & 512 & 1.654466 & 225.653820 & 227.308286 & 1195445 \\ \hline
        5 & 3 & 5 & 0.109081 & 0.249445 & 0.358526 & 631 & 10.321564 & 6194.622692 & 6204.944256 & 7718071  \\ \hline
        5 & 4 & 0 & 0.000143 & 0.017999 & 0.018142 & 36 & 0.000036 & 0.016671 & 0.016707 & 44 \\ \hline
        5 & 4 & 1 & 0.027136 & 0.039636 & 0.066772 & 155 & 0.001423 & 0.034885 & 0.036308 & 1727 \\ \hline
        5 & 4 & 2 & 0.054744 & 0.057714 & 0.112458 & 274 & 0.034404 & 0.634923 & 0.669327 & 27643 \\ \hline
        5 & 4 & 3 & 0.073931 & 0.094613 & 0.168544 & 393 & 0.645488 & 36.678300 & 37.323788 & 328136  \\ \hline
        5 & 4 & 4 & 0.171072 & 0.199687 & 0.370759 & 512 & 5.518761 & 1182.287350 & 1187.806111 & 3208306  \\ \hline
        5 & 4 & 5 & 0.131733 & 0.186035 & 0.317768 & 631 & 35.674655 & 154367.904023 & 154403.578678 & 27243494 \\ \hline
        5 & 5 & 0 & 0.000143 & 0.018394 & 0.018537 & 36 & 0.000029 & 0.016771 & 0.016800 & 44 \\ \hline
        5 & 5 & 1 & 0.028720 & 0.038908 & 0.067628 & 155 & 0.001854 & 0.039140 & 0.040994 & 2029\\ \hline
        5 & 5 & 2 & 0.066871 & 0.056814 & 0.123685 & 274 & 0.042691 & 1.075947 & 1.118638 & 39857\\ \hline
        5 & 5 & 3 & 0.088594 & 0.132821 & 0.221415 & 393 & 0.581227 & 84.127568 & 84.708795 & 585037  \\ \hline
        5 & 5 & 4 & 0.435554 & 0.603455 & 1.039009 & 512 & 36.992054 & 5770.545060 & 5807.537114 & 7086341  \\ \hline
        5 & 5 &5 & 0.155609 & 0.146074 & 0.301683 & 631 & timeout & timeout & timeout & timeout \\
        \hline 
        6 & 0 & 0 & 0.000085 & 0.019476 & 0.019561 & 49 & 0.000030 & 0.016957 & 0.016987 & 52  \\ \hline
        6 & 0 & 1 & 0.006856 & 0.053433 & 0.060289 & 121 & 0.000972 & 0.032932 & 0.033904 & 718 \\ \hline
        6 & 0 & 2 & 0.010274 & 0.062646 & 0.072920 & 193 & 0.001603 & 0.034992 & 0.036595 & 1385  \\ \hline
        6 & 0 & 3 & 0.015783 & 0.083983 & 0.099766 & 265 & 0.001778 & 0.044023 & 0.045801 & 2052  \\ \hline
        6 & 0 & 4 & 0.022587 & 0.103950 & 0.126537 & 337 & 0.002568 & 0.049037 & 0.051605 & 2719  \\ \hline
        6 & 0 & 5 & 0.023988 & 0.124171 & 0.148159 & 409 & 0.003221 & 0.053733 & 0.056954 & 3386  \\ \hline
        6 & 1 & 0 & 0.000084 & 0.019479 & 0.019563 & 49 & 0.000030 & 0.018554 & 0.018584 & 52  \\ \hline
        6 & 1 & 1 & 0.010019 & 0.043066 & 0.053085 & 163 & 0.001360 & 0.032638 & 0.033998 & 1130  \\ \hline
        6 & 1 & 2 & 0.018473 & 0.059746 & 0.078219 & 277 & 0.006969 & 0.124090 & 0.131059 & 6710  \\ \hline
        6 & 1 & 3 & 0.031780 & 0.121696 & 0.153476 & 391 & 0.052798 & 0.826922 & 0.879720 & 27306  \\ \hline
        6 & 1 & 4 & 0.047615 & 0.106880 & 0.154495 & 505 & 0.150234 & 3.723004 & 3.873238 & 89024  \\ \hline
        6 & 1 & 5 & 0.042664 & 0.126581 & 0.169245 & 619 & 0.433976 & 9.306156 & 9.740132 & 247862  \\ \hline
        6 & 2 & 0 & 0.000089 & 0.019588 & 0.019677 & 49 & 0.000030 & 0.018670 & 0.018700 & 52  \\ \hline
        6 & 2 & 1 & 0.012102 & 0.043387 & 0.055489 & 212 & 0.002004 & 0.037488 & 0.039492 & 1550  \\ \hline
        6 & 2 & 2 & 0.027652 & 0.062923 & 0.090575 & 375 & 0.017866 & 0.324983 & 0.342849 & 15722  \\ \hline
        6 & 2 & 3 & 0.033779 & 0.084825 & 0.118604 & 538 & 0.133347 & 5.242922 & 5.376269 & 114325  \\ \hline
        6 & 2 & 4 & 0.056717 & 0.108987 & 0.165704 & 701 & 0.908094 & 67.925591 & 68.833685 & 674646  \\ \hline
        6 & 2 & 5 & 0.064283 & 0.204992 & 0.269275 & 864 & 5.001324 & 1296.569909 & 1301.571233 & 3412730 \\ \hline
        6 & 3 & 0 & 0.000668 & 0.019511 & 0.020179 & 49 & 0.000032 & 0.016984 & 0.017016 & 52  \\ \hline
        6 & 3 & 1 & 0.017178 & 0.042157 & 0.059335 & 212 & 0.001928 & 0.045486 & 0.047414 & 1969  \\ \hline
        6 & 3 & 2 & 0.028368 & 0.063418 & 0.091786 & 375 & 0.027076 & 0.725203 & 0.752279 & 28390  \\ \hline
        6 & 3 & 3 & 0.070527 & 0.114887 & 0.185414 & 538 & 0.438994 & 21.245849 & 21.684843 & 299542  \\ \hline
        6 & 3 & 4 & 0.058867 & 0.114669 & 0.173536 & 701 & 3.070279 & 906.614686 & 909.684965 & 2576633  \\ \hline
        6 & 3 & 5 & 0.306451 & 0.675611 & 0.982062 & 864 & 71.489372 & 79469.399906 & 79540.889278 & 19035507 \\ \hline
        6 & 4 & 0 & 0.000104 & 0.023031 & 0.023135 & 49 & 0.000032 & 0.018054 & 0.018086 & 52 \\ \hline
        6 & 4 & 1 & 0.018192 & 0.041655 & 0.059847 & 212 & 0.002206 & 0.046417 & 0.048623 & 2384  \\ \hline
        6 & 4 & 2 & 0.035369 & 0.063444 & 0.098813 & 375 & 0.040629 & 1.421071 & 1.461700 & 44632 \\ \hline
        6 & 4 & 3 & 0.054724 & 0.095273 & 0.149997 & 538 & 0.744372 & 73.169898 & 73.914270 & 618038  \\ \hline
        6 & 4 & 4 & 0.098837 & 0.155009 & 0.253846 & 701 & 8.922452 & 6845.493612 & 6854.416064 & 6994063 \\ \hline
        6& 4 & 5 & 0.103610 & 0.150520 & 0.254130 & 864 &  timeout & timeout & timeout & timeout \\ \hline
        6 & 5 & 0 & 0.000111 & 0.026519 & 0.026630 & 49 & 0.000045 & 0.018480 & 0.018525 & 52  \\ \hline
        6 & 5 & 1 & 0.023738 & 0.042042 & 0.065780 & 212 & 0.004236 & 0.050720 & 0.054956 & 2804  \\ \hline
        6 & 5 & 2 & 0.041496 & 0.065930 & 0.107426 & 375 & 0.063495 & 2.458531 & 2.522026 & 64600  \\ \hline
        6 & 5 & 3 & 0.063459 & 0.091847 & 0.155306 & 538 & 1.080118 & 212.582821 & 213.662939 & 1108168  \\ \hline
        6&5&4&0.083420&0.110104&0.193524&701
        & timeout & timeout & timeout & timeout \\ \hline
        6&5&5&0.110426&0.139084&0.249510&864
        & timeout & timeout & timeout & timeout \\\hline
    \end{tabular}
    \end{tiny}
    \caption{experimental data for $d=5,6$}
    \label{table:data3}
\end{table}

\end{document}